  \let\Cref\crtCref
  \let\cref\crtcref
\crefname{criterion}{Rule}{Rules}
\definecolor{DarkGreen}{rgb}{0.1,0.5,0.1}
\definecolor{DarkRed}{rgb}{0.5,0.1,0.1}
\definecolor{DarkBlue}{rgb}{0.1,0.1,0.5}
\definecolor{Gray}{rgb}{0.2,0.2,0.2}
\newcommand{\cmark}{\ding{51}\hfill}
\newcommand{\xmark}{\ding{55}\hfill}
\def\1{\bm{1}}
\DeclareMathAlphabet{\mathsfit}{\encodingdefault}{\sfdefault}{m}{sl}
\SetMathAlphabet{\mathsfit}{bold}{\encodingdefault}{\sfdefault}{bx}{n}
\DeclareMathOperator*{\argmax}{arg\,max}
\newcommand{\PNEBound}{\text{NE}}
\newcommand{\NOPNEBound}{\text{NoNE}}
\newcommand{\uniform}{\text{Uniform}}
\newcommand{\bfmu}{\boldsymbol{\mu}}
\newcommand{\bfX}{\boldsymbol{X}}
\newcommand{\bfW}{\boldsymbol{W}}
\newcommand{\bfpi}{\boldsymbol{\pi}}
\newcommand{\calM}{\mathcal{M}}
\newcommand{\calC}{\mathcal{C}}
\newcommand{\calD}{\mathcal{D}}
\newcommand{\calG}{\mathcal{G}}
\newcommand{\calS}{\mathcal{S}}
\newcommand{\calJ}{\mathcal{J}}
\newcommand{\calU}{\mathcal{U}}
\newcommand{\calZ}{\mathcal{Z}}
\newcommand{\calK}{\mathcal{K}}
\newcommand{\bbE}{\mathbb{E}}
\newcommand{\bbP}{\mathbb{P}}
\newcommand{\bbI}{\mathbb{I}}
\newcommand{\sort}{\mathrm{Sort}}
\newcommand{\discontent}{D}
\newcommand{\watchful}{C^-}
\newcommand{\hopeful}{C^+}
\newcommand{\content}{C}
\newcommand{\learn}{\text{learn}}
\newcommand{\init}{\text{init}}
\DeclareMathOperator{\regret}{Reg}
\DeclareMathOperator{\noneq}{NonPNE}
\theoremstyle{plain}
\newtheorem{theorem}{Theorem}[section]
\newtheorem{lemma}[theorem]{Lemma}
\theoremstyle{definition}
\newtheorem{definition}{Definition}[section]
\newtheorem{example}{Example}[section]
\newtheorem{criterion}{Rule}[section]
\newtheorem{observation}{Observation}[section]
\theoremstyle{remark}
\newtheorem{remark}{Remark}[section]
\newenvironment{enum}
{\begin{enumerate}[noitemsep,topsep=0pt,parsep=0pt,partopsep=0pt]}
{\end{enumerate}}
\def\maketag@@@#1{\hbox{\m@th\normalfont\normalsize#1}}
\title{PPA-Game: Characterizing and Learning Competitive Dynamics Among Online Content Creators}
\newcommand*\samethanks[1][\value{footnote}]{\footnotemark[#1]}
\author{Renzhe Xu\thanks{MoE Key Laboratory of Interdisciplinary Research of Computation and Economics, Institute for Theoretical Computer Science, Shanghai University of Finance and Economics, Shanghai, China. Email: \texttt{xurenzhe@sufe.edu.cn}.}
\and
Haotian Wang\thanks{College of Computer Science and Technology, National University of Defense Technology, Changsha, China. Email: \texttt{accwht@hotmail.com}.}
\and
Xingxuan Zhang\thanks{Department of Computer Science and Technology, Tsinghua University, Beijing, China. Emails: \texttt{xingxuanzhang@hotmail.com}, \texttt{cuip@tsinghua.edu.cn}.}
\and
Bo Li\thanks{School of Economics and Management, Tsinghua University, Beijing, China. Email: \texttt{libo@sem.tsinghua.edu.cn}.}
\and
Peng Cui\samethanks[3]
}
\date{}
\begin{document}

\maketitle

\begin{abstract}
    In this paper, we present the Proportional Payoff Allocation Game (PPA-Game), which characterizes situations where agents compete for divisible resources. In the PPA-game, agents select from available resources, and their payoffs are proportionately determined based on heterogeneous weights attributed to them. Such dynamics simulate content creators on online recommender systems like YouTube and TikTok, who compete for finite consumer attention, with content exposure reliant on inherent and distinct quality. We first conduct a game-theoretical analysis of the PPA-Game. While the PPA-Game does not always guarantee the existence of a pure Nash equilibrium (PNE), we identify prevalent scenarios ensuring its existence. Simulated experiments further prove that the cases where PNE does not exist rarely happen. Beyond analyzing static payoffs, we further discuss the agents' online learning about resource payoffs by integrating a multi-player multi-armed bandit framework. We propose an online algorithm facilitating each agent's maximization of cumulative payoffs over $T$ rounds. Theoretically, we establish that the regret of any agent is bounded by $O(\log^{1 + \eta} T)$ for any $\eta > 0$. Empirical results further validate the effectiveness of our online learning approach.
\end{abstract}
\section{Introduction}
Online recommender systems, exemplified by platforms such as YouTube and TikTok, are now central to our daily digital experiences. From an economic viewpoint, these systems represent intricate ecosystems composed of multiple stakeholders: consumers, content creators, and the platform itself ~\citep{boutilier2023modeling,ben2020content}. Although a significant portion of current research delves into consumer preferences and individualized recommendations, the role and behavior of content creators are also crucial ~\citep{zhan2021towards,boutilier2023modeling}. They not only shape the pool of recommended content but also influence the system's stability, fairness, and long-term utility ~\citep{ben2018game,patro2020fairrec,boutilier2023modeling,mladenov2020optimizing}. A major challenge when capturing the dynamics of content creators is characterizing the competitive exposure among them, given the finite demand for different content. Consequently, understanding the competitive dynamics among content creators becomes paramount.

Game theory offers a useful framework to characterize competitive behaviors among multiple players ~\citep{nisan2007algorithmic}. While numerous studies have employed game-theoretical analysis on content creators within online platforms, many ~\citep{yao2024user,yao2024unveiling,yao2023rethinking,ben2018game,hron2022modeling,jagadeesan2022supply,ben2019recommendation} primarily explore Nash equilibrium properties, largely overlooking the dynamics of content creators and their convergence to the equilibrium. Some research attempts to bridge this gap, considering different allocation models for creators producing analogous content ~\citep{ben2019convergence,ben2020content,xu2023competing,liu2020competing,liu2021bandit}. For instance, \citet{ben2019convergence,ben2020content,liu2020competing,liu2021bandit} assumed that only top-tier content creators receive exposure, and \citet{xu2023competing} proposed an averaged allocation rule. While these dynamics offer valuable insights, they also have limitations due to the gap between their modeling assumptions and real-world contexts. For instance, top-winner dynamics~\citep{ben2019convergence,ben2020content,liu2020competing,liu2021bandit} may conflict with diversity principles in online recommendation platforms, while an averaged allocation rule~\citep{xu2023competing} could overlook the nuanced specializations that creators bring to diverse content domains.

To bridge the gap between theoretical models and the complex dynamics observed among online content creators, we introduce the Proportional Payoff Allocation Game (PPA-Game) as a novel model to capture the competitive dynamics of online content creators. Adapting from ~\citep{ben2019convergence,ben2020content,prasad2023content}, we assume a game with $N$ players (content creators) and $K$ resources (content topics). In this context, players need to decide which resources to select. Each resource has an associated total payoff (indicative of the exposure for each content topic), and individual players have specific weights (or quality metrics) assigned to each resource. When several players choose the same resource, its payoff is distributed proportionally according to player weights—a mechanism resonating with other game-theoretical frameworks ~\citep{caragiannis2016welfare,kelly1997charging,hoefer2017proportional,anbarci2023proportional}.

We first explore the existence of the pure Nash equilibria (PNE) within the game, revealing that PNE might not exist in particular scenarios. We further identify several prevalent conditions where PNE's existence is guaranteed. These conditions, with broad applicability in real-world scenarios, guarantee PNE existence if any of the following holds: (1) The payoffs of the resources exhibit a long-tailed distribution, meaning the ratio between the payoffs of any two resources either exceeds a specific constant $c_0 > 1$ or is less than its reciprocal $1/c_0$.
(2) All players maintain the same weights across different resources. (3) Every resource has identical payoffs. To further investigate the probability of PNE's existence, we conducted simulation experiments. Our results indicate that such situations where PNE is absent rarely occur.

% We uniformly sampled values for $N$ and $K$ from the range $[10, 100]$ and selected resource payoffs and player weights from the interval $[0, 1]$. In 5,000,000 trials, there was only one instance where PNE did not exist. Thus, it's evident that in the vast majority of cases, PNE exists.

Leveraging the PPA-Game, we introduce a Multi-player Multi-Armed Bandit (MPMAB) framework ~\citep{rosenski2016multi,besson2018multi,boursier2019sic,boursier2020selfish,shi2020decentralized,shi2021heterogeneous,huang2022towards,lugosi2022multiplayer} to simulate environments where content creators lack prior information about content demand and their quality in various topics. In this context, the `players' represent content creators, while `arms' denote content topics. This model entails players making simultaneous decisions over multiple rounds, with payoffs determined by the PPA-Game mechanism. Inspired by \citet{xu2023competing,boursier2020selfish}, we assume that selfish players, motivated by their own interests, aim to maximize their own cumulative payoffs over $T$ rounds. Within this MPMAB structure, inspired by ~\citep{bistritz2018distributed,bistritz2021game,pradelski2012learning}, we delineate a novel online learning strategy for players and prove that each player's regret is bounded by $O(\log^{1 + \eta} T)$ for any $\eta > 0$. Synthetic experiments validate the effectiveness of our approach.

In summary, our key contributions include:

\begin{itemize}
    \item We introduce the novel Proportional Payoff Allocation Game (PPA-Game) to characterize competitive dynamics among online content creators.
    \item We comprehensively examine the PNE's existence within the PPA-Game, identifying broad scenarios where the existence of PNE is prevalent, as proved by simulation results.
    \item Building on the PPA-Game, we introduce a new Multi-player Multi-Armed Bandit (MPMAB) paradigm for simulating real-time decision-making by content creators.
    \item We propose a novel online algorithm, with bounded regret, for modeling player competition. Empirical results demonstrate its efficacy.
\end{itemize}

The remainder of this paper is structured as follows. \cref{sect:related-works} provides a systematic review of related work. \cref{sect:game-formulation} introduces the PPA-Game, examines its theoretical properties, and characterizes the conditions for PNE existence. In \cref{sect:online}, we present the MPMAB framework, propose an online learning algorithm, and establish its theoretical guarantees. \cref{sect:experiments} reports synthetic experiment results, and \cref{sect:conclusion} summarizes our contributions.
\section{Related Works} \label{sect:related-works}
\paragraph{Game-theoretical Analysis for Content Creators} While a significant portion of existing research concentrates on consumer preferences and individualized recommendations \citep{boutilier2023modeling,zhan2021towards}, a few delve into the competitive dynamics among content creators, framing this competition through game-theoretical models \citep{yao2024user,yao2024unveiling,boutilier2023modeling,ben2018game,hron2022modeling,jagadeesan2022supply,ben2019recommendation,ben2019convergence,ben2020content,xu2023competing,liu2020competing,liu2021bandit,yao2023rethinking}. For instance, works by \citet{hron2022modeling,jagadeesan2022supply} model content creator strategies as vectors within a designated space, offering insights into various Nash equilibrium properties. Meanwhile, \citet{ben2019recommendation} conceptualized this competition using facility location games \citep{anderson1992discrete}, though they did not incorporate the learning dynamics inherent to content creators.

A subset of these studies attempts to capture the dynamics among content creators \citep{ben2019convergence,ben2020content,liu2020competing,liu2021bandit,xu2023competing}. Specifically, studies like \citet{ben2019convergence,ben2020content,liu2020competing,liu2021bandit} assume that only top-quality content creators receive exposure for each content topic. This perspective seems overly simplistic, especially considering the platforms' need to equitably distribute exposure to retain a diverse array of creators. On the other hand, \citet{xu2023competing} adopted an averaging allocation rule, suggesting that creators share content topics' exposure equally on average. However, this fails to account for the real-world scenarios where creators bring distinct expertise and excel in producing varied content types.

\paragraph{Proportional Allocation}
Proportional allocation serves as a frequently adopted mechanism in game theory \citep{kelly1997charging,caragiannis2016welfare,syrgkanis2013composable,johari2007price,anbarci2023proportional,hoefer2017proportional}. However, the game formulations in these works vary considerably from our formulation. Works focusing on the proportional allocation mechanism \citep{kelly1997charging,caragiannis2016welfare,syrgkanis2013composable,johari2007price} delves into the distribution of a singular, infinitely divisible resource amongst multiple competing players. In these models, players strategically choose the resource bid values. In contrast, studies like \citet{anbarci2023proportional, klumpp2019dynamics} navigate the Blotto games \citep{arad2012multi,friedman1958game}, where players allocate budgets across various battlefields or contests. Here, the objective hinges on amplifying total winnings, with victory rates in battles proportional to a player's allocated budget. Distinct from these models, our work focuses on multiple divisible resources and players need to choose among different resources.

Furthermore, \citet{hoefer2017proportional} explored a proportional coalition formation game, where players create coalitions (\textit{i.e.}, player groups). Within this framework, coalition payoffs get proportionally divided based on player weights. By contrast, our approach considers the distribution of several divisible resources where identical player groups may gain varying payoffs from different resources.

\paragraph{Multi-player Multi-Armed Bandits}
The Multi-Armed Bandit (MAB) problem~\citep{lai1985asymptotically,bubeck2012regret,slivkins2019introduction,lattimore2020bandit} has widely studied across various settings~\citep{kveton2015cascading,lattimore2016causal,agrawal2016linear,xu2022product,ferreira2022learning}. An extension to the classic MAB is the multi-player multi-armed bandit (MPMAB)~\citep{liu2008restless,jouini2009multi,jouini2010upper,branzei2021multiplayer}, which studies scenarios wherein multiple players interact within a single MAB instance~\citep{boursier2022survey}. Notably, a majority of MPMAB research posits that players receive no payoff in the event of a collision~\citep{rosenski2016multi,besson2018multi,boursier2019sic,boursier2020selfish,shi2020decentralized,shi2021heterogeneous,huang2022towards,lugosi2022multiplayer}, as comprehensively surveyed in~\citep{boursier2022survey}.

Several studies have proposed diverse payoff allocation mechanisms within the MPMAB framework when collisions occur. For instance, works by \citet{liu2020competing,liu2021bandit,jagadeesan2021learning,basu2021beyond} assume that the player with the highest expected payoff get the entire payoff in such scenarios. Others have introduced concepts like thresholds~\citep{bande2019multi,youssef2021resource,bande2021dynamic,magesh2021decentralized} or capacities~\citep{wang2022multi,wang2022multiple} for each arm. In addition, \citet{shi2021multi,pacchiano2021instance} assumed reduced payoffs for each player during collisions. \citet{boyarski2021distributed} presented a model where payoffs are heterogeneous, yet they do not ensure that the sum of payoffs for collided players aligns with the arm's total payoff. Crucially, while many of these studies emphasize solutions that maximize total welfare, we make the assumption of player selfishness, focusing on individual payoff maximization. Lastly, although \citet{xu2023competing} also operated within the MPMAB framework, they advocate an averaging allocation mechanism, a departure from real-world applicability and distinct from our formulation.
\section{Proportional Payoff Allocation Game} \label{sect:game-formulation}
\subsection{Background on Game Theory}
\paragraph{Nash Equilibrium.}
The concepts of $\epsilon$-Nash equilibrium and pure Nash equilibrium (PNE) have been well-established in the game theory literature~\citep{nisan2007algorithmic}. To set the context for our study, consider a game with $N$ players. Let $\calS$ represent the set of potential strategies for these players. Each player $j$ adopts a strategy $\pi_j \in \calS$. The collective strategy of all players is represented as $\bfpi = \{\pi_j\}_{j=1}^N$, while $(\pi', \bfpi_{-j})$ denotes a scenario where player $j$ deviates from his original strategy $\pi_j$ to another strategy $\pi' \in \calS$. We denote the payoff of player $j$ for a given strategy profile $\bfpi$ as $\calU_j(\bfpi)$. With these notations, the $\epsilon$-Nash equilibrium is defined as:

\begin{definition}[$\epsilon$-Nash Equilibrium and Pure Nash Equilibrium (PNE)]
A strategy profile $\bfpi \in \calS^N$ for a game defined by $\calG = \left(\calS, \{\calU_j\}_{j=1}^N\right)$ is termed an $\epsilon$-Nash equilibrium if, for every strategy $\pi' \in \calS$ and for each player $j \in [N]$, $\calU_j(\pi', \bfpi_{-j}) \le \calU_j(\bfpi) + \epsilon$. If $\bfpi$ achieves a $0$-Nash equilibrium, it's termed a pure Nash equilibrium.
\end{definition}

We focus on PNE because they represent the most robust form of equilibrium in game theory~\citep{nisan2007algorithmic} and are essential for ensuring platform stability~\citep{ben2018game}. Previous studies in recommender systems have also emphasized the importance of PNE in understanding system behavior and fairness~\citep{ben2020content, xu2023competing, hron2022modeling}. Furthermore, in the context of our proposed game in \cref{sect:game-formulation-detail}, we note that since it is finite, Mixed Nash Equilibria (MNE) always exist~\citep{nisan2007algorithmic}. However, focusing on PNE is crucial for capturing the most stringent, predictable, and enforceable outcomes in practice.

\paragraph{Total Welfare.}
The total welfare, represented by $W_{\calG}(\bfpi)$, calculates the aggregate payoff derived from a strategy profile $\bfpi$. In the context of our game, it's the summation of the utilities obtained by all participating players, expressed mathematically as: $W_{\calG}(\bfpi) = \sum_{j=1}^N \calU_j(\bfpi)$.

\paragraph{Improvement Step and Improvement Path.}
We leverage the notions of improvement steps and paths from prior works~\citep{ben2019convergence, ben2020content}. Starting with any strategy profile $\bfpi \in \calS^N$, an \textit{improvement step} is a transition to a new strategy profile $(\pi_j', \bfpi_{-j})$, where player $j$ unilaterally deviates to gain a higher payoff, such that $\calU_j(\pi_j', \bfpi_{-j}) > \calU_j(\bfpi)$. Furthermore, an \textit{improvement path} $\ell = \left(\bfpi(1), \bfpi(2), \dots \right)$ is a sequence formed by chaining such improvement steps. Given our focus on a finite game with bounded player counts and strategy space (as shown in \cref{defn:game}), any unending improvement path will inevitably contain cycles, implying the existence of states $t_1 < t_2$ where $\bfpi(t_1) = \bfpi(t_2)$. Crucially, if all potential improvement paths in a game are finite, every better-response dynamics will converge to a PNE~\citep{nisan2007algorithmic}.

\subsection{Game Formulation} \label{sect:game-formulation-detail}
Consider a scenario with $K$ distinct resources labeled by indices $[K] = \{1, 2, \dots, K\}$ and $N$ players, where $N, K \ge 2$. Each resource, denoted by $k$, possesses a total payoff $\mu_k$ in the range $0 < \mu_k \le 1$. Simultaneously, each player, represented as $j$, has a weight ranging between $0 \le w_{j,k} \le 1$ for resource $k$. Every player can choose from a strategy set $[K]$, resulting in an overall strategy profile $\bfpi \in [K]^N$. Crucially, if multiple players opt for the same resource, the total payoff from that resource is distributed proportionally based on the individual player weights. This can be formalized as:

\begin{definition} [Proportional Payoff Allocation Game (PPA-Game)] \label{defn:game}
    Given $K$ resources with associated payoffs $\bfmu = \{\mu_k\}_{k=1}^K$ ($0 < \mu_k \le 1$) and $N$ players characterized by weights $\bfW = \{w_{j,k}: j\in [N], k \in [K]\}$ ($0 \le w_{j,k} \le 1$), the Proportional Payoff Allocation Game (PPA-Game) is formulated as $\calG = \left(\calS, \left\{\calU_j(\bfpi)\right\}_{j=1}^N\right)$, where $\calS = [K]$ and
    \begin{equation} \label{eq:game-formulation}
        \calU_j(\bfpi) = \begin{cases}
            \frac{\mu_k w_{j,k}}{\sum_{j'=1}^N \bbI[\pi_{j'}=k]w_{j',k}} & \text{if } \sum_{j'=1}^N \bbI[\pi_{j'}=k]w_{j',k} > 0, \\
            0 & \text{otherwise}.
        \end{cases} \\
    \end{equation}
    Here $k = \pi_j$ and $\bbI[\cdot]$ is the indicator function.
\end{definition}

It's worth noting that any resource $k$ without players having positive weights will remain unchosen. Consequently, such a resource can be disregarded. Based on this observation, we can assert that for each $k \in [K]$, there exists a $j \in [N]$ such that $w_{j,k} > 0$, and the maximum weight across players is $1$, \textit{i.e.}, $\max_{j \in [N]} w_{j,k} = 1$ for every $k \in [K]$. We then demonstrate that how the PPA-Game can be used to model competitive behaviors among content creators within online recommender systems:

\begin{example}
    In modern recommendation platforms like YouTube and TikTok, content creators compete for viewer attention by producing content on a wide range of topics. In this setting, the content creators are the `players,' while the topics they generate act as the `resources.' The `payoff' associated with each resource reflects the overall attention or exposure a topic receives. Since creators possess varying levels of expertise across different topics, exposure is typically distributed in proportion to their influence or proficiency in each specific topic.

    Several models are closely related to our formulation, particularly those by~\citet{bhawalkar2014weighted, ben2019convergence, ben2020content, liu2020competing, liu2021bandit, xu2023competing}. However, the models in~\citep{ben2019convergence, ben2020content, liu2020competing, liu2021bandit} assume that only the player $j$ with the highest weight $w_{j,k}$ for a resource $k$ receives the payoff. In contrast, \citet{xu2023competing} assume an average payoff allocation for players choosing the same resource. The weighted congestion game model~\citep{bhawalkar2014weighted} assumes that players have identical weights across all resources. Therefore, we argue that these formulations diverge from the more realistic dynamics of online recommender systems, as outlined above.
\end{example}

\subsection{Analysis on Pure Nash Equilibrium (PNE)}
% In this subsection, we analyze the properties of PNE. We first demonstrate that although PNE may not exist in certain scenarios, our synthetic experiments validate that the existence is common. Additionally, we identify specific scenarios where the existence of PNE is assured. Finally, we show that the game may contain several PNEs, and some PNEs may be less efficient, meaning certain PNEs could yield a lower total welfare compared to others.

In this subsection, we investigate properties of Pure Nash Equilibria.

% We demonstrating that while it may not always exist, its occurrence is generally common, as shown by our synthetic experiments. We further identify scenarios guaranteeing PNE existence and reveal that multiple PNEs can exist within the game, with some being less efficient and yielding lower total welfare compared to others.

\subsubsection{High Probability of PNE Existence}

We first acknowledge that PNE may not always exist in certain scenarios~(see \cref{sect:app-non-existence} for a detailed illustrative example):

\begin{observation}\label{ob:non-existence}
PNE does not exist in specific scenarios.
\end{observation}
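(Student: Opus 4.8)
The statement only asserts the existence of a bad instance, so the plan is to exhibit one explicit PPA-Game $\calG=(\calS,\{\calU_j\}_{j=1}^N)$---fixing $N$, $K$, the payoffs $\bfmu$, and the weights $\bfW$---and then certify directly from \Defnref{defn:game} that it has no pure Nash equilibrium. Since the strategy space $[K]^N$ is finite, ``no PNE'' is equivalent to the purely combinatorial statement that \emph{every} profile $\bfpi\in[K]^N$ admits an improvement step, i.e.\ there is some player $j$ and some $\pi'$ with $\calU_j(\pi',\bfpi_{-j})>\calU_j(\bfpi)$. I would first rule out the trivial sizes: with $N=2$ a PNE always exists (a short best-response check shows that either the profile placing the two players on the two highest-payoff resources, or one of the two profiles in which they share the most valuable resource, is stable), so the construction must use $N\ge 3$, and it must use weights that are genuinely heterogeneous both across players and across resources---otherwise the instance would fall under one of the sufficient conditions advertised in the introduction and a PNE would exist.

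To verify the chosen instance I would enumerate all $K^N$ profiles and, for each, display one profitable deviation. Two families are immediate: any profile in which all players crowd a single resource is destabilized because some player can jump to an unused resource and collect its entire payoff $\mu_k$ alone, which beats a diluted proportional share. The work is therefore concentrated on the ``spread-out'' profiles, and here it is cleanest to organize the deviations into a closed improvement path $\bfpi(1)\to\bfpi(2)\to\cdots\to\bfpi(1)$ that passes through every spread-out profile; exhibiting such a cycle, together with the easy degenerate cases, shows that no profile is a sink, hence none is a PNE.

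The hard part is the parameter tuning, and it is delicate for a structural reason. ``Generic'' heterogeneous weights tend to induce a stable \emph{sorting}: each player settling on a resource where its proportional share is large is typically a PNE, so the weights must be \emph{frustrated} enough that no sorting survives. The key obstruction to building the cycle is that, on any single resource $k$, the shares $w_{j,k}/\sum_{j'}w_{j',k}$ order the players transitively, so the pairwise ``who prefers to abandon whom'' relation restricted to one resource can never be cyclic. Consequently an improvement cycle cannot live on a single resource; it must alternate between resources and exploit a \emph{reversal} of relative strength---a player being comparatively strong on one resource yet comparatively weak on another. Choosing $\bfmu$ and $\bfW$ so that these cross-resource reversals line up to close the cycle while still destabilizing the degenerate profiles, and checking that no stray profile is accidentally stable, is the crux of the argument; once a candidate instance is fixed, the remaining verification is a finite, routine computation.
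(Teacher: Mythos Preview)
Your plan matches the paper's: exhibit one explicit instance and certify by full enumeration that every profile admits a profitable deviation. Your structural observations---heterogeneity across both players and resources is required to escape \Thmref{thrm:pne-exists-special}, and any improvement cycle must hop across resources because shares on a single resource are transitively ordered---are correct and align with the cycle the paper sketches in \Figref{fig:example-pne-non}.

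The gap is that you never actually supply the instance, and for this observation the instance \emph{is} the proof. Your discussion also implicitly suggests that a small example ($N=3$ or so) should suffice, but the paper's experience points the other way: its counterexample (\Exampleref{example:pne}) uses $N=7$, $K=4$, and a highly structured \emph{sparse} weight matrix $\bfW$ with many zero entries; the improvement cycle involves four of the seven players moving across all four resources, with the remaining three players effectively pinning down the background loads. The paper further reports (\Appref{sect:app-high-probability}) that among $2\times 10^{7}$ randomly sampled games only one lacked a PNE. So ``tune parameters until the cycle closes and no stray profile is stable'' is the right idea, but the tuning is far more delicate---and the instance likely larger---than your sketch conveys. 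To turn the proposal into a proof you must commit to concrete $\bfmu$ and $\bfW$ and carry out (or at least describe the outcome of) the enumeration.
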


While this result may seem discouraging, large-scale simulations suggest that PNEs typically exist in most cases. Our synthetic experiments support this finding, as detailed in \cref{sect:app-high-probability}.

\begin{observation}\label{ob:high-probability}
We randomly select $N$ and $K$ from the range $[10, 100]$ and sample resource payoffs and player weights from four distributions supported on $[0,1]$. Among $20,000,000$ randomly generated game configurations, a PNE fails to exist in only one instance.
\end{observation}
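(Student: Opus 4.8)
The statement is empirical rather than analytic: it reports the outcome of a Monte Carlo study, so the ``proof'' is a description of a sampling-and-verification procedure whose output is the quoted count. The plan is first to fix the sampling scheme precisely---draw $N$ and $K$ independently and uniformly from the integers in $[10,100]$ and, for a chosen one of the four distributions supported on $[0,1]$, draw the $K$ payoffs $\mu_k$ and the $NK$ weights $w_{j,k}$ i.i.d.\ from it. Since the payoffs in \eqref{eq:game-formulation} are ratios, rescaling all weights on a resource by a common constant leaves every $\calU_j$ unchanged, so the normalization noted after \Defnref{defn:game} is harmless and need not be enforced at sampling time. Repeating this for a total of $20{,}000{,}000$ draws spread across the four distributions, and recording for each draw the single bit ``does a PNE exist?'', reduces the whole statement to a tally of that bit.

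The entire difficulty therefore concentrates in one subroutine: an exact oracle that decides PNE existence for a single game. Checking whether a \emph{fixed} profile $\bfpi\in[K]^N$ is a PNE is cheap---for each player $j$ with $\pi_j=k$ one compares $\calU_j(\bfpi)=\mu_k w_{j,k}/S_k$, where $S_k=\sum_{j':\pi_{j'}=k}w_{j',k}$, against the deviation value $\mu_{k'}w_{j,k'}/(S_{k'}+w_{j,k'})$ for every $k'\neq k$, in $O(NK)$ time. The obstruction is that a naive search for a certifying profile ranges over $K^N$ candidates, which is astronomically large for $N,K$ up to $100$. To obtain existence certificates cheaply I would run best-response (or better-response) dynamics from one or several random initial profiles: by the improvement-path discussion preceding \Defnref{defn:game}, whenever all improvement paths are finite such dynamics terminate at a profile from which no player wishes to deviate, i.e.\ a PNE, yielding a short, directly verifiable certificate of existence.

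The genuinely hard part is certifying \emph{non}-existence, because a cycle in best-response dynamics does not by itself prove that no PNE exists elsewhere in $[K]^N$: a game may admit improvement cycles from some starting points yet still possess an equilibrium reachable from others. I would handle this in two stages. First, as a fast filter, discharge the structured cases using the sufficient conditions stated in the introduction (long-tailed payoffs with ratio gap $c_0$, common weights across or within resources, or equal payoffs); whenever a sampled game satisfies one of these, existence is guaranteed without any search. Second, for the rare draws on which dynamics fail to converge after many random restarts, I would run a pruned exhaustive search over profiles---pruning with the per-player best-response inequalities above and with dominance relations among resources---to either exhibit a PNE or prove that \emph{no} profile satisfies the equilibrium conditions. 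Because genuine non-existence is extremely rare (the claim is that it occurs exactly once), this expensive branch is invoked on only a handful of flagged instances, so the overall computation stays feasible.

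The main pitfall to guard against is mistaking a dynamics cycle for true non-existence; the exhaustive confirmation on flagged instances is precisely what rules this out. I would finally aggregate across all four sampling distributions to confirm that the lone non-equilibrium instance is not an artifact of a single distribution, and re-examine that one instance in higher-precision (ideally exact) arithmetic to verify that it is a true non-existence rather than a spurious near-tie in the best-response comparisons.
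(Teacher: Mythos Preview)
Your proposal is correct and follows essentially the same approach as the paper: certify PNE existence by running an improvement path and, on the rare instances where the path cycles, fall back to an exhaustive enumeration of profiles to confirm non-existence. The paper's version is slightly leaner---it traces a single improvement path from the fixed start $\bfpi(0)=(0,\dots,0)$ rather than using random restarts, and it omits your sufficient-conditions prefilter (which would almost never fire under continuous sampling anyway); the four distributions it uses are $\mathrm{Uniform}(0,1)$, a truncated normal, a truncated $t$, and $\mathrm{Beta}(1/2,1/2)$.
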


These synthetic experiments allow us to infer that the existence of PNE is a common occurrence in practical scenarios.

\subsubsection{Existence of PNE in Specific Scenarios}
While judging the existence of PNE in general cases is impossible, we prove that PNE exists in several typical scenarios, and each corresponds to some realistic case:

\begin{theorem} [Long-tailed Resource Scenario] \label{thrm:pne-exists}
    Define $N_0$ and $\epsilon_0$ as follows.
    $$
        \begin{aligned}
            N_0 & = \max_{k \in [K]} \sum_{j=1}^N \bbI[w_{j,k} > 0], \\
            \epsilon_0 & = \min \left\{w_{j,k}: j \in [N], k \in [K], w_{j,k} > 0\right\}.
        \end{aligned}
    $$
    Then the PNE exists if
    \begin{equation} \label{eq:thrm-pne-exists}
        \forall k \ne k', \quad \frac{\mu_k}{\mu_{k'}} > \frac{N_0}{\epsilon_0} \quad \text{or} \quad \frac{\mu_k'}{\mu_{k}} > \frac{N_0}{\epsilon_0}.
    \end{equation}
\end{theorem}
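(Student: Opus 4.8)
The plan is to exploit the separation condition in \Eqref{eq:thrm-pne-exists} to show that the game collapses into one with \emph{strictly dominant strategies}, after which a single PNE can be exhibited directly. First I would record two elementary bounds on the payoff a player can extract from a single resource. Fix a player $j$ and a resource $k$ with $w_{j,k} > 0$. Whatever the other players do, the denominator in \Eqref{eq:game-formulation} is at most $N_0$ (at most $N_0$ players have positive weight on $k$, and every weight is at most $1$), while the numerator is at least $\mu_k \epsilon_0$; hence player $j$'s payoff on $k$ always lies in $\left[\tfrac{\mu_k \epsilon_0}{N_0},\, \mu_k\right]$, the upper bound following from $\tfrac{w_{j,k}}{\sum_{j'}\bbI[\pi_{j'}=k]w_{j',k}} \le 1$ (the denominator already contains the term $w_{j,k}$). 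If instead $w_{j,k}=0$, the payoff is $0$.

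The decisive step is to convert these bounds, via the hypothesis, into a total order on resources that is independent of the strategy profile. Since $N_0/\epsilon_0 \ge 1$, the condition forces all payoffs $\mu_k$ to be distinct, so I may relabel them as $\mu_1 > \mu_2 > \dots > \mu_K$; \Eqref{eq:thrm-pne-exists} then reads $\mu_{k'} < \tfrac{\epsilon_0}{N_0}\,\mu_k$ whenever $\mu_k > \mu_{k'}$. Combined with the previous interval, this yields the key domination property: for any two resources with $\mu_k > \mu_{k'}$, the \emph{smallest} positive payoff obtainable on $k$, namely $\tfrac{\mu_k\epsilon_0}{N_0}$, strictly exceeds the \emph{largest} payoff obtainable on $k'$, namely $\mu_{k'}$. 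In words, any positive share of a higher resource beats every share of a lower one, no matter how crowded the higher resource becomes.

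With this in hand, define for each player $j$ the resource $k_j^\star = \argmax\{\mu_k : w_{j,k} > 0\}$, which is well defined and unique by the distinctness of the payoffs. I claim $k_j^\star$ is a strictly dominant strategy: playing $k_j^\star$ guarantees at least $\tfrac{\mu_{k_j^\star}\epsilon_0}{N_0}$; deviating to any accessible resource $k'$ (necessarily with $\mu_{k'} < \mu_{k_j^\star}$) yields at most $\mu_{k'} < \tfrac{\mu_{k_j^\star}\epsilon_0}{N_0}$ by the domination property; and deviating to any inaccessible resource yields $0$. Consequently the profile $\bfpi^\star$ with $\pi_j = k_j^\star$ for every $j$ admits no profitable unilateral deviation and is therefore a PNE. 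Since each player has a strictly dominant action, $\bfpi^\star$ is in fact the unique PNE.

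I expect the only real content to be the domination step: recognizing that the threshold $N_0/\epsilon_0$ is calibrated precisely so that the worst-case positive payoff on a resource, $\mu_k\epsilon_0/N_0$, dominates the best-case payoff $\mu_{k'}$ on every strictly weaker resource. Once this separation of the payoff ranges is established, the game is dominance-solvable and the remaining verification — that each player's most valuable accessible resource is a best response irrespective of congestion — is routine.
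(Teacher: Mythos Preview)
Your argument is correct and is genuinely different from the paper's. The paper argues by contradiction via improvement cycles: assuming an infinite improvement path, it extracts a cycle $\ell$, defines $Q_k(\ell)=\min_s W_k(\bfpi(s))$, and through three lemmas builds a strictly increasing chain of values $\mu_{k_i}/(Q_{k_i}(\ell)+\epsilon_0)$, which is impossible since only finitely many resources appear. The separation hypothesis enters only inside Lemma~C.3 to rule out the ``wrong'' direction $\mu_k>\mu_{k'}$ for a single improvement step. By contrast, you bypass cycles entirely: the bounds $\calU_j\in[\mu_k\epsilon_0/N_0,\mu_k]$ together with $\mu_{k'}<\mu_k\epsilon_0/N_0$ whenever $\mu_{k'}<\mu_k$ show immediately that each player has a strictly dominant action, namely her highest-payoff accessible resource. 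Your route is shorter and yields more---uniqueness of the PNE and dominance-solvability (so every improvement path terminates in at most $N$ steps)---none of which the paper's cycle argument delivers directly. The paper's machinery, on the other hand, is the same template reused for \Thmref{thrm:pne-exists-special}, where no dominant strategies exist and a potential/cycle argument is genuinely needed; so its value is uniformity across the several existence results rather than efficiency for this one.
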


\begin{remark}
    \cref{thrm:pne-exists} demonstrates scenarios with a long-tailed distribution of resource payoffs. In particular, \cref{eq:thrm-pne-exists} specifies that the peak payoff among resources substantially surpasses that of other resources. For instance, entertainment-related content on TikTok garners substantially more views than other topics \citep{statista2023}.
\end{remark}

Several degenerate cases also guarantee the existence of a PNE.

\begin{theorem} \label{thrm:pne-exists-special}
    The existence of PNE is guaranteed if any of the following conditions are met:
    \begin{enum}
        \item \textbf{(Partially Heterogeneous Scenario)} Players have the same weights for different resources, \textit{i.e.}, $\forall k, k' \in [K], j \in [N]$, $w_{j,k} = w_{j,k'}$.
        \item \textbf{(Homogeneous Player Scenario)} Players have the same weights within each resource, \textit{i.e.}, $\forall k \in [K]$ and $j, j' \in [N]$, $w_{j,k} = w_{j',k}$.
        \item \textbf{(Homogeneous Resource Scenario)} Resources have the same payoffs, \textit{i.e.}, $\forall k, k' \in [K]$, $\mu_k = \mu_{k'}$.
    \end{enum} 
\end{theorem}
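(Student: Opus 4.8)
The plan is to prove each of the three cases by exhibiting, for the corresponding specialization of the PPA-Game, a (generalized ordinal) potential: a function $\Phi$ on the finite strategy space $[K]^N$ that strictly increases along every improvement step. Since the state space is finite, such a $\Phi$ cannot repeat a value along an improvement path, so every improvement path is finite; by the finite-improvement-property fact recalled just before \Defnref{defn:game}, a PNE then exists. Throughout I write $W_k(\bfpi)=\sum_{j'}\bbI[\pi_{j'}=k]\,w_{j',k}$ for the total weight on resource $k$ and $n_k(\bfpi)=\sum_{j'}\bbI[\pi_{j'}=k]$ for the number of players on $k$, and I track how these change when a single player $j$ deviates from $k$ to $k'$.

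Case (2) (Homogeneous Player) is the cleanest: when $w_{j,k}=w_k$ for all $j$, the payoff collapses to $\calU_j(\bfpi)=\mu_k w_k/(n_k w_k)=\mu_k/n_k$, depending only on the occupancy count $n_k$. This is exactly a singleton (Rosenthal) congestion game, for which the exact potential $\Phi=\sum_{k}\sum_{i=1}^{n_k}\mu_k/i$ satisfies $\Phi(\pi_j',\bfpi_{-j})-\Phi(\bfpi)=\calU_j(\pi_j',\bfpi_{-j})-\calU_j(\bfpi)$, so a strict improvement strictly increases $\Phi$ and the claim follows.

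For Case (1) (Partially Heterogeneous), $w_{j,k}=w_j$ and the payoff is $\mu_k w_j/W_k$ with $W_k=\sum_{j':\pi_{j'}=k}w_{j'}$. Since $w_j>0$ is a player-specific constant, maximizing this over the choice of $k$ is equivalent to minimizing the ``load'' $W_k/\mu_k$ that player $j$ experiences (with its own weight included in $W_k$); i.e.\ an improving deviation is exactly a move to a resource where this load becomes strictly smaller. This identifies the game, at the level of best responses and improvement steps, with load balancing of jobs of sizes $w_j$ on uniformly related machines of speeds $\mu_k$. I would then take the vector of loads $(W_k/\mu_k)_{k\in[K]}$ sorted in non-increasing order, compared lexicographically, as a generalized ordinal potential: a standard argument shows that an improving move (which lowers the moving player's own load below its previous value) strictly decreases this sorted vector lexicographically, so improvement paths are finite. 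Equivalently, one may invoke that a weighted congestion game with the affine latencies $x\mapsto x/\mu_k$ is an exact potential game.

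The most delicate case is (3) (Homogeneous Resource), where all $\mu_k=\mu$ but the weights $w_{j,k}$ remain fully general, so the payoff is $\mu\,w_{j,k}/W_k$, i.e.\ $\mu$ times player $j$'s share of resource $k$. Here I would use the explicit potential $\Phi(\bfpi)=\sum_{k:\,W_k>0}\ln W_k$. When $j$ deviates from $k$ to a nonempty $k'$ without emptying $k$, writing $s=w_{j,k}/W_k$ and $s'=w_{j,k'}/(W_{k'}+w_{j,k'})$ for $j$'s old and new shares, the two affected terms change by $\ln(1-s)-\ln(1-s')$, which is positive precisely when $s'>s$, i.e.\ precisely when the move is an improvement; so $\Phi$ strictly increases. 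The main obstacle is the boundary behavior of $\ln$: a move onto an \emph{empty} resource yields share $1$ and is not captured by this formula (indeed the raw $\ln$-sum can drop). I would handle this by replacing $\Phi$ with the lexicographic pair $\big(\#\{k:W_k>0\},\ \sum_{k:W_k>0}\ln W_k\big)$ and checking that (a) an improving move never decreases the occupancy count, since a player alone on its resource already earns the maximal payoff $\mu$ and hence has no improving deviation, so no resource is ever emptied by an improving move; and (b) a move onto an empty resource strictly increases the occupancy count, while every occupancy-preserving improving move strictly increases the $\ln$-sum by the computation above. Carefully discharging these edge cases, and confirming that the bookkeeping is consistent across the three reductions, is where I expect the real work to lie.
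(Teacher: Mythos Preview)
Your proposal is correct and largely parallels the paper's proof. For Case~(1), the paper uses the sorted vector $\sort\bigl(\mu_k/W_k(\bfpi)\bigr)_{k}$ under lexicographic order as its generalized ordinal potential, which is exactly the reciprocal of your sorted load vector $(W_k/\mu_k)_k$; the two arguments are equivalent. For Case~(3), the paper's potential is $\prod_{k}W_k(\bfpi)$, i.e.\ the exponential of your $\sum_k\ln W_k$, and it derives the same key inequality $W_k^{\text{after}}W_{k'}^{\text{after}}>W_k^{\text{before}}W_{k'}^{\text{before}}$ via the same ``subtract both shares from $1$'' manipulation you use. The two places where you diverge are minor: (i)~for Case~(2) the paper simply observes that the standing normalization $\max_j w_{j,k}=1$ forces $w_{j,k}\equiv 1$, making Case~(2) a special instance of Case~(1), rather than invoking the Rosenthal potential directly; and (ii)~for the empty-resource boundary in Case~(3), the paper replaces each $W_k=0$ by a small positive constant $c_0$ chosen relative to a hypothetical improvement cycle, whereas your lexicographic pair (occupancy count, log-sum) handles the same issue globally. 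Both treatments work; your boundary device in Case~(3) is arguably cleaner since it yields a global potential rather than a cycle-dependent one, while the paper's reduction in Case~(2) saves you from writing down a second potential.
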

\begin{remark}
    \cref{thrm:pne-exists-special} establishes the existence of PNE in several degenerate cases of the PPA-Game (\cref{defn:game}). Specifically, the partially heterogeneous scenario corresponds to cases where player weights reflect their inherent influence. In online recommender platforms, for example, certain creators—termed "influencers"—exert greater influence across all topics due to their expertise, stature, or authority. Notably, the partially heterogeneous scenario forms a special case of the weighted congestion game~\citep{bhawalkar2014weighted}, where PNE may not always exist, yet we prove its existence in this specific setting.
    The homogeneous player scenario considers cases where all players exert equal influence on every topic, aligning with the model studied in~\citep{xu2023competing}. Meanwhile, the homogeneous resource scenario examines settings where all resources provide equivalent payoffs, as seen in competition for processing resources in modern data centers, where each resource has the same processing power~\citep{benblidia2021renewable}.
\end{remark}

\subsubsection{Non-uniqueness and Inefficiency of PNEs} \label{sect:pne-inefficient}
We further show that PNEs might not be unique. Notably, among multiple PNEs, some can yield a reduced total welfare compared to others:

\begin{example} [Existence of multiple and inefficient PNEs]
    Assume the resources' payoffs to be $[0.6, 0.4, 0.2]$. Suppose each player has the same weights on the resources (as per Condition (2) in \cref{thrm:pne-exists}), with weights $[1, 3/8, 1/4]$, respectively. Under such a setup, two distinct PNEs emerge: $\bfpi^{(1)} = (1, 2, 3)$ and $\bfpi^{(2)} = (2, 1, 1)$. Notably, since $W_{\calG}\left(\bfpi^{(1)}\right) = 1.2$ exceeds $W_{\calG}\left(\bfpi^{(2)}\right) = 1.0$, the latter, $\bfpi^{(2)}$, is a less efficient PNE.
\end{example}

Given the existence of such inefficient PNEs, it becomes preferable for players to converge to an optimal PNE, thereby maximizing the overall welfare.

\subsubsection{Price of Anarchy}
We further analyze the price of anarchy (PoA) in the PPA-Game. When a PNE exists, PoA is defined as the ratio of the maximal social welfare across all strategy profiles to the minimal social welfare among all pure Nash equilibria, i.e.,
\begin{equation} \label{eq:poa}
\mathrm{PoA} = \frac{\max_{\bfpi \in [K]^N}W_{\calG}(\bfpi)}{\min_{\bfpi \in [K]^N: \bfpi \text{ is a PNE}}W_{\calG}(\bfpi)}.
\end{equation}
\begin{theorem} \label{thrm:poa}
    If a PNE exists, then
    $$
    \mathrm{PoA} \le 1 + \frac{\min\{N, K\} - 1}{N}.
    $$
\end{theorem}
\begin{remark}
    Notably, this bound on PoA remains identical to that in \citep{xu2023competing}, which analyzes the homogeneous player scenario in \cref{thrm:pne-exists-special}.
\end{remark}

\section{Online Learning} \label{sect:online}
In real-world scenarios, the payoffs associated with each resource are not static but fluctuate over time, influenced by varying factors such as temporal demand for different content topics. Furthermore, players typically lack prior information about both the payoffs of resources and their own respective weights. To capture the learning dynamics of players in the PPA-Game, we introduce a novel framework based on multi-player multi-armed bandits. This setup allows us to model how players adapt and make decisions based on their historical observations.

\subsection{Multi-Player Multi-Armed Bandits Framework}
Inspired by previous work \citep{xu2023competing, boursier2022survey, liu2020competing}, we adopt a decentralized multi-player multi-armed bandit (MPMAB) framework where each `arm' represents a resource. In this setting, players lack prior information regarding both the rewards offered by arms and their personal weights toward these arms\footnote{In the MPMAB framework, the terms `payoff' and `reward' are synonymous, as are `resource' and `arm'.}. Consequently, players must learn this information and maximize their payoffs through ongoing interactions with the platform.

Suppose there are $T$ rounds. At each round $t \in [T]$, each arm (resource) $k$ offers a stochastic reward $X_k(t)$, independently and identically distributed according to a distribution $\xi_k \in \Xi$, with expected value $\mu_k$, supported on $(0, 1]$. We denote the set of all possible such distributions as $\Xi$. The vector of rewards from all arms at round $t$ is represented as $\bfX(t) = \{X_k(t)\}_{k=1}^K$. During each round $t \in [T]$, $N$ players select arms concurrently, based solely on their past observations. We use $\pi_j(t)$ to indicate the arm chosen by player $j$ and $\bfpi(t) = \{\pi_j(t)\}_{j=1}^N$ to denote the strategy profile of all players at that round.

When multiple players select the same arm, the reward is allocated according to the proportional payoff rule specified in \cref{eq:game-formulation}. Specifically, let $R_j(t)$ represent the reward received by player $j$ at round $t$, given by:
\begin{equation*} % \label{eq:reward-at-t}
    R_j(t) = X_{k}(t) \cdot \frac{w_{j, k}}{\sum_{j'=1}^N \bbI[\pi_{j'}(t) = k]w_{j', k}}, \, \text{where} \quad k = \pi_j(t).
\end{equation*}

Our framework aligns with the statistical sensing setting commonly found in MPMAB literature \citep{boursier2022survey}. Specifically, each player can observe both their individual reward $R_j(t)$ and the total reward of the arm they have chosen, denoted as $X_{\pi_j(t)}(t)$.

In addition, consistent with previous works \citep{boursier2020selfish, xu2023competing}, our framework operates under the assumption of selfish players. In this setting, each player is focused on maximizing their own reward rather than collaborating to achieve a globally optimal outcome.

\subsection{Evaluation metrics} \label{sect:metrics}
We consider the following evaluation metrics.
\paragraph{Players' Regret.}
Since we consider the selfish player setting, we evaluate the regret of individual players by contrasting the expected reward garnered through their chosen strategy, $\bfpi(t)$, with the best possible choice for each round $t$. The formal expression for player $j$'s regret is as follows:
\begin{equation}  \label{eq:regret}
    \regret_j(T) = \sum_{t=1}^T\left(\max_{k \in [K]}\calU_j((k, \bfpi_{-j}(t)) - \calU_j(\bfpi(t))\right).
\end{equation}
Notably, $\regret_j(T)$ is a random variable and the randomness comes from the actions $\{\bfpi(t)\}$, which rely on historical random observations $\{X_k(t)\}$.

\paragraph{Number of Non-inefficient PNE Rounds.}
As demonstrated in \cref{sect:pne-inefficient}, the game can have multiple Pure Nash Equilibria (PNE), some of which may be suboptimal in terms of total welfare. Therefore, an important metric we consider is not merely the convergence to any PNE, but specifically to the most efficient PNE. We evaluate this using the number of rounds until round $T$ in which players do not converge to the most efficient PNE. Mathematically, this is expressed as:
\begin{equation} \label{eq:non-eq}
    \noneq(T) = \sum_{t=1}^T \bbI[\bfpi(t) \text{ is not a most efficient PNE of } \calG].
\end{equation}
The metric $\noneq(T)$ provides a measure of how frequently the players end up in a suboptimal equilibrium.

\subsection{Algorithm}
Inspired by \citep{bistritz2018distributed, bistritz2021game, pradelski2012learning}, we introduce a novel online learning algorithm designed to enable players to both learn the game's parameters and optimize their individual rewards concurrently. 
The algorithm's framework is shown in \cref{fig:algorithm} and the pseudo-code is outlined in \cref{alg:all}.

\begin{algorithm}[th]
    \caption{PPA-Game Online Learner}
    \label{alg:all}
    \begin{algorithmic}[1]
        \STATE \textbf{Input:} Player rank $j$, Total number of rounds $T$, Phase length parameters $c_1, c_2, c_3, \eta$, Perturbation range $\Gamma$, Exploration probability $\epsilon$
        \STATE \underline{Exploration phase:}
        \STATE Set $S_{j,k} = 0$ for all $k \in [K]$
        \FOR{$t \leftarrow 1$ \TO $c_1K$}
            \STATE Pull arm $\pi_j(t) \leftarrow (t + j) \mod K + 1$ and set $k \leftarrow \pi_j(t)$
            \STATE Get the arm's total reward $X_k(t)$
            \STATE Update the sum of rewards $S_{j,k} \leftarrow S_{j,k} + X_k(t)$
        \ENDFOR
        \STATE Estimate the rewards of each arm $\hat{\mu}_{j,k} \leftarrow S_{j,k} / c_1$ for $k \in [K]$
        \STATE Randomly sample perturbations $\gamma_{j,k} \sim \uniform(0, \Gamma)$ for each $k \in [K]$
        % \STATE

        \STATE \underline{Learning and exploitation phase:}
        \STATE $s \leftarrow 0$
        \WHILE{$t \le T$}
            \STATE \underline{Learning PNE subphase:}
            \STATE $s \leftarrow s + 1$.
            \STATE Initialize the counter $Q_{j,k}^{(s)} \leftarrow 0$.
            \STATE Initialize the status $(m_j, a_j, u_j) \leftarrow (\discontent, 1, 0)$
            \FOR{the next $c_2s^{\eta}$ rounds}
                \STATE $t \leftarrow t + 1$
                \STATE Pull arm $\pi_j(t)$ according to \cref{rule:pull-arm} and set $k \leftarrow \pi_j(t)$
                \STATE Get the arm's total reward $X_k(t)$ and player $j$'s own reward $R_j(t)$
                \STATE Set $u_j' \leftarrow R_j(t)\hat{\mu}_{j,k} / X_k(t) + \gamma_{j,k}$
                \STATE Update the status $(m_j, a_j, u_j)$ according to \cref{rule:transit}
                \IF{$m_j = \content$}
                    \STATE $Q_{j,k}^{(s)} \leftarrow Q_{j,k}^{(s)} + 1$
                \ENDIF
            \ENDFOR
            \STATE \underline{Exploitation subphase:}
            \STATE Find $k \leftarrow \argmax_{k'} \sum_{s'=1}^sQ_{j, k'}^{s'}$
            \FOR{the next $c_32^s$ rounds}
                \STATE $t \leftarrow t + 1$
                \STATE Pull arm $\pi_j(t) \leftarrow k'$
            \ENDFOR
        \ENDWHILE
    \end{algorithmic}
\end{algorithm}

Generally speaking, the algorithm operates in two distinct phases: the Exploration Phase and the Learning and Exploitation Phase.  During the Exploration Phase, each player pulls each arm multiple times. This serves to estimate the arm rewards and facilitates the computation of the most efficient Pure Nash Equilibrium (PNE) for the subsequent phase. Furthermore, in the Learning and Exploitation Phase, players alternate between two subphases. The first subphase focuses on identifying the most efficient PNE, following the approach presented in \citep{pradelski2012learning}. The second subphase centers on exploiting the gathered information to commit to the identified most efficient PNE.

\subsubsection{Exploration Phase} \label{sect:exploration-phase}
During this phase, each player pulls each arm $c_1$ times to estimate the expected reward $\hat{\mu}_{j,k}$ for each arm $k$. This estimate is calculated as the average of the observed rewards when pulling the arm. To further reduce regret during this phase, a collision-minimizing strategy is employed: at each round $t$, player $j$ selects the arm using the formula $(j + t) \mod K + 1$. This strategy aims to minimize the number of collisions among players.

\subsubsection{Learning and Exploitation Phase} \label{sect:learning-and-exploitation-phase}

\begin{figure}
    \centering
    \includegraphics[width=0.6\linewidth]{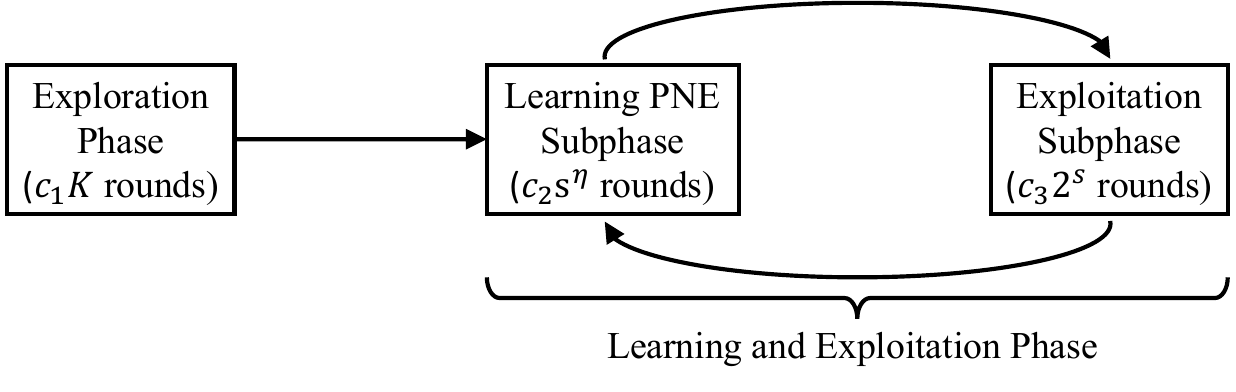}
    \vspace{-10px}
    \caption{Graphical overview of our algorithm: Utilizes hyper-parameters $c_1$, $c_2$, $c_3$, and $\eta$, with counter $s$ progressing through natural numbers. Initially, the algorithm enters an Exploration Phase for $c_1K$ rounds (\cref{sect:exploration-phase}), followed by a Learning and Exploitation phase (\cref{sect:learning-and-exploitation-phase}), where it alternates between Learning PNE and Exploitation Subphases based on the counter $s$. Specifically, in round $s$, the Learning PNE Subphase spans $c_2s^\eta$ rounds, and the Exploitation Subphase spans $c_32^s$ rounds.}
    \label{fig:algorithm}
\end{figure}

We next move on to the Learning and Exploitation Phase, which cyclically alternates between the Learning PNE and Exploitation Subphases.

Although our ultimate goal is to identify the most efficient Nash equilibrium in the game $\calG$ as outlined in \cref{defn:game}, precise parameter estimation for $\calG$ is unfeasible. Players can only approximate the game based on their arm reward estimates $\hat{\mu}_{j,k}$. Additionally, due to the potential non-uniqueness of the most efficient Nash equilibria in $\calG $, we introduce a randomized perturbation term $\gamma_{j,k}$ to each player's utility. Specifically, we aim to find the most efficient PNE in a perturbed game $\calG'$, which is given as:
\begin{equation} \label{eq:game-formulation-perturbed}
    \begin{aligned}
        & \calG' = \left(\calS, \left\{\calU_j'(\bfpi)\right\}_{j=1}^N\right), \quad \text{where} \\
        & \calS = [K] \text{ and } \calU_j'(\bfpi) = \frac{\hat{\mu}_{j,k} w_{j,k}}{\sum_{j'=1}^N \bbI[\pi_{j'}=k]w_{j',k}}+\gamma_{j,k}, k = \pi_j.
    \end{aligned}
\end{equation}
Here, $\gamma_{j,k}$ is randomly sampled from a uniform distribution over the interval $[0, \Gamma]$. As proven in ~\cref{thrm:game-perturbed}, under certain conditions, $\calG'$ has a unique most efficient PNE with probability 1, which also serves as a most efficient PNE in the original game $\calG$. Now we introduce the details of Learning PNE and Exploitation Subphases.

\paragraph{Learning PNE Subphase.} In this subphase, we adopt the framework proposed in \citep{pradelski2012learning} to identify the most efficient Nash equilibrium. Specifically, each player maintains a status tuple $(m_j, a_j, u_j)$ at each round $t$ during this phase. Here, $m_j$ can be one of four moods: discontent ($\discontent$), content ($\content$), watchful ($\watchful$), and hopeful ($\hopeful$). $a_j$ denotes the arm currently being played by player $j$, and $u_j$ represents the utility player $j$ gains when playing arm $a_j$.

Intuitively, a player's mood reflects their attitude toward their current strategy, similar to how individuals respond emotionally or cognitively to decisions. A discontent player, uncertain about the best choice, explores alternatives, akin to someone reconsidering their options when dissatisfied. A content player, confident in their strategy, sticks with it, much like someone satisfied with their decisions. A watchful player, cautious about their strategy’s effectiveness, may shift to a discontent mood if doubts arise. Finally, a hopeful player, believing in the potential for better returns, persists with their current strategy, similar to someone holding on to a risky investment in hopes of future gains.

Following the principles underlying mood design, at each round $t$, players select arms according to \cref{rule:pull-arm}, based on their mood from the previous round.
\begin{criterion}[Rule to pull arm] \label{rule:pull-arm}
    The chosen arm $\pi_j(t)$ is chosen based on the mood in the last round.
    \begin{itemize}
        \item A discontent ($\discontent$) player chooses arm $\pi_j(t)$ uniformly at random from $[K]$.
        \item A content ($\content$) player chooses arm $\pi_j(t) = a_j$ with probability $1-\epsilon$ and chooses all other arms with probability $\epsilon / (K-1)$.
        \item A watchful ($\watchful$) or hopeful ($\hopeful$) player chooses arm $\pi_j(t) = a_j$.
    \end{itemize}
\end{criterion}

\begin{remark}
    \cref{rule:pull-arm} encapsulates the different player moods. Discontent players, uncertain about the best strategy, explore the arm space uniformly, seeking alternatives. Content players, confident in their strategy, mostly stick with it but occasionally explore other options out of curiosity. Watchful and hopeful players, still assessing their strategy's effectiveness, continue with their current choice, adopting a cautious or optimistic wait-and-see approach.
\end{remark}

After pulling the selected arm $\pi_j(t)$, player $j$ observes his own reward $R_j(t)$ and the total reward $X_k(t)$ for the arm, where $k = \pi_j(t)$. The utility is then calculated as
$$
    u_j' = R_j(t)\hat{\mu}_{j,k}/X_k(t) + \gamma_{j,k}.
$$
The player's status is updated based on this computed utility $u_j'$, in accordance with \cref{rule:transit}.

\begin{criterion}[Rule to update status] \label{rule:transit}
    Let $0 < \epsilon < 1$ be a constant. Define the functions $F(u)$ and $G(u)$ as follows:
    \begin{equation} \label{eq:F-and-G}
        F(u)=-\frac{u}{(4+3\Gamma)N}+\frac{1}{3N} \quad \text{and} \quad G(u)=-\frac{u}{4+3\Gamma}+\frac{1}{3}.
    \end{equation}
    The status update is outlined below:
    \begin{itemize}
        \item A discontent ($\discontent$) player transits to status $(m_j, a_j, u_j) \leftarrow(\content$, $\pi_j(t), u'_j)$ with probability $\epsilon^{F(u'_j)}$ and remains unchanged otherwise.
        \item A content ($\content$) player's rule is based on his own choice $\pi_j(t)$. If $\pi_j(t) = a_j$, we have
        $$
            (m_j, a_j, u_j) \leftarrow
            \begin{cases}
                (\hopeful, a_j, u_j) & \text{if } u'_j > u_j, \\
                (\content, a_j, u_j) & \text{if } u'_j = u_j, \\
                (\watchful, a_j, u_j) & \text{if } u'_j < u_j.
            \end{cases}
        $$
        And under the case when $\pi_j(t) \ne a_j$, we have that if $u'_j > u_j$ then $(m_j, a_j, u_j) \leftarrow (\content, \pi_j(t), u'_j)$ with probability $\epsilon^{G(u'_j - u_j)}$ and remains unchanged with probability $1 - \epsilon^{G(u'_j - u_j)}$. If $u'_j \le u_j$, $(m_j, a_j, u_j)$ remains unchanged.
        \item A watchful ($\watchful$) player follows
        $$
            (m_j, a_j, u_j) \leftarrow
            \begin{cases}
                (\hopeful, a_j, u_j) & \text{if } u'_j > u_j, \\
                (\content, a_j, u_j) & \text{if } u'_j = u_j, \\
                (\discontent, a_j, u_j) & \text{if } u'_j < u_j.
            \end{cases}
        $$
        \item A hopeful ($\hopeful$) player follows
        $$
            (m_j, a_j, u_j) \leftarrow
            \begin{cases}
                (\content, a_j, u'_j) & \text{if } u'_j > u_j, \\
                (\content, a_j, u_j) & \text{if } u'_j = u_j, \\
                (\watchful, a_j, u_j) & \text{if } u'_j < u_j.
            \end{cases}
        $$
    \end{itemize}
\end{criterion}

\begin{remark}
    \cref{rule:transit} reflects the intuition behind player moods. Discontent players transition to content with some probability if their chosen strategy yields high utility. The design of $F(u)$ and $\epsilon$ ensures that $\epsilon^{F(u_j')}$ increases with $u_j'$, meaning higher utility raises the likelihood of selecting that strategy.
    
    Content players update their status based on the observed utility for the current round unless they have already explored other strategies, in which case they transition to a new strategy probabilistically based on utility. The design of $G(u)$ and $\epsilon$ ensures that $\epsilon^{G(u_j' - u_j)}$ increases with $u_j'$, meaning higher utility raises the probability of switching strategies. Unlike discontent players, content players compare new utility to their previous utility, as captured by the $u_j' - u_j$ term in the probability.

    Watchful and hopeful players adjust their status based on observed utility. A watchful player becomes discontent if their strategy has lower utility for two consecutive rounds, while a hopeful player, upon observing utility improvement for two consecutive rounds, remains with the current strategy and transitions to content.

    Consistent with existing work \citep{pradelski2012learning,young2009learning,marden2014achieving}, \cref{rule:pull-arm,rule:transit} follows the principles of mood design, forming a trial-and-error learning process similar to common human learning routines \citep{young2009learning}.
\end{remark}

Additionally, in the Learning PNE Subphase, each player maintains a counter $Q_{j,k}$ for each arm $k \in [K]$, tracking how often arm $k$ is chosen while the player is in a Content mood.

\paragraph{Exploitation Subphase.}
During the Exploitation Subphases, each player chooses the arm with the maximal number of times that arm is chosen with a content mood, \textit{i.e.}, $k = \argmax_{k'}Q_{j,k'}$. Then the player will pull arm $k$ for the remaining rounds in the subphase.

\subsection{Theoretical Analysis}
In this subsection, we analyze the performance of our online algorithm. We first establish a metric to quantify the challenge of identifying the most efficient PNE in the presence of imprecise resource payoffs. Subsequently, this metric serves as a basis for deriving rigorous bounds on both regret and the number of rounds in which players do not follow the most efficient PNE.

\subsubsection{Quantifying the Challenge of Identifying the Most Efficient PNE}
We introduce the sets $A^{\PNEBound}$ and $A^{\NOPNEBound}$ to denote Nash equilibria and non-equilibria strategy profiles, respectively. Let
\begin{equation} \label{eq:delta}
    \begin{aligned}
        \delta^{\PNEBound} & = \min\big\{\calU_j(\bfpi) - \calU_j(\pi', \bfpi_{-j}): \\
        & \quad \quad \quad \,\,\, \bfpi \in A^{\PNEBound}, j \in [N], \pi' \in [K], \pi' \ne \pi_j\big\}. \\
        \delta^{\NOPNEBound} & = \sup\left\{\delta': \forall \bfpi \in A^{\NOPNEBound}, \bfpi \text{ is not a } \delta'\text{-Nash equilibrium}\right\}. \\
        \delta & = \min\left\{\delta^{\PNEBound}, \delta^{\NOPNEBound}\right\}.
    \end{aligned}
\end{equation}
Here, $\delta^\PNEBound$ quantifies the probability of erroneously disregarding a PNE $\mathbf{\pi}$ due to inaccuracies in reward estimation. Conversely, $\delta^\NOPNEBound$ gauges the risk of misidentifying a non-equilibrium strategy profile $\mathbf{\pi}$ as a PNE. To capture both of these risks, we set $\delta$ as the minimum of $\delta^\PNEBound$ and $\delta^\NOPNEBound$. We further show from the following example that both $\delta^{\PNEBound}$ and $\delta^{\NOPNEBound}$ may determine $\delta$.

\begin{example} \label{example:pne-and-nopne}
    Set $N = 3$ and $K = 2$. Here we suppose that each player has the same weights on the resources (as per Condition (2) in \cref{thrm:pne-exists-special}).
    When arms' rewards are $[1, 0.7]$ and players' weights are $[1, 0.8, 0.4]$, we have $0.052 = \delta^{\PNEBound} > \delta^{\NOPNEBound} = 0.026$. By contrast, when arms' rewards are $[1, 0.6]$ and players' weights are $[1, 2/3, 4/9]$, we have $0.040 = \delta^{\PNEBound} < \delta^{\NOPNEBound} = 0.068$. Details of the two cases can be found in \cref{tab:case-pne-larger,tab:case-nopne-larger} in \cref{app:examples}.
\end{example}

Let $\Delta$ be the largest discrepancy between the estimated and true rewards $\Delta = \max_{j,k}\left|\hat{\mu}_{j,k} - \mu_k\right|$. We have the following theorem.
\begin{theorem} \label{thrm:game-perturbed}
    Suppose $\Delta < \delta/(4K)$, $\Gamma \le \delta / (4N)$, and the PNE exists for the game $\calG$ in \cref{defn:game}. Then the PNE exists for the game $\calG'$ (\cref{eq:game-formulation-perturbed}). In addition, the most efficient PNE in the game $\calG'$  is unique with probability $1$ and the equilibrium is a most efficient PNE in the original game $\calG$.
\end{theorem}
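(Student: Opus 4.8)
The plan is to treat $\calG'$ as a small perturbation of $\calG$ and show that under the stated bounds the perturbation is too small to create or destroy an equilibrium, too small to reorder well-separated welfare levels, yet (being continuous) generic enough to break ties. First I would record the basic per-profile distortion. Since player $j$'s proportional share $w_{j,\pi_j}/\sum_{j'}\bbI[\pi_{j'}=\pi_j]w_{j',\pi_j}$ never exceeds $1$, replacing $\mu_k$ by $\hat{\mu}_{j,k}$ moves $\calU_j$ by at most $\Delta$, and the additive term $\gamma_{j,\pi_j}\in[0,\Gamma]$ moves it by at most $\Gamma$. Consequently, for any profile and any single-player deviation, the change in the deviation margin $\calU_j(\bfpi)-\calU_j(\pi',\bfpi_{-j})$ is at most $2\Delta+\Gamma$, which under $\Delta<\delta/(4K)$, $\Gamma\le\delta/(4N)$ and $N,K\ge 2$ is strictly below $\delta$.

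Using this, I would show that $\calG$ and $\calG'$ have exactly the same pure equilibria. If $\bfpi\in A^{\PNEBound}$, every deviation costs at least $\delta^{\PNEBound}\ge\delta$ in $\calG$, so by the margin bound it still costs a positive amount in $\calG'$; thus $\bfpi$ is a (strict) PNE of $\calG'$, which in particular gives existence of a PNE for $\calG'$ since $\calG$ has one by hypothesis. Conversely, if $\bfpi\in A^{\NOPNEBound}$, then by definition of $\delta^{\NOPNEBound}$ some player has a deviation gaining at least $\delta^{\NOPNEBound}\ge\delta$ in $\calG$, and the same deviation remains profitable in $\calG'$, so $\bfpi$ cannot be a PNE of $\calG'$. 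Hence the two games share one common PNE set, call it $\mathcal{E}$.

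For the almost-sure uniqueness of the most efficient PNE of $\calG'$, I would write $W_{\calG'}(\bfpi)=W^{\mathrm{est}}(\bfpi)+\sum_{j}\gamma_{j,\pi_j}$, where $W^{\mathrm{est}}$ is independent of the perturbations $\{\gamma_{j,k}\}$. For two distinct profiles there is a player $j_0$ with $\pi_{j_0}\ne\tilde{\pi}_{j_0}$, so $\gamma_{j_0,\pi_{j_0}}$ enters $W_{\calG'}(\bfpi)$ with coefficient $1$ but is absent from $W_{\calG'}(\tilde{\bfpi})$; conditioning on all other perturbations, equality of the two welfares pins $\gamma_{j_0,\pi_{j_0}}$ to a single value, an event of probability $0$ under the uniform law. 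A union bound over the finitely many pairs of profiles shows that all profile welfares are almost surely distinct, so the nonempty set $\mathcal{E}$ has an almost surely unique $W_{\calG'}$-maximizer $\bfpi^{\prime}$.

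Finally I would argue $\bfpi^{\prime}$ is a most efficient PNE of $\calG$. The aggregate distortion telescopes: on each occupied arm the proportional shares sum to $1$, so the estimation error contributes at most $\Delta$ per arm and at most $K\Delta$ overall, while the perturbations contribute at most $N\Gamma$; the hypotheses are calibrated exactly so that $K\Delta+N\Gamma<\delta/2$. Because $W_{\calG}(\bfpi)=\sum_{k\in S(\bfpi)}\mu_k$ depends only on the set $S(\bfpi)$ of occupied arms, two PNEs either tie in welfare---in which case the perturbation merely selects one, still a most efficient PNE---or differ, and here I would need the difference to exceed the combined distortion so that $\bfpi^{\prime}$ cannot be a strictly suboptimal PNE of $\calG$: comparing $\bfpi^{\prime}$ with a $\calG$-optimal $\bfpi^{\ast}$, the inequality $W_{\calG'}(\bfpi^{\prime})\ge W_{\calG'}(\bfpi^{\ast})$ together with a genuine welfare gap would force $W_{\calG}(\bfpi^{\prime})=\max_{\bfpi\in\mathcal{E}}W_{\calG}(\bfpi)$. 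I expect this last point to be the main obstacle: the individual-deviation quantities $\delta^{\PNEBound}$ and $\delta^{\NOPNEBound}$ do not obviously lower-bound the gaps between distinct PNE welfare levels, so the crux is a separation lemma guaranteeing that any two PNEs of $\calG$ with unequal welfare differ by more than $K\Delta+N\Gamma$ (equivalently, by more than a fixed fraction of $\delta$). Establishing such a separation---presumably by exploiting the subset-sum structure of $W_{\calG}$ and the fact that occupied sets supported by PNEs cannot have arbitrarily close payoff sums without collapsing $\delta$---is the step I would expect to require the most care; once it is in place the comparison above closes the argument.
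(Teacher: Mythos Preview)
Your outline matches the paper's proof closely for the first three parts: the pointwise utility distortion bound, the equality of the PNE sets of $\calG$ and $\calG'$, and the almost-sure uniqueness of the $\calG'$-welfare maximizer via the continuous perturbations. The self-identified gap is indeed exactly where the remaining work lies, but it is much lighter than you anticipate and requires no subset-sum analysis.

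The paper obtains the separation in two short steps. First, an elementary structural lemma (with arms indexed by decreasing $\mu_k$): in any PNE of $\calG$ the set of occupied arms is a prefix $\{1,\dots,k\}$, since if some $k_1<k_2$ had $k_1$ empty and $k_2$ occupied, a player on $k_2$ could move to the empty higher-payoff arm $k_1$ and strictly improve. Hence two PNEs with distinct welfares have prefixes $\{1,\dots,k\}\subsetneq\{1,\dots,\tilde k\}$, and the $\calG$-welfare gap is at least $\mu_{\tilde k}$. Second---and this is the link to $\delta$ you were missing---$\mu_{\tilde k}\ge\delta$: in the larger PNE $\tilde{\bfpi}$ some player $j$ occupies arm $\tilde k$ with $\calU_j(\tilde{\bfpi})\le\mu_{\tilde k}$, and since any deviation yields nonnegative utility, every deviation loss of $j$ is at most $\mu_{\tilde k}$; but by the definition of $\delta^{\PNEBound}$ that loss is at least $\delta^{\PNEBound}\ge\delta$, so $\mu_{\tilde k}\ge\delta$. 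Thus distinct PNE welfare levels of $\calG$ differ by at least $\delta>2K\Delta+N\Gamma$, and your comparison $W_{\calG'}(\bfpi^{\prime})\ge W_{\calG'}(\bfpi^{\ast})$ then forces $W_{\calG}(\bfpi^{\prime})=W_{\calG}(\bfpi^{\ast})$, closing the argument.
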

\begin{remark}
    \cref{thrm:game-perturbed} demonstrates that under bounded estimation error $\Delta$ and perturbation $\Gamma$, the most efficient PNE in $\mathcal{G}$ can be uniquely identified.
\end{remark}

\subsubsection{Performance Analysis}
\begin{figure*}[t]
    \centering
    \includegraphics[width=\linewidth]{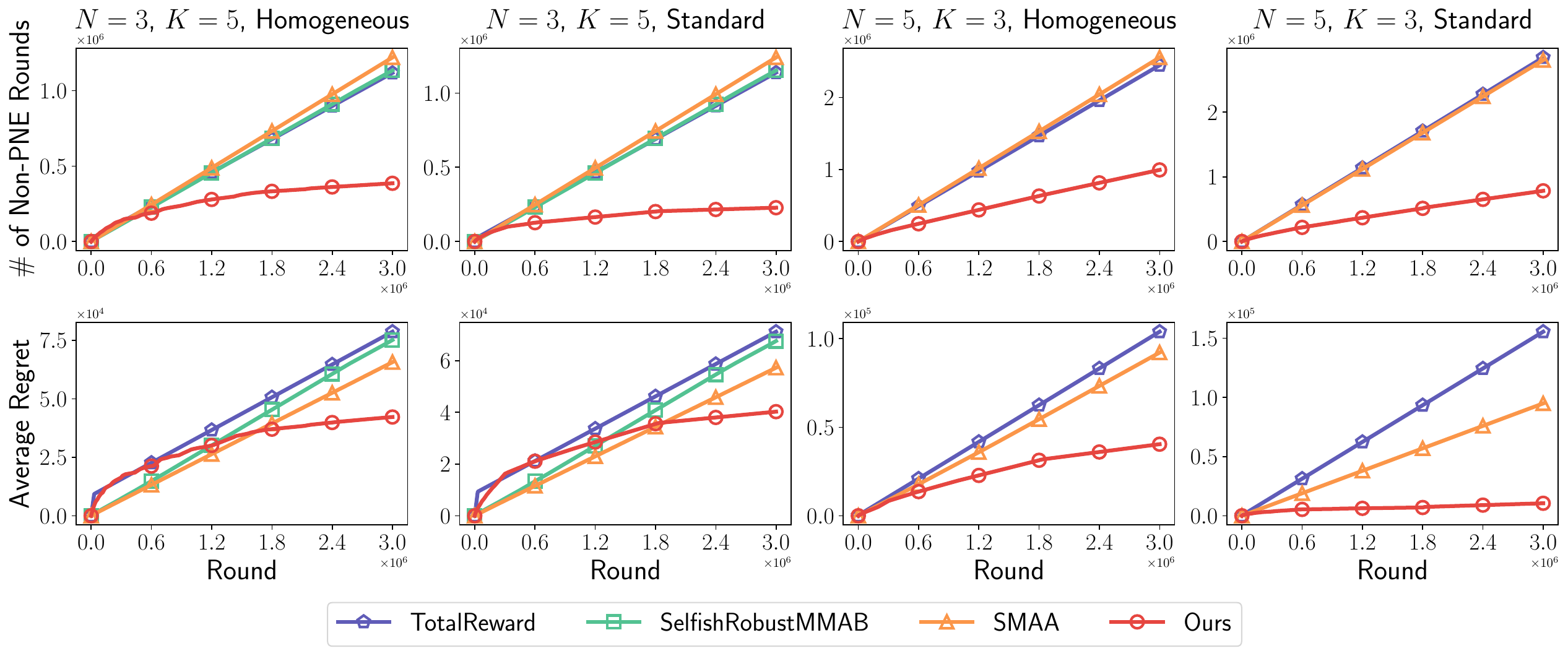}
    \vspace{-10px}
    \caption{Curves showing the average regret and the number of rounds where players do not follow the most efficient PNE. Note that SelfishRobustMMAB could not be applied to scenarios when $N > K$.}
    \label{fig:exp}
\end{figure*}

We now analyze the two metrics introduced in \cref{sect:metrics}. Following \citep{bistritz2018distributed,bistritz2021game}, the Learning PNE Subphase defines a Markov process $\calM(\epsilon; \calG')$ on the states of all players $Z = \{(m_j, a_j, u_j)\}_{j \in [N]}$ based on the parameter $\epsilon$ and the game $\calG'$. Based on this observation, we could obtain the following results.
Note that $c_1, c_2, c_3, \eta$ are hyper-parameters of our algorithm.

\begin{theorem} \label{thrm:regret}
    Suppose $\Delta < \delta/(4K)$, $\Gamma \le \delta / (4N)$, $w_{j, k} > 0$ for all $j \in [N]$ and $k \in [K]$, and the PNE exists for the game $\calG$ in \cref{defn:game}. Set $c_1 \ge 8K^2\log T/\delta^2$. There exists a small enough $\epsilon > 0$ such that when $T$ is large enough, we have that $\forall j \in [N]$,
    $$
        \begin{aligned}
            \bbE[\regret_j(T)] & \le O\left(\mu_{\max}\left(c_1 + c_2\log^{1 + \eta} T + c_3\log T\right)\right), \\
             \bbE[\noneq(T)] & \le O\left(c_1 + c_2\log^{1 + \eta} T + c_3\log T\right),
        \end{aligned}
    $$
    where $\mu_{\max} = \max \mu_k$.
\end{theorem}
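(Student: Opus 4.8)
The plan is to condition on a high-probability good-estimation event and then split both $\regret_j(T)$ and $\noneq(T)$ into the contributions of the three algorithmic components—the exploration phase, the Learning PNE subphases, and the Exploitation subphases. First I would control the estimation error: each $\hat\mu_{j,k}$ is an average of $c_1$ i.i.d.\ rewards in $(0,1]$, so Hoeffding together with $c_1\ge 8K^2\log T/\delta^2$ gives $\bbP(|\hat\mu_{j,k}-\mu_k|\ge\delta/(4K))\le 2/T$, and a union bound over the $NK$ estimates shows $\{\Delta<\delta/(4K)\}$ holds with probability at least $1-2NK/T$. On its complement I would bound the total regret crudely by $\mu_{\max}T$ and $\noneq$ by $T$, so this rare event contributes only a constant $O(\mu_{\max}NK)$ (resp.\ $O(NK)$). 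Conditioning henceforth on $\{\Delta<\delta/(4K)\}$, \Thmref{thrm:game-perturbed} applies: $\calG'$ has a unique most efficient PNE $\bfpi^\ast$ (a.s.\ over $\bfgamma$) that is also a most efficient PNE of $\calG$, and this $\bfpi^\ast$ is the target the learning machinery must identify.

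Next I would dispose of the two cheap contributions. The exploration phase lasts $c_1K$ rounds, each contributing one-step regret at most $\mu_{\max}$ (and one unit to $\noneq$), giving $O(\mu_{\max}c_1)$ and $O(c_1)$ with $K$ absorbed as a constant. Summing the Learning PNE subphase lengths over the $S=\Theta(\log T)$ phases gives $\sum_{s=1}^{S}c_2 s^{\eta}=\Theta(c_2 S^{1+\eta})=O(c_2\log^{1+\eta}T)$ rounds, and bounding every such round by $\mu_{\max}$ (resp.\ $1$) yields the $c_2\log^{1+\eta}T$ term. Here I use that at any PNE the one-step regret in \Eqref{eq:regret} is exactly zero, so regret is only ever incurred off equilibrium and these coarse per-round bounds are admissible.

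The heart of the argument is the Exploitation subphases. Following \citep{pradelski2012learning,bistritz2018distributed,bistritz2021game}, I would invoke that the Markov process $\calM(\epsilon;\calG')$ induced by \twoRulesref{rule:pull-arm}{rule:transit} has its stochastically stable states supported exactly on the all-content states at $\bfpi^\ast$; hence for $\epsilon$ small enough its stationary distribution assigns the content state at $\pi^\ast_j$ strictly more mass than any other arm, with a gap $\rho=\rho(\epsilon)$ bounded away from $0$. Because each Learning subphase restarts from $\discontent$, the per-subphase counts $Q^{(s)}_{j,k}$ are independent across $s$, and within subphase $s$ (of length $c_2 s^{\eta}$) the count concentrates sub-Gaussianly around $c_2 s^{\eta}$ times its stationary content-frequency, with variance proxy linear in $c_2 s^{\eta}$. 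Summing, the cumulative gap $\sum_{s'\le s}\big(Q^{(s')}_{j,\pi^\ast_j}-Q^{(s')}_{j,k}\big)$ has mean $\Theta(\rho\,c_2 s^{1+\eta})$ and variance proxy $\Theta(c_2 s^{1+\eta})$, so a sub-Gaussian tail bound gives $\bbP(\text{phase-}s\text{ argmax wrong for some player})\le 2N\exp(-\Theta(\rho^2 c_2 s^{1+\eta}))$. Multiplying by the exploitation length $c_3 2^{s}$ and the per-round cost $\mu_{\max}$ and summing over $s$, the exponent $\Theta(c_2 s^{1+\eta})$ dominates $s\log 2$ for every $\eta>0$, so the series converges and the exploitation contribution is absorbed into $O(\mu_{\max}c_3\log T)$ (resp.\ $O(c_3\log T)$).

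The main obstacle is precisely this last step: turning the qualitative stochastic-stability statement for $\calM(\epsilon;\calG')$ into a quantitative, non-asymptotic concentration bound on the cumulative content-counts whose failure probability decays faster than $2^{-s}$, since the exploitation subphase of phase $s$ is exponentially long. The decisive point is that it is the growing learning budget $\sum_{s'\le s}c_2(s')^{\eta}=\Theta(s^{1+\eta})$—and not the range of a single subphase—that enters the sub-Gaussian variance proxy, which is exactly what makes $\eta>0$ both necessary and sufficient to beat the $2^{s}$ growth. This is also where the order of quantifiers ("there exists a small enough $\epsilon$, then $T$ large enough") is essential: $\epsilon$ is fixed first so that $\calM(\epsilon;\calG')$ has a constant mixing time and a content-frequency gap $\rho$ bounded away from zero, after which the subphase lengths grow without bound and the counts concentrate.
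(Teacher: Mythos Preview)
Your proposal is correct and follows essentially the same route as the paper: condition on the Hoeffding clean event $\{\Delta<\delta/(4K)\}$, charge the exploration phase $O(\mu_{\max}c_1)$ and the learning subphases $O(\mu_{\max}c_2\log^{1+\eta}T)$ trivially, and control the exploitation subphases via stochastic stability of $\calM(\epsilon;\calG')$ combined with a Markov-chain Chernoff bound on the cumulative content counts, using the independence of the subphase runs (each restarts from $\discontent$) and the fact that $\sum_{s'\le s}c_2(s')^\eta=\Theta(s^{1+\eta})$ beats $2^s$. The only cosmetic difference is that the paper works with the single global state $Z^*=\{(\content,\pi_j^*,\calU'_j(\bfpi^*))\}_j$ and the event $\{\sum_{i\le s}H_i>L_s/2\}$, which directly forces every player's argmax to be $\pi_j^*$, whereas you phrase it per player via a content-frequency gap~$\rho$; both reductions invoke the same Chernoff bound of \citet{chung2012chernoff} with mixing time absorbed as a constant once $\epsilon$ is fixed.
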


\begin{remark}
    The theorem establishes that our algorithm converges to the most efficient PNE in the long term. It further characterizes the regret for each player as $O(\log^{1+\eta} T)$ for any $\eta > 0$, under sufficient conditions. This aligns with existing results in \citep{bistritz2021game}. Additionally, as $\eta \to 0$, our bounds reduce to $O(\log T)$, consistent with standard regret bounds in the multi-armed bandit literature \citep{boursier2022survey,slivkins2019introduction}.

    In addition, we emphasize that the minimum value of $T$ required to guarantee the bounds in \cref{thrm:regret} implicitly depends on all game parameters and hyperparameters, as is also the case in~\citep{bistritz2021game}. Notably, the inequalities in \cref{thrm:regret} do not imply that smaller values of $c_1$, $c_2$, or $c_3$ necessarily lead to better performance. This dependence motivates our empirical tuning of these parameters in the experiments (\cref{sect:experiments}).
\end{remark}

\section{Synthetic Experiments} \label{sect:experiments}

In this section, we validate the effectiveness of our online method through synthetic experiments.

\subsection{Data-generating Process}
In our experiments, we assess two distinct scenarios: $(N, K) = (3, 5)$ and $(N, K) = (5, 3)$, effectively examining both $N < K$ and $N > K$ cases. We set the time horizon, $T$, to 3,000,000 for each scenario. For modeling the payoff distribution of each resource, we use the beta distribution, constrained within the interval $[0, 1]$. Specifically, for each resource $k$, we independently draw two parameters, $\alpha_k$ and $\beta_k$, from a uniform distribution over $[0, 10]$. The rewards for each resource (or arm) in different rounds are then sampled from $\xi_k = \text{Beta}(\alpha_k, \beta_k)$. The players' weights $w_{j,k}$ are drawn from $\text{Uniform}(0, 1)$. 

We further explore two specific scenarios in the context of players' weights: (1) \textit{Homogeneous setting}, where all players have identical weights across various resources, aligning with the second condition in \cref{thrm:pne-exists-special}; and (2) \textit{Standard setting}, where players' weights across resources can differ without constraints.

\subsection{Baselines and Hyper-parameters}
% We compare our methodology with the following baselines. The details of the implementations can be found in \Appref{sect:experimental-details}.
% \begin{itm}
%     \item \textbf{TotalReward} \citep{bande2019multi} considers the shareable reward case while its target is to maximize the total reward of all players.
%     \item \textbf{SelfishRobustMMAB} \citep{boursier2020selfish} assumes players are selfish while the arms are not shareable and players get no reward upon collision. 
%     \item \textbf{SMAA} \citep{xu2023competing} assume that players averagely share the resources' rewards when they choose the same resource.
% \end{itm}

We benchmark our approach against the following baselines.
\begin{itemize}
    \item \textbf{TotalReward}: Proposed by \citet{bande2019multi}, this method emphasizes a shareable reward case, aiming to maximize the collective reward for all players. In scenarios where $N < K$, the optimal strategy for each round is to have all players select the top $N$ arms. When $N \ge K$, every arm should be selected in every round. This strategy operates in an explore-then-commit manner. Specifically, during the initial $\alpha \log T$ rounds, the algorithm explores arms randomly. Thereafter, it commits to the optimal strategy in each round. The hyper-parameter $\alpha$ is chosen from the set $\{100, 200, 500, 1000, 2000\}$.
    \item \textbf{SelfishRobustMMAB}: Proposed by \citet{boursier2020selfish}, this algorithm aims to find the most efficient PNE. However, it assumes that when multiple players select the same resource, none receive a payoff. Under this context, the method provides algorithms resistant to the deviations of individual selfish players. The algorithm utilizes KL-UCB and a coefficient $\beta$ to determine the exploration of sub-optimal arms. The hyper-parameter $\beta$ is examined over the range $\{0.01, 0.02, 0.05, 0.1, 0.2, 0.5\}$.
    \item \textbf{SMAA}: According to \citet{xu2023competing}, this approach also aims to identify the PNE. It assumes an averaging allocation model, meaning players equally divide the payoffs of resources when selecting the same one. Similar to \citep{boursier2020selfish}, a hyper-parameter $\beta$ is utilized to modulate the exploration of sub-optimal arms. We explore values for $\beta$ in the set $\{0.01, 0.02, 0.05, 0.1, 0.2, 0.5\}$.
\end{itemize}

As for our method, we slightly abuse the notation for clarity. During the exploration phase, each arm is explored for a duration of $c_1 K^2 \log T / \delta^2$. The value of $\Gamma$ is set to $\delta / (4N)$ and $c_3$ is set to $200$. Our search encompasses various hyper-parameters, including $c_1$ from $\{0.001, 0.01\}$, $c_2$ from $\{1000, 5000\}$, $\eta$ from $\{1, 2\}$, and $\epsilon$ from $\{0.001, 0.003, 0.005\}$.

\subsection{Analysis}
We conducted experiments across 50 unique simulations, repeatedly sampling the resources' payoff distributions and players' weights. Our method, along with baseline approaches, was assessed by measuring the average regret (given in \cref{eq:regret}) across all players and tracking the number of rounds in which players deviated from the most efficient PNE (given in \cref{eq:non-eq}). The experimental results are presented in \cref{fig:exp}.

From the figure, we can observe that our methodology consistently surpasses all baselines across different data-generating processes. Given that none of the baselines is designed specifically for our PPA-Game, they often fail to identify the PNE. Consequently, their performance curves are linear across all setups, indicating that both regrets and the count of Non-PNE rounds increase linearly with $T$. By contrast, our method exhibits a diminishing slope as $T$ grows, reinforcing the idea that our approach eventually converges to the most efficient PNE. These experimental findings also demonstrate that the PNE solution can vary significantly based on game formulations, emphasizing the pivotal role of precise game formulation in practical scenarios.

% To further demonstrate the stability of our method, we conduct experiments on a fixed game configuration while varying the realized resources' rewards across 50 rounds and report the standard deviation for different runs. Details of the experiment can be found in \cref{sect:app-uncertainty}. \xrz{todo}

% Furthermore, as $N$ and $K$ expand, our approach necessitates a larger $T$ to achieve convergence to the most efficient PNE. This outcome is anticipated given that increased values of $N$ and $K$ lead to a decreased $\delta$ as indicated in \Eqref{eq:delta}. As a result, an extended exploration phase is inevitable to precisely estimate resources' payoffs and ensure accurate PNE computation.
\section{Conclusions} \label{sect:conclusion}
In summary, we introduce the PPA-Game as a new framework for analyzing online content creator competition, investigating the properties of PNE. We extend this to the Multi-player Multi-Armed Bandit framework, studying creators' adaptive strategies in dynamic demand settings. Additionally, we propose and validate a new online algorithm for content optimization, backed by theoretical guarantees and empirical evidence, demonstrating its effectiveness in enhancing long-term creator performance.

\bibliographystyle{plainnat}
\bibliography{references}

\begin{thebibliography}{71}
\providecommand{\natexlab}[1]{#1}
\providecommand{\url}[1]{\texttt{#1}}
\expandafter\ifx\csname urlstyle\endcsname\relax
  \providecommand{\doi}[1]{doi: #1}\else
  \providecommand{\doi}{doi: \begingroup \urlstyle{rm}\Url}\fi

\bibitem[Agrawal and Devanur(2016)]{agrawal2016linear}
Shipra Agrawal and Nikhil Devanur.
\newblock Linear contextual bandits with knapsacks.
\newblock \emph{Advances in Neural Information Processing Systems}, 29, 2016.

\bibitem[Anbarci et~al.(2023)Anbarci, Cingiz, and Ismail]{anbarci2023proportional}
Nejat Anbarci, Kutay Cingiz, and Mehmet~S Ismail.
\newblock Proportional resource allocation in dynamic n-player blotto games.
\newblock \emph{Mathematical Social Sciences}, 125:\penalty0 94--100, 2023.

\bibitem[Anderson et~al.(1992)Anderson, De~Palma, and Thisse]{anderson1992discrete}
Simon~P Anderson, Andre De~Palma, and Jacques-Francois Thisse.
\newblock \emph{Discrete choice theory of product differentiation}.
\newblock MIT press, 1992.

\bibitem[Arad and Rubinstein(2012)]{arad2012multi}
Ayala Arad and Ariel Rubinstein.
\newblock Multi-dimensional iterative reasoning in action: The case of the colonel blotto game.
\newblock \emph{Journal of Economic Behavior \& Organization}, 84\penalty0 (2):\penalty0 571--585, 2012.

\bibitem[Bande and Veeravalli(2019)]{bande2019multi}
Meghana Bande and Venugopal~V Veeravalli.
\newblock Multi-user multi-armed bandits for uncoordinated spectrum access.
\newblock In \emph{2019 International Conference on Computing, Networking and Communications (ICNC)}, pages 653--657. IEEE, 2019.

\bibitem[Bande et~al.(2021)Bande, Magesh, and Veeravalli]{bande2021dynamic}
Meghana Bande, Akshayaa Magesh, and Venugopal~V Veeravalli.
\newblock Dynamic spectrum access using stochastic multi-user bandits.
\newblock \emph{IEEE Wireless Communications Letters}, 10\penalty0 (5):\penalty0 953--956, 2021.

\bibitem[Basu et~al.(2021)Basu, Sankararaman, and Sankararaman]{basu2021beyond}
Soumya Basu, Karthik~Abinav Sankararaman, and Abishek Sankararaman.
\newblock Beyond $\log^2(t)$ regret for decentralized bandits in matching markets.
\newblock In \emph{International Conference on Machine Learning}, pages 705--715. PMLR, 2021.

\bibitem[Ben-Porat and Tennenholtz(2018)]{ben2018game}
Omer Ben-Porat and Moshe Tennenholtz.
\newblock A game-theoretic approach to recommendation systems with strategic content providers.
\newblock \emph{Advances in Neural Information Processing Systems}, 31, 2018.

\bibitem[Ben-Porat et~al.(2019{\natexlab{a}})Ben-Porat, Goren, Rosenberg, and Tennenholtz]{ben2019recommendation}
Omer Ben-Porat, Gregory Goren, Itay Rosenberg, and Moshe Tennenholtz.
\newblock From recommendation systems to facility location games.
\newblock In \emph{Proceedings of the AAAI Conference on Artificial Intelligence}, volume~33, pages 1772--1779, 2019{\natexlab{a}}.

\bibitem[Ben-Porat et~al.(2019{\natexlab{b}})Ben-Porat, Rosenberg, and Tennenholtz]{ben2019convergence}
Omer Ben-Porat, Itay Rosenberg, and Moshe Tennenholtz.
\newblock Convergence of learning dynamics in information retrieval games.
\newblock In \emph{Proceedings of the AAAI Conference on Artificial Intelligence}, volume~33, pages 1780--1787, 2019{\natexlab{b}}.

\bibitem[Ben-Porat et~al.(2020)Ben-Porat, Rosenberg, and Tennenholtz]{ben2020content}
Omer Ben-Porat, Itay Rosenberg, and Moshe Tennenholtz.
\newblock Content provider dynamics and coordination in recommendation ecosystems.
\newblock \emph{Advances in Neural Information Processing Systems}, 33:\penalty0 18931--18941, 2020.

\bibitem[Benblidia et~al.(2021)Benblidia, Brik, Esseghir, and Merghem-Boulahia]{benblidia2021renewable}
Mohammed~Anis Benblidia, Bouziane Brik, Moez Esseghir, and Leila Merghem-Boulahia.
\newblock A renewable energy-aware power allocation for cloud data centers: A game theory approach.
\newblock \emph{Computer Communications}, 179:\penalty0 102--111, 2021.

\bibitem[Besson and Kaufmann(2018)]{besson2018multi}
Lilian Besson and Emilie Kaufmann.
\newblock Multi-player bandits revisited.
\newblock In \emph{Algorithmic Learning Theory}, pages 56--92. PMLR, 2018.

\bibitem[Bhawalkar et~al.(2014)Bhawalkar, Gairing, and Roughgarden]{bhawalkar2014weighted}
Kshipra Bhawalkar, Martin Gairing, and Tim Roughgarden.
\newblock Weighted congestion games: the price of anarchy, universal worst-case examples, and tightness.
\newblock \emph{ACM Transactions on Economics and Computation (TEAC)}, 2\penalty0 (4):\penalty0 1--23, 2014.

\bibitem[Bistritz and Leshem(2018)]{bistritz2018distributed}
Ilai Bistritz and Amir Leshem.
\newblock Distributed multi-player bandits-a game of thrones approach.
\newblock \emph{Advances in Neural Information Processing Systems}, 31, 2018.

\bibitem[Bistritz and Leshem(2021)]{bistritz2021game}
Ilai Bistritz and Amir Leshem.
\newblock Game of thrones: Fully distributed learning for multiplayer bandits.
\newblock \emph{Mathematics of Operations Research}, 46\penalty0 (1):\penalty0 159--178, 2021.

\bibitem[Boursier and Perchet(2019)]{boursier2019sic}
Etienne Boursier and Vianney Perchet.
\newblock Sic-mmab: Synchronisation involves communication in multiplayer multi-armed bandits.
\newblock \emph{Advances in Neural Information Processing Systems}, 32, 2019.

\bibitem[Boursier and Perchet(2020)]{boursier2020selfish}
Etienne Boursier and Vianney Perchet.
\newblock Selfish robustness and equilibria in multi-player bandits.
\newblock In \emph{Conference on Learning Theory}, pages 530--581. PMLR, 2020.

\bibitem[Boursier and Perchet(2022)]{boursier2022survey}
Etienne Boursier and Vianney Perchet.
\newblock A survey on multi-player bandits.
\newblock \emph{arXiv preprint arXiv:2211.16275}, 2022.

\bibitem[Boutilier et~al.(2023)Boutilier, Mladenov, and Tennenholtz]{boutilier2023modeling}
Craig Boutilier, Martin Mladenov, and Guy Tennenholtz.
\newblock Modeling recommender ecosystems: Research challenges at the intersection of mechanism design, reinforcement learning and generative models.
\newblock \emph{arXiv preprint arXiv:2309.06375}, 2023.

\bibitem[Boyarski et~al.(2021)Boyarski, Leshem, and Krishnamurthy]{boyarski2021distributed}
Tomer Boyarski, Amir Leshem, and Vikram Krishnamurthy.
\newblock Distributed learning in congested environments with partial information.
\newblock \emph{arXiv preprint arXiv:2103.15901}, 2021.

\bibitem[Br{\^a}nzei and Peres(2021)]{branzei2021multiplayer}
Simina Br{\^a}nzei and Yuval Peres.
\newblock Multiplayer bandit learning, from competition to cooperation.
\newblock In \emph{Conference on Learning Theory}, pages 679--723. PMLR, 2021.

\bibitem[Bubeck et~al.(2012)Bubeck, Cesa-Bianchi, et~al.]{bubeck2012regret}
S{\'e}bastien Bubeck, Nicolo Cesa-Bianchi, et~al.
\newblock Regret analysis of stochastic and nonstochastic multi-armed bandit problems.
\newblock \emph{Foundations and Trends{\textregistered} in Machine Learning}, 5\penalty0 (1):\penalty0 1--122, 2012.

\bibitem[Caragiannis and Voudouris(2016)]{caragiannis2016welfare}
Ioannis Caragiannis and Alexandros~A Voudouris.
\newblock Welfare guarantees for proportional allocations.
\newblock \emph{Theory of Computing Systems}, 59:\penalty0 581--599, 2016.

\bibitem[Chung et~al.(2012)Chung, Lam, Liu, and Mitzenmacher]{chung2012chernoff}
Kai-Min Chung, Henry Lam, Zhenming Liu, and Michael Mitzenmacher.
\newblock Chernoff-hoeffding bounds for markov chains: Generalized and simplified.
\newblock In \emph{29th International Symposium on Theoretical Aspects of Computer Science}, page 124, 2012.

\bibitem[Ferreira et~al.(2022)Ferreira, Parthasarathy, and Sekar]{ferreira2022learning}
Kris~J Ferreira, Sunanda Parthasarathy, and Shreyas Sekar.
\newblock Learning to rank an assortment of products.
\newblock \emph{Management Science}, 68\penalty0 (3):\penalty0 1828--1848, 2022.

\bibitem[Friedman(1958)]{friedman1958game}
Lawrence Friedman.
\newblock Game-theory models in the allocation of advertising expenditures.
\newblock \emph{Operations research}, 6\penalty0 (5):\penalty0 699--709, 1958.

\bibitem[Hoefer and Jiamjitrak(2017)]{hoefer2017proportional}
Martin Hoefer and Wanchote Jiamjitrak.
\newblock On proportional allocation in hedonic games.
\newblock In \emph{International Symposium on Algorithmic Game Theory}, pages 307--319. Springer, 2017.

\bibitem[Hron et~al.(2022)Hron, Krauth, Jordan, Kilbertus, and Dean]{hron2022modeling}
Jiri Hron, Karl Krauth, Michael~I Jordan, Niki Kilbertus, and Sarah Dean.
\newblock Modeling content creator incentives on algorithm-curated platforms.
\newblock \emph{arXiv preprint arXiv:2206.13102}, 2022.

\bibitem[Huang et~al.(2022)Huang, Combes, and Trinh]{huang2022towards}
Wei Huang, Richard Combes, and Cindy Trinh.
\newblock Towards optimal algorithms for multi-player bandits without collision sensing information.
\newblock In \emph{Conference on Learning Theory}, pages 1990--2012. PMLR, 2022.

\bibitem[Jagadeesan et~al.(2021)Jagadeesan, Wei, Wang, Jordan, and Steinhardt]{jagadeesan2021learning}
Meena Jagadeesan, Alexander Wei, Yixin Wang, Michael Jordan, and Jacob Steinhardt.
\newblock Learning equilibria in matching markets from bandit feedback.
\newblock \emph{Advances in Neural Information Processing Systems}, 34:\penalty0 3323--3335, 2021.

\bibitem[Jagadeesan et~al.(2022)Jagadeesan, Garg, and Steinhardt]{jagadeesan2022supply}
Meena Jagadeesan, Nikhil Garg, and Jacob Steinhardt.
\newblock Supply-side equilibria in recommender systems.
\newblock \emph{arXiv preprint arXiv:2206.13489}, 2022.

\bibitem[Johari(2007)]{johari2007price}
Ramesh Johari.
\newblock The price of anarchy and the design of scalable resource allocation mechanisms.
\newblock \emph{Algorithmic Game Theory}, pages 543--568, 2007.

\bibitem[Jouini et~al.(2009)Jouini, Ernst, Moy, and Palicot]{jouini2009multi}
Wassim Jouini, Damien Ernst, Christophe Moy, and Jacques Palicot.
\newblock Multi-armed bandit based policies for cognitive radio's decision making issues.
\newblock In \emph{2009 3rd International Conference on Signals, Circuits and Systems (SCS)}, pages 1--6. IEEE, 2009.

\bibitem[Jouini et~al.(2010)Jouini, Ernst, Moy, and Palicot]{jouini2010upper}
Wassim Jouini, Damien Ernst, Christophe Moy, and Jacques Palicot.
\newblock Upper confidence bound based decision making strategies and dynamic spectrum access.
\newblock In \emph{2010 IEEE International Conference on Communications}, pages 1--5. IEEE, 2010.

\bibitem[Kelly(1997)]{kelly1997charging}
Frank Kelly.
\newblock Charging and rate control for elastic traffic.
\newblock \emph{European transactions on Telecommunications}, 8\penalty0 (1):\penalty0 33--37, 1997.

\bibitem[Klumpp et~al.(2019)Klumpp, Konrad, and Solomon]{klumpp2019dynamics}
Tilman Klumpp, Kai~A Konrad, and Adam Solomon.
\newblock The dynamics of majoritarian blotto games.
\newblock \emph{Games and Economic Behavior}, 117:\penalty0 402--419, 2019.

\bibitem[Kveton et~al.(2015)Kveton, Szepesvari, Wen, and Ashkan]{kveton2015cascading}
Branislav Kveton, Csaba Szepesvari, Zheng Wen, and Azin Ashkan.
\newblock Cascading bandits: Learning to rank in the cascade model.
\newblock In \emph{International conference on machine learning}, pages 767--776. PMLR, 2015.

\bibitem[Lai et~al.(1985)Lai, Robbins, et~al.]{lai1985asymptotically}
Tze~Leung Lai, Herbert Robbins, et~al.
\newblock Asymptotically efficient adaptive allocation rules.
\newblock \emph{Advances in applied mathematics}, 6\penalty0 (1):\penalty0 4--22, 1985.

\bibitem[Lattimore et~al.(2016)Lattimore, Lattimore, and Reid]{lattimore2016causal}
Finnian Lattimore, Tor Lattimore, and Mark~D Reid.
\newblock Causal bandits: Learning good interventions via causal inference.
\newblock \emph{Advances in Neural Information Processing Systems}, 29, 2016.

\bibitem[Lattimore and Szepesv{\'a}ri(2020)]{lattimore2020bandit}
Tor Lattimore and Csaba Szepesv{\'a}ri.
\newblock \emph{Bandit algorithms}.
\newblock Cambridge University Press, 2020.

\bibitem[Liu and Zhao(2008)]{liu2008restless}
Keqin Liu and Qing Zhao.
\newblock A restless bandit formulation of opportunistic access: Indexablity and index policy.
\newblock In \emph{2008 5th IEEE Annual Communications Society Conference on Sensor, Mesh and Ad Hoc Communications and Networks Workshops}, pages 1--5. IEEE, 2008.

\bibitem[Liu et~al.(2020)Liu, Mania, and Jordan]{liu2020competing}
Lydia~T Liu, Horia Mania, and Michael Jordan.
\newblock Competing bandits in matching markets.
\newblock In \emph{International Conference on Artificial Intelligence and Statistics}, pages 1618--1628. PMLR, 2020.

\bibitem[Liu et~al.(2021)Liu, Ruan, Mania, and Jordan]{liu2021bandit}
Lydia~T Liu, Feng Ruan, Horia Mania, and Michael~I Jordan.
\newblock Bandit learning in decentralized matching markets.
\newblock \emph{Journal of Machine Learning Research}, 22:\penalty0 1--34, 2021.

\bibitem[Lugosi and Mehrabian(2022)]{lugosi2022multiplayer}
G{\'a}bor Lugosi and Abbas Mehrabian.
\newblock Multiplayer bandits without observing collision information.
\newblock \emph{Mathematics of Operations Research}, 47\penalty0 (2):\penalty0 1247--1265, 2022.

\bibitem[Magesh and Veeravalli(2021)]{magesh2021decentralized}
Akshayaa Magesh and Venugopal~V Veeravalli.
\newblock Decentralized heterogeneous multi-player multi-armed bandits with non-zero rewards on collisions.
\newblock \emph{IEEE Transactions on Information Theory}, 68\penalty0 (4):\penalty0 2622--2634, 2021.

\bibitem[Marden et~al.(2014)Marden, Young, and Pao]{marden2014achieving}
Jason~R Marden, H~Peyton Young, and Lucy~Y Pao.
\newblock Achieving pareto optimality through distributed learning.
\newblock \emph{SIAM Journal on Control and Optimization}, 52\penalty0 (5):\penalty0 2753--2770, 2014.

\bibitem[Mladenov et~al.(2020)Mladenov, Creager, Ben-Porat, Swersky, Zemel, and Boutilier]{mladenov2020optimizing}
Martin Mladenov, Elliot Creager, Omer Ben-Porat, Kevin Swersky, Richard Zemel, and Craig Boutilier.
\newblock Optimizing long-term social welfare in recommender systems: A constrained matching approach.
\newblock In \emph{International Conference on Machine Learning}, pages 6987--6998. PMLR, 2020.

\bibitem[Nisan et~al.(2007)Nisan, Roughgarden, Tardos, and Vazirani]{nisan2007algorithmic}
Noam Nisan, Tim Roughgarden, Eva Tardos, and Vijay~V Vazirani.
\newblock \emph{Algorithmic Game Theory}.
\newblock Cambridge University Press, 2007.

\bibitem[Pacchiano et~al.(2021)Pacchiano, Bartlett, and Jordan]{pacchiano2021instance}
Aldo Pacchiano, Peter Bartlett, and Michael~I Jordan.
\newblock An instance-dependent analysis for the cooperative multi-player multi-armed bandit.
\newblock \emph{arXiv preprint arXiv:2111.04873}, 2021.

\bibitem[Patro et~al.(2020)Patro, Biswas, Ganguly, Gummadi, and Chakraborty]{patro2020fairrec}
Gourab~K Patro, Arpita Biswas, Niloy Ganguly, Krishna~P Gummadi, and Abhijnan Chakraborty.
\newblock Fairrec: Two-sided fairness for personalized recommendations in two-sided platforms.
\newblock In \emph{Proceedings of the web conference 2020}, pages 1194--1204, 2020.

\bibitem[Pradelski and Young(2012)]{pradelski2012learning}
Bary~SR Pradelski and H~Peyton Young.
\newblock Learning efficient nash equilibria in distributed systems.
\newblock \emph{Games and Economic behavior}, 75\penalty0 (2):\penalty0 882--897, 2012.

\bibitem[Prasad et~al.(2023)Prasad, Mladenov, and Boutilier]{prasad2023content}
Siddharth Prasad, Martin Mladenov, and Craig Boutilier.
\newblock Content prompting: Modeling content provider dynamics to improve user welfare in recommender ecosystems.
\newblock \emph{arXiv preprint arXiv:2309.00940}, 2023.

\bibitem[Rosenski et~al.(2016)Rosenski, Shamir, and Szlak]{rosenski2016multi}
Jonathan Rosenski, Ohad Shamir, and Liran Szlak.
\newblock Multi-player bandits--a musical chairs approach.
\newblock In \emph{International Conference on Machine Learning}, pages 155--163. PMLR, 2016.

\bibitem[Shi and Shen(2021)]{shi2021multi}
Chengshuai Shi and Cong Shen.
\newblock Multi-player multi-armed bandits with collision-dependent reward distributions.
\newblock \emph{IEEE Transactions on Signal Processing}, 69:\penalty0 4385--4402, 2021.

\bibitem[Shi et~al.(2020)Shi, Xiong, Shen, and Yang]{shi2020decentralized}
Chengshuai Shi, Wei Xiong, Cong Shen, and Jing Yang.
\newblock Decentralized multi-player multi-armed bandits with no collision information.
\newblock In \emph{International Conference on Artificial Intelligence and Statistics}, pages 1519--1528. PMLR, 2020.

\bibitem[Shi et~al.(2021)Shi, Xiong, Shen, and Yang]{shi2021heterogeneous}
Chengshuai Shi, Wei Xiong, Cong Shen, and Jing Yang.
\newblock Heterogeneous multi-player multi-armed bandits: Closing the gap and generalization.
\newblock \emph{Advances in neural information processing systems}, 34:\penalty0 22392--22404, 2021.

\bibitem[Slivkins et~al.(2019)]{slivkins2019introduction}
Aleksandrs Slivkins et~al.
\newblock Introduction to multi-armed bandits.
\newblock \emph{Foundations and Trends{\textregistered} in Machine Learning}, 12\penalty0 (1-2):\penalty0 1--286, 2019.

\bibitem[Statista(2023)]{statista2023}
Statista.
\newblock Most popular categories on tiktok worldwide as of 2023, by hashtag views, 2023.
\newblock URL \url{https://www.statista.com/statistics/1130988/most-popular-categories-tiktok-worldwide-hashtag-views/}.
\newblock Accessed: October 13, 2023.

\bibitem[Syrgkanis and Tardos(2013)]{syrgkanis2013composable}
Vasilis Syrgkanis and Eva Tardos.
\newblock Composable and efficient mechanisms.
\newblock In \emph{Proceedings of the forty-fifth annual ACM symposium on Theory of computing}, pages 211--220, 2013.

\bibitem[Wang et~al.(2022{\natexlab{a}})Wang, Xie, and Lui]{wang2022multi}
Xuchuang Wang, Hong Xie, and John Lui.
\newblock Multi-player multi-armed bandits with finite shareable resources arms: Learning algorithms \& applications.
\newblock \emph{arXiv preprint arXiv:2204.13502}, 2022{\natexlab{a}}.

\bibitem[Wang et~al.(2022{\natexlab{b}})Wang, Xie, and Lui]{wang2022multiple}
Xuchuang Wang, Hong Xie, and John~CS Lui.
\newblock Multiple-play stochastic bandits with shareable finite-capacity arms.
\newblock In \emph{International Conference on Machine Learning}, pages 23181--23212. PMLR, 2022{\natexlab{b}}.

\bibitem[Xu et~al.(2022)Xu, Zhang, Li, Zhang, Chen, and Cui]{xu2022product}
Renzhe Xu, Xingxuan Zhang, Bo~Li, Yafeng Zhang, Xiaolong Chen, and Peng Cui.
\newblock Product ranking for revenue maximization with multiple purchases.
\newblock In \emph{Advances in Neural Information Processing Systems}, 2022.

\bibitem[Xu et~al.(2023)Xu, Wang, Zhang, Li, and Cui]{xu2023competing}
Renzhe Xu, Haotian Wang, Xingxuan Zhang, Bo~Li, and Peng Cui.
\newblock Competing for shareable arms in multi-player multi-armed bandits.
\newblock In \emph{Proceedings of the 40th International Conference on Machine Learning}, pages 38674--38706. PMLR, 2023.

\bibitem[Yao et~al.(2023)Yao, Li, Sankararaman, Liao, Zhu, Wang, Wang, and Xu]{yao2023rethinking}
Fan Yao, Chuanhao Li, Karthik~Abinav Sankararaman, Yiming Liao, Yan Zhu, Qifan Wang, Hongning Wang, and Haifeng Xu.
\newblock Rethinking incentives in recommender systems: are monotone rewards always beneficial?
\newblock \emph{Advances in Neural Information Processing Systems}, 36, 2023.

\bibitem[Yao et~al.(2024{\natexlab{a}})Yao, Liao, Liu, Nie, Wang, Xu, and Wang]{yao2024unveiling}
Fan Yao, Yiming Liao, Jingzhou Liu, Shaoliang Nie, Qifan Wang, Haifeng Xu, and Hongning Wang.
\newblock Unveiling user satisfaction and creator productivity trade-offs in recommendation platforms.
\newblock \emph{arXiv preprint arXiv:2410.23683}, 2024{\natexlab{a}}.

\bibitem[Yao et~al.(2024{\natexlab{b}})Yao, Liao, Wu, Li, Zhu, Yang, Liu, Wang, Xu, and Wang]{yao2024user}
Fan Yao, Yiming Liao, Mingzhe Wu, Chuanhao Li, Yan Zhu, James Yang, Jingzhou Liu, Qifan Wang, Haifeng Xu, and Hongning Wang.
\newblock User welfare optimization in recommender systems with competing content creators.
\newblock In \emph{Proceedings of the 30th ACM SIGKDD Conference on Knowledge Discovery and Data Mining}, pages 3874--3885, 2024{\natexlab{b}}.

\bibitem[Young(1993)]{young1993evolution}
H~Peyton Young.
\newblock The evolution of conventions.
\newblock \emph{Econometrica: Journal of the Econometric Society}, pages 57--84, 1993.

\bibitem[Young(2009)]{young2009learning}
H~Peyton Young.
\newblock Learning by trial and error.
\newblock \emph{Games and economic behavior}, 65\penalty0 (2):\penalty0 626--643, 2009.

\bibitem[Youssef et~al.(2021)Youssef, Veeravalli, Farah, Nour, and Douillard]{youssef2021resource}
Marie-Jos{\'e}pha Youssef, Venugopal~V Veeravalli, Joumana Farah, Charbel~Abdel Nour, and Catherine Douillard.
\newblock Resource allocation in noma-based self-organizing networks using stochastic multi-armed bandits.
\newblock \emph{IEEE Transactions on Communications}, 69\penalty0 (9):\penalty0 6003--6017, 2021.

\bibitem[Zhan et~al.(2021)Zhan, Christakopoulou, Le, Ooi, Mladenov, Beutel, Boutilier, Chi, and Chen]{zhan2021towards}
Ruohan Zhan, Konstantina Christakopoulou, Ya~Le, Jayden Ooi, Martin Mladenov, Alex Beutel, Craig Boutilier, Ed~Chi, and Minmin Chen.
\newblock Towards content provider aware recommender systems: A simulation study on the interplay between user and provider utilities.
\newblock In \emph{Proceedings of the Web Conference 2021}, pages 3872--3883, 2021.

\end{thebibliography}

\clearpage

\appendix

\section{Further Analysis on Pure Nash Equilibrium (PNE)}
\subsection{The Non-existence of PNE in Certain Scenarios} \label{sect:app-non-existence}
To show the fact that a Pure Nash Equilibrium (PNE) might not always exist, we present a specific counter-example below.

\begin{example}\label{example:pne}
    Let's consider a configuration with $N = 7$ players and $K = 4$ resources. Assign the resources' payoffs as $\bfmu = (1, 0.4904, 0.2447, 0.1065)$. The matrix representing the weights of the players for each resource is given by Equation~\cref{eq:counter-example}. A comprehensive enumeration of possible strategy profiles confirms the non-existence of any PNE in this particular configuration. 
\end{example}

% \begin{equation*}
%     \bfW = \begin{pmatrix}
%         1 & 0 & 0 & 0 \\
%         0 & 0 & 1 & 0  \\
%         0 & 0 & 0 & 1 \\
%         0.1657 & 0 & 0.5755 & 0  \\
%         0 & 0.0314 & 0 & 0.5623 \\
%         0 & 0 & 0.4217 & 0.9517 \\
%         0 & 1 & 0.4057 & 0 \\
%     \end{pmatrix}.
% \end{equation*}

\begin{figure}[h]
    \centering
    \begin{minipage}{0.45\linewidth}
        \centering
        \includegraphics[height=0.4\linewidth]{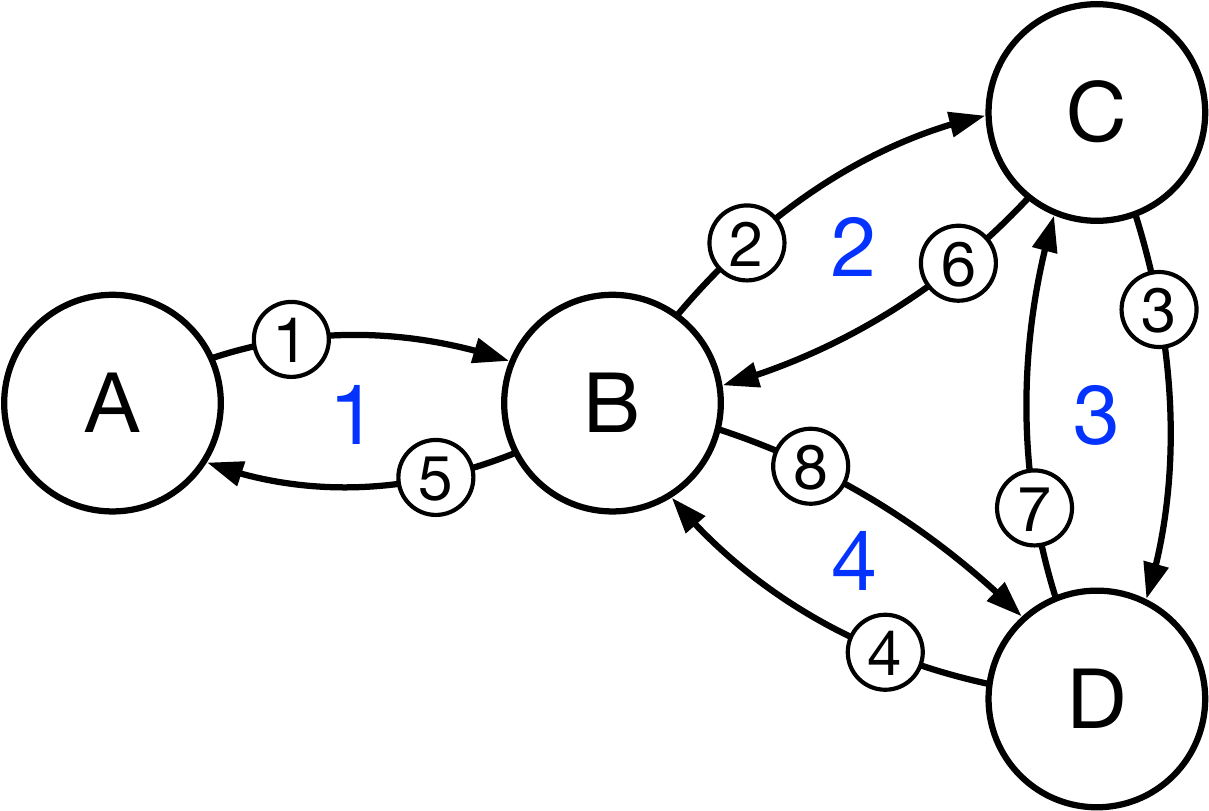}
        \caption{A showcase in which PNE does not exist.}
        \label{fig:example-pne-non}
    \end{minipage}
    \hfill
    \begin{minipage}{0.45\linewidth}
        \begin{equation*} \label{eq:counter-example}
            \bfW = \begin{pmatrix}
                0.1657 & 0 & 0.5755 & 0  \\
                0 & 0.0314 & 0 & 0.5623 \\
                0 & 0 & 0.4217 & 0.9517 \\
                0 & 1 & 0.4057 & 0 \\
                1 & 0 & 0 & 0 \\
                0 & 0 & 1 & 0 \\
                0 & 0 & 0 & 1 \\
            \end{pmatrix}.
        \end{equation*}
    \end{minipage}
\end{figure}

We find an improvement cycle for players $1$, $2$, $3$, and $4$ in this example, as depicted in \cref{fig:example-pne-non}. In this figure, the large circles labeled A, B, C, and D represent resources. The blue numbers between these resource circles signify the players and the sequence of their actions, while the numbers inside the larger circles indicate the respective rounds. At round $1$, player $1$ perceives that resource $B$ offers a more favorable payoff and hence shifts to resource $B$. This transition decreases the payoff for player $2$, prompting him to switch to resource $C$ by round $2$. The cycle continues, and by round $4$, player $4$ transitions from resource $D$ to $B$, which subsequently reduces the payoff for player $1$. This leads player $1$ to revert to resource $A$ by round $5$, and in turn, prompts player $2$ to shift back from resource $C$ to $B$ by round $6$, thus perpetuating the cycle.

\subsection{High Probability of PNE Existence} \label{sect:app-high-probability}
% \paragraph{Randomized Experiments for Large $N$ and $K$}
We experiment with cases where the values of $N$ and $K$ are large, using Monte Carlo sampling. We randomly select values for $N$ and $K$, uniformly from $[10, 100]$. Moreover, for each player-resource pair $(j,k)$, with $j \in [N]$ and $k \in [K]$, weights $w_{j,k}$ and payoffs $\mu_k$ are sampled from four different types of distributions.
\begin{itemize}
\item \textbf{Uniform Distribution}: We sample payoffs and weights using a uniform distribution $U(0, 1)$.
\item \textbf{Truncated Normal Distribution}: We sample payoffs and weights using a normal distribution $N(0.5, 0.1^2)$ and truncate them into $[0, 1]$. This distribution provides a scenario where values are concentrated around the mean of $0.5$.
\item \textbf{Truncated T-Distribution}: We utilize a t-distribution with 5 degrees of freedom, adjusted by adding $0.5$ to center the mean at $[0, 1]$. Similarly, we truncate the values into $[0, 1]$. This distribution also provides a scenario where values are concentrated around the mean of $0.5$.
\item \textbf{Beta Distribution}: We opt for a beta distribution with parameters $\alpha = \beta = 0.5$. This distribution is skewed towards the extremes of $0$ and $1$, representing scenarios where values are highly polarized.
\end{itemize}
For every randomly generated game configuration, considering the computational complexity of evaluating all strategy profiles, we trace an improvement path starting from the strategy profile $\bfpi(0) = (0, 0, \dots, 0)$. If this improvement path concludes within finite steps, it implies the existence of a PNE. Out of the $5,000,000$ randomly generated game configurations in each distribution (\textit{i.e.}, totaling $20,000,000$ randomly generated configurations), only one exhibited an infinite improvement path. This suggests the existence of a PNE in the remaining $19,999,999$ configurations. A comprehensive evaluation of strategy profiles confirmed the non-existence of a PNE in the only exceptional case, concisely illustrated in \cref{sect:app-non-existence}.

% \paragraph{Enumeration experiments for small $N$ and $K$ and discretized values of $\bfmu$ and $\bfW$}
% Furthermore, we list all possible cases for small values of $N$ and $K$, and discretized values in $\bfmu$ and $\bfW$. Concretely, we define $N = K = 2$ and for every $j \in [N], k \in [K]$, we set $w_{j,k}, \mu_k \in \{0.0, 0.1, 0.2, \dots, 0.9, 1.0\}$. In investigating every potential trial (amounting to $11^6 = 1771561$ scenarios), we discover that a PNE exists in each game setup.

% Based on these two experiments, we can deduce that the existence of PNE is common in reality.

\section{Details About \cref{example:pne-and-nopne}} \label{app:examples}
The details of toy cases in \cref{example:pne-and-nopne} can be found in \cref{tab:case-pne-larger,tab:case-nopne-larger}.
\begin{table}[th]
    \centering
    \caption{A showcase that $\delta^{\PNEBound} > \delta^{\NOPNEBound}$. Set $N = 3$ and $K = 2$. Assume the resources' payoffs to be $[1, 0.7]$. Suppose each player has the same weights on the resources (as per Condition (2) in \cref{thrm:pne-exists-special}), with weights $[1, 0.8, 0.4]$, respectively. We list all strategy profiles and find that $\delta^{\PNEBound} = \min\{0.052, 0.133\} = 0.052 > 0.022 = \min\{0.518, 0.022, 0.052, 0.133, 0.356, 0.873\} = \delta^{\NOPNEBound}$.}
    \resizebox{\linewidth}{!}{
    \begin{tabular}{c|c|c|c}
        \toprule
        Strategy profile & PNE? & Least reward change by deviation if PNE & $\delta'$-Nash equilibrium if not PNE \\
        \midrule
        $(1, 1, 1)$ & \xmark & & $0.518$. Deviated strategy $(1, 1, 2)$ \\
        $(1, 1, 2)$ & \xmark & & $0.022$. Deviated strategy $(1, 2, 2)$ \\
        $(1, 2, 1)$ & \cmark & $0.052$. Deviated strategy $(1, 2, 2)$ \\
        $(1, 2, 2)$ & \xmark & & $0.052$. Deviated strategy $(1, 2, 1)$ \\
        $(2, 1, 1)$ & \cmark & $0.133$. Deviated strategy $(2, 1, 2)$ \\
        $(2, 1, 2)$ & \xmark & & $0.133$. Deviated strategy $(2, 1, 1)$ \\
        $(2, 2, 1)$ & \xmark & & $0.356$. Deviated strategy $(2, 1, 1)$ \\
        $(2, 2, 2)$ & \xmark & & $0.873$. Deviated strategy $(2, 2, 1)$ \\
        \bottomrule
    \end{tabular}}
    \label{tab:case-pne-larger}
\end{table}

\begin{table}[th]
    \centering
    \caption{A showcase that $\delta^{\PNEBound} < \delta^{\NOPNEBound}$. Set $N = 3$ and $K = 2$. Assume the resources' payoffs to be $[1, 0.6]$. Suppose each player has the same weights on the resources (as per Condition (2) in \cref{thrm:pne-exists-special}), with weights $[1, 2 / 3, 4/9]$, respectively. We list all strategy profiles and find that $\delta^{\PNEBound} = \min\{0.040, 0.068, 0.126\} = 0.040 < 0.068 = \min\{0.389, 0.068, 0.215, 0.360, 0.874\} = \delta^{\NOPNEBound}$.}
    \resizebox{\linewidth}{!}{
    \begin{tabular}{c|c|c|c}
        \toprule
        Strategy profile & PNE? & Least reward change by deviation if PNE & $\delta'$-Nash equilibrium if not PNE \\
        \midrule
        $(1, 1, 1)$ & \xmark & & $0.389$. Deviated strategy $(1, 1, 2)$ \\
        $(1, 1, 2)$ & \cmark & $0.040$. Deviated strategy $(1, 2, 2)$ \\
        $(1, 2, 1)$ & \cmark & $0.068$. Deviated strategy $(1, 2, 2)$ \\
        $(1, 2, 2)$ & \xmark & & $0.068$. Deviated strategy $(1, 2, 1)$ \\
        $(2, 1, 1)$ & \cmark & $0.126$. Deviated strategy $(1, 1, 1)$ \\
        $(2, 1, 2)$ & \xmark & & $0.215$. Deviated strategy $(2, 1, 1)$ \\
        $(2, 2, 1)$ & \xmark & & $0.360$. Deviated strategy $(2, 1, 1)$ \\
        $(2, 2, 2)$ & \xmark & & $0.874$. Deviated strategy $(2, 2, 1)$ \\
        \bottomrule
    \end{tabular}}
    \label{tab:case-nopne-larger}
\end{table}

\section{Omitted Proofs}
\subsection{Proof of \cref{thrm:pne-exists}} \label{sect:proof-pne-exists}
We first introduce some notations. For a strategy profile $\bfpi$, define $W_k(\bfpi)$ as the sum of weights of players that choose resource $k$, \textit{i.e.},
\begin{equation*}
    W_k(\bfpi) = \sum_{j=1}^N \bbI[\pi_j = k]w_{j,k}.
\end{equation*}
Furthermore, Let $\ell = (\bfpi(1), \bfpi(2), \dots, \bfpi(t), \bfpi(t + 1) = \bfpi(1))$ be an improvement cycle. For any $s$ such that $\bfpi(s+1)$ is an improvement step of $\bfpi(s)$, define $j(s)$ be the deviated player in this step and the new strategy is $k(s)$, \textit{i.e.}, $\bfpi(s+1) = \left(k(s), \bfpi(s)_{-j(s)}\right)$.
% In addition, at round $s$, we let $B_{k}(\bfpi(s))$ denote the set of players who choose resource $k$ with positive weights. By the assumption, we have
% \begin{equation*}
%     B_{k}(\bfpi(s)) = \left\{j \mid \pi_j(s)=k, w_{j,k} \geq \epsilon_0 \right\}.
% \end{equation*}
Besides, we let $Q_{k}(\ell)$ be the minimal sum of players' weights on the resource $k$ in the improvement cycle $\ell$, \textit{i.e.},
\begin{equation*}
    Q_{k}(\ell) = \min_{\bfpi(s) \in \ell}{\sum_{j \in B_{k}(\bfpi(s))}w_{j,k}} = \min_{\bfpi(s) \in \ell}W_k(\bfpi(s)).
\end{equation*}

We first introduce the following lemmas.

\begin{lemma} \label{lemma:proof-pne-exists-1}
    Let $\ell = (\bfpi(1), \bfpi(2), \dots, \bfpi(t), \bfpi(t + 1) = \bfpi(1))$ be an improvement cycle. For any $1 \le s \le t$, let $j = j(s)$ and $k = k(s)$. Then we have
    \begin{equation*} \label{eq:lemma-smaller-than-Qk}
        \calU_j(\bfpi(s + 1)) \le  \frac{\mu_kw_{j,k}}{Q_{k}(\ell) + w_{j,k}}.
    \end{equation*}
\end{lemma}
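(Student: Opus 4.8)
The plan is to evaluate $\calU_j(\bfpi(s+1))$ exactly via the payoff formula in \Defnref{defn:game} and then replace its denominator with a lower bound of the form $Q_k(\ell) + w_{j,k}$.

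First I would record that, in the profile $\bfpi(s+1) = (k(s), \bfpi(s)_{-j(s)})$, the deviating player $j = j(s)$ is playing resource $k = k(s)$. Hence the first branch of \Eqref{eq:game-formulation} applies and $\calU_j(\bfpi(s+1)) = \mu_k w_{j,k} / W_k(\bfpi(s+1))$. The denominator is legitimately positive: since the step is an improvement step, $\calU_j(\bfpi(s+1)) > \calU_j(\bfpi(s)) \ge 0$, which forces $w_{j,k} > 0$ and hence $W_k(\bfpi(s+1)) \ge w_{j,k} > 0$.

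Next I would track how the total weight on resource $k$ changes during this step. The profiles $\bfpi(s)$ and $\bfpi(s+1)$ differ only in the coordinate of player $j$, who moves onto $k$ and who was not on $k$ beforehand (the deviation is genuinely to a new strategy). Consequently player $j$ contributes nothing to $W_k(\bfpi(s))$ but its full weight to $W_k(\bfpi(s+1))$, so $W_k(\bfpi(s+1)) = W_k(\bfpi(s)) + w_{j,k}$. Because $\bfpi(s)$ is one of the profiles occurring in the cycle $\ell$, the definition of $Q_k(\ell)$ as the minimum of $W_k$ over $\ell$ gives $W_k(\bfpi(s)) \ge Q_k(\ell)$, and therefore $W_k(\bfpi(s+1)) \ge Q_k(\ell) + w_{j,k}$. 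Substituting this lower bound into the denominator of the payoff expression yields \Eqref{eq:lemma-smaller-than-Qk}.

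I do not expect a genuine obstacle: the argument collapses to a one-line substitution once the bookkeeping is in place. The only points needing care are in the middle step—confirming that the deviating player adds \emph{exactly} $w_{j,k}$ to the weight on $k$ (not less), which relies on player $j$ being absent from $k$ in $\bfpi(s)$, and making sure it is $\bfpi(s)$, rather than $\bfpi(s+1)$, that is used to invoke the minimum defining $Q_k(\ell)$, so that the extra $+w_{j,k}$ is retained in the bound.
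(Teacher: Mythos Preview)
Your proposal is correct and mirrors the paper's own proof almost exactly: both first observe that $w_{j,k}>0$ because the step is an improvement, then use $W_k(\bfpi(s+1))=W_k(\bfpi(s))+w_{j,k}\ge Q_k(\ell)+w_{j,k}$ and substitute into the payoff formula. If anything, you are slightly more explicit than the paper in justifying why the added weight is exactly $w_{j,k}$ (player $j$ was not on $k$ in $\bfpi(s)$).
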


\begin{proof}
    Since $\bfpi(s + 1)$ is an improvement step of $\bfpi(s)$, player $j$ earns a larger payoff by deviation in this step, leading to the fact that $w_{j, k} > 0$ and the right-hand side of \cref{eq:lemma-smaller-than-Qk} is well-defined. Furthermore, by the definition of $Q_k(\ell)$, we have
    \begin{equation*}
        W_k(\bfpi(s+1)) = W_k(\bfpi(s)) + w_{j,k} \ge Q_k(\ell) + w_{j,k}.
    \end{equation*}
    As a result,
    \begin{equation*}
        \calU_j(\bfpi(s+ 1)) = \frac{\mu_kw_{j,k}}{W_{k}(\bfpi(s+1))} \le \frac{\mu_kw_{j,k}}{Q_k(\ell) + w_{j,k}}.
    \end{equation*}
    Now the claim follows.
\end{proof}

\begin{lemma} \label{lemma:proof-pne-exists-2}
    Let $\ell = (\bfpi(1), \bfpi(2), \dots, \bfpi(t), \bfpi(t + 1) = \bfpi(1))$ be an improvement cycle. Let $k$ be any resource such that $\exists 1 \le s' \le t$, $k(s') = k$. Then there exist an index $s$ such that $\pi_{j(s)}(s) = k$ and
    \begin{equation*}
        \calU_{j}\left(\bfpi(s)\right)=\frac{\mu_kw_{j,k}}{Q_{k}(\ell) + w_{j,k}} \quad \text{where} \quad j = j(s).
    \end{equation*}
\end{lemma}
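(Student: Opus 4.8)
The plan is to track the single scalar quantity $W_k(\bfpi(s))$ as $s$ runs along the cycle $\ell$ and to locate the precise moment at which a player abandons resource $k$ and thereby drives $W_k$ down to its cyclic minimum $Q_k(\ell)$. First I would record how each improvement step affects $W_k$. Since a deviation always changes the chosen resource, $\pi_{j(s)}(s) \ne k(s)$, so at step $s$ the weight on $k$ is \emph{unchanged} when $j(s)$ moves between two resources other than $k$; it \emph{increases} by $w_{j(s),k}$ when $k(s) = k$ (a player joins $k$); and it \emph{decreases} by $w_{j(s),k}$ when $\pi_{j(s)}(s) = k$ (a player leaves $k$). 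The key observation is that the conclusion of the lemma is equivalent to exhibiting a ``leave $k$'' step $s$ whose successor satisfies $W_k(\bfpi(s+1)) = Q_k(\ell)$: in that case the pre-deviation weight is $W_k(\bfpi(s)) = Q_k(\ell) + w_{j,k}$ with $j = j(s)$, so $\calU_j(\bfpi(s)) = \mu_k w_{j,k}/W_k(\bfpi(s)) = \mu_k w_{j,k}/(Q_k(\ell) + w_{j,k})$, exactly the claimed identity.

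Next I would establish the two facts that make such a step exist. Pick any index $s_0$ attaining the minimum, so $W_k(\bfpi(s_0)) = Q_k(\ell)$. Because $k$ is targeted there is an index $s'$ with $k(s') = k$, that is, an increase step; hence $W_k$ takes at least two distinct values along $\ell$ and is non-constant, so some profile in the cycle has $W_k > Q_k(\ell)$.

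Finally I would run a backward trace from $s_0$ around the cyclic sequence. The step entering a profile of minimal weight cannot be an increase step, since that would force its predecessor to have weight strictly below $Q_k(\ell)$, contradicting minimality; hence it is either a ``no change'' step or a ``leave $k$'' step. If it is a ``no change'' step the predecessor also attains the minimum and I recurse; by non-constancy this recursion cannot traverse the whole cycle without meeting a profile of weight larger than $Q_k(\ell)$, so it must terminate at a genuine ``leave $k$'' step $s$ with $W_k(\bfpi(s+1)) = Q_k(\ell)$. This $s$ is the desired index, and the computation from the first paragraph finishes the proof. The only real obstacle is justifying that the backward trace terminates — that it cannot loop forever through ``no change'' steps — which is precisely where the hypothesis that $k$ is targeted (forcing non-constancy of $W_k$) is used; everything else is bookkeeping on how a single deviation moves $W_k$.
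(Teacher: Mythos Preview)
Your proposal is correct and follows essentially the same approach as the paper: both identify a ``leave $k$'' step $s$ for which $W_k(\bfpi(s+1)) = Q_k(\ell)$ and then compute $\calU_{j}(\bfpi(s))$ directly. The paper's proof simply asserts the existence of such an $s$ (after noting that a cycle containing a ``join $k$'' step must also contain a ``leave $k$'' step), whereas your backward-trace argument supplies the justification that the paper leaves implicit.
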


\begin{proof}
    Since $\exists 1 \le s' \le t$ such that $k(s') = k$, there must exists rounds when a player deviate from the resource $k$. As a result, there must exist a constant $1 \le s \le t$ such that
    \begin{equation*}
        \pi_{j(s)}(s) = k \quad \text{and} \quad Q_k(\ell) = W_k(\bfpi(s + 1)).
    \end{equation*}
    As a result, let $j = j(s)$ and we have
    \begin{equation*}
        \calU_j(\bfpi(s)) = \frac{\mu_k w_{j,k}}{W_k(\bfpi(s))} = \frac{\mu_k w_{j,k}}{W_k(\bfpi(s + 1)) + w_{j,k}} = \frac{\mu_k w_{j,k}}{Q_k(\ell) + w_{j,k}}.
    \end{equation*}
    Now the claim follows.
\end{proof}

\begin{lemma} \label{lemma:proof-pne-exists-3}
    Let $\ell = (\bfpi(1), \bfpi(2), \dots, \bfpi(t), \bfpi(t + 1) = \bfpi(1))$ be an improvement cycle. Then for every improvement step $1 \le s \le t$ and resource $k$ such that $k(s) = k$, there exist $1 \le s' \le t$ and $k' = k(s')$ such that
    \begin{equation*}
        \frac{\mu_{k}}{Q_{k}(\ell) + \epsilon_0} < \frac{\mu_{k'}}{Q_{k'}(\ell) + \epsilon_0}.
    \end{equation*}
\end{lemma}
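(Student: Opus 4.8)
The plan is to prove \Lemmaref{lemma:proof-pne-exists-3} by producing, from the given deviation target $k = k(s)$, a companion deviation target $k'$ together with a single ``witness'' player who moves directly from $k$ to $k'$ somewhere along the cycle; the long-tailed separation \eqref{eq:thrm-pne-exists} will then force $k'$ to be the resource with the strictly larger value $\mu/(Q(\ell)+\epsilon_0)$. Note that throughout this subsection the hypothesis \eqref{eq:thrm-pne-exists} is in force, and it is exactly this hypothesis that makes the claimed strict inequality possible.

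First I would invoke \Lemmaref{lemma:proof-pne-exists-2} at the resource $k$: since $k$ is a deviation target, there is an index $\tilde s$ at which player $j^\ast = j(\tilde s)$ sits on $k$ with $\calU_{j^\ast}(\bfpi(\tilde s)) = \mu_k w_{j^\ast,k}/(Q_k(\ell)+w_{j^\ast,k})$ and then deviates to $k' := k(\tilde s)$. Because this transition is an improvement step, $\calU_{j^\ast}(\bfpi(\tilde s+1)) > \calU_{j^\ast}(\bfpi(\tilde s))$, while \Lemmaref{lemma:proof-pne-exists-1} bounds the new utility above by $\mu_{k'} w_{j^\ast,k'}/(Q_{k'}(\ell)+w_{j^\ast,k'})$. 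Chaining these yields the central inequality
\[
\frac{\mu_k\, w_{j^\ast,k}}{Q_k(\ell)+w_{j^\ast,k}} < \frac{\mu_{k'}\, w_{j^\ast,k'}}{Q_{k'}(\ell)+w_{j^\ast,k'}},
\]
where $k' = k(\tilde s)$ is itself a deviation target, as the statement requires.

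Next I would read off the crude estimates needed to apply the separation hypothesis. On the left, the denominator equals $W_k(\bfpi(\tilde s)) \le N_0$ (at most $N_0$ players carry positive weight on any resource, each weight at most $1$) and the numerator is at least $\mu_k\epsilon_0$, so the left side is at least $\mu_k\epsilon_0/N_0$; on the right, dropping $Q_{k'}(\ell)\ge 0$ bounds it above by $\mu_{k'}$. Hence $\mu_k/\mu_{k'} < N_0/\epsilon_0$, which by \eqref{eq:thrm-pne-exists} rules out the branch $\mu_k/\mu_{k'} > N_0/\epsilon_0$ and leaves $\mu_{k'}/\mu_k > N_0/\epsilon_0$. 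It then remains to upgrade this payoff gap into the claimed comparison of the $\mu/(Q(\ell)+\epsilon_0)$ values: using $Q_k(\ell)\ge 0$ gives $\mu_{k'}(Q_k(\ell)+\epsilon_0) > (N_0/\epsilon_0)\mu_k\cdot\epsilon_0 = N_0\mu_k$, so it suffices that $N_0 \ge Q_{k'}(\ell)+\epsilon_0$.

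The crux, and the step I expect to be the main obstacle, is exactly this last inequality $Q_{k'}(\ell)\le N_0-\epsilon_0$, since the naive bound $Q_{k'}(\ell)\le N_0$ is off by the additive $\epsilon_0$ and makes the argument fail at the boundary. The resolution is to exploit that $j^\ast$ \emph{joins} $k'$ at step $\tilde s$, so at the preceding state $\bfpi(\tilde s)$ it is not yet on $k'$; as $j^\ast$ carries positive weight on $k'$, at most $N_0-1$ other players can occupy $k'$ there, whence $Q_{k'}(\ell)\le W_{k'}(\bfpi(\tilde s))\le N_0-1\le N_0-\epsilon_0$. Substituting this back gives $\mu_{k'}(Q_k(\ell)+\epsilon_0) > N_0\mu_k \ge (Q_{k'}(\ell)+\epsilon_0)\mu_k$, i.e.\ $\mu_k/(Q_k(\ell)+\epsilon_0) < \mu_{k'}/(Q_{k'}(\ell)+\epsilon_0)$, which is the claim. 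I would keep careful track of strictness throughout: it is inherited from the strict improvement in the central inequality and from the strict separation in \eqref{eq:thrm-pne-exists}, so the final inequality is strict as stated.
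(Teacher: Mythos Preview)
Your proposal is correct and follows essentially the same approach as the paper: both invoke \Lemmaref{lemma:proof-pne-exists-2} to obtain the witness player and the central inequality, use the separation hypothesis \eqref{eq:thrm-pne-exists} to force $\mu_{k'}/\mu_k > N_0/\epsilon_0$, and then bound $Q_{k'}(\ell)+\epsilon_0 \le N_0$ via the fact that $j^\ast$ is absent from $k'$ at $\bfpi(\tilde s)$ (the paper phrases this last step equivalently as $W_{k'}(\bfpi(\tilde s+1)) \ge Q_{k'}(\ell)+\epsilon_0$ together with $W_{k'}(\bfpi(\tilde s+1))\le N_0$). The only cosmetic difference is that the paper rules out $\mu_k>\mu_{k'}$ by contradiction, whereas you derive $\mu_k/\mu_{k'}<N_0/\epsilon_0$ directly.
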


\begin{proof}
    By \cref{lemma:proof-pne-exists-2}, we have that there exists an index $s'$ such that $j = j(s')$, $\pi_j(s') = k$, and
    \begin{equation*}
        \calU_j(\bfpi(s')) = \frac{\mu_kw_{j,k}}{Q_k(\ell) + w_{j,k}}.
    \end{equation*}
    Set $k' = k(s')$. Since $\bfpi(s'+1)$ is an improvement step of $\bfpi(s')$ for player $j = j(s')$, we have
    \begin{equation*} \label{eq:proof-lemma-proof-pne-exists-1-1}
        \frac{\mu_kw_{j,k}}{Q_k(\ell) + w_{j,k}}  = \calU_j(\bfpi(s')) < \calU_j(\bfpi(s' + 1)) \le \frac{\mu_{k'}w_{j,{k'}}}{Q_{k'}(\ell) + w_{j,k'}},
    \end{equation*}
    where the last inequality is due to \cref{lemma:proof-pne-exists-1}.

    First, we claim that $\mu_k \le \mu_{k'}$ by contradiction. Otherwise, we assume $\mu_k > \mu_{k'}$ and we obtain that $\mu_k/\mu_{k'} > N_{0}/\epsilon_0$ by the assumption. Based on \cref{eq:proof-lemma-proof-pne-exists-1-1}, we have:
    \begin{equation*}
        \frac{\mu_{k'}}{\mu_k} > \frac{w_{j,k}}{Q_{k}(\ell) + w_{j,k}} \left/ \frac{w_{j,k'}}{Q_{k'}(\ell) + w_{j,k'}}\right..
    \end{equation*}
    According to the assumption of $\epsilon_0$ and $N_0$, and acknowledging that the function $\frac{x}{x+Q}$ is increasing, we have:
    \begin{equation*}
        \frac{w_{j,k}}{Q_{k}(\ell) + w_{j,k}} \ge \frac{\epsilon_0}{N_0} \quad \text{and} \quad \frac{w_{j,k'}}{Q_{k'}(\ell) + w_{j,k'}} \leq 1.
    \end{equation*}
    As a result,
    \begin{equation*}
        \frac{\mu_{k'}}{\mu_k} > \frac{w_{j,k}}{Q_{k}(\ell) + w_{j,k}} \left/ \frac{w_{j,k'}}{Q_{k'}(\ell) + w_{j,k'}}\right. \ge \frac{\epsilon_0}{N_0} > \frac{\mu_{k'}}{\mu_k}.
    \end{equation*}
    This leads to a contradiction. As a result, we have $\mu_k \le \mu_{k'}$. Hence, by the assumption, we have
    \begin{equation*}
        \frac{\mu_{k'}}{\mu_k} > \frac{N_0}{\epsilon_0} \ge \frac{W_{k'}(\bfpi(s'+1))}{Q_{k}(\ell)+\epsilon_0} \ge \frac{Q_{k'}(\ell)+\epsilon_0}{Q_{k}(\ell)+\epsilon_0}.
    \end{equation*}
    Now the claim follows.
\end{proof}

With these lemmas, we can prove the original theorem.

\begin{proof}[Proof of \cref{thrm:pne-exists}]
    Suppose there exists an infinite improvement path. Then there must exist an improvement cycle
    \begin{equation*}
        \ell = \left(\bfpi(1), \bfpi(2), \dots, \bfpi(t), \bfpi(t+1) = \bfpi(1)\right).
    \end{equation*}
    Let $s_1 = 1$ and $k_1 = k(s_1)$. According to \cref{lemma:proof-pne-exists-3}, there exist $1 \le s_2 \le t$ and $k_2 = k(s_2)$ such that $\mu_{k_1} / (Q_{k_1}(\ell) + \epsilon_0) < \mu_{k_2} / (Q_{k_2}(\ell) + \epsilon_0)$. Continue this process for $t + 1$ times and we have $s_1, s_2, \dots, s_{t+1}$ and $k_1, k_2, \dots, k_{t+1}$ with $k_i = k(s_i)$ such that
    \begin{equation*} \label{eq:proof-pne-exists-1}
        \frac{\mu_{k_1}}{Q_{k_1}(\ell) + \epsilon_0} < \frac{\mu_{k_2}}{Q_{k_2}(\ell) + \epsilon_0} < \dots < \frac{\mu_{k_{t+1}}}{Q_{k_{t+1}}(\ell) + \epsilon_0}.
    \end{equation*}
    Since $k_i$ ($1 \le i \le t + 1$) can only take value in $\{1, 2, \dots, t\}$, there must exists two indices $i_1 \ne i_2$ such that $k_{i_1} = k_{i_2}$. This leads to a contradiction since \cref{eq:proof-pne-exists-1} suggests that $\forall 1 \le i_1, i_2 \le t + 1$, $k_{i_1} \ne k_{i_2}$. Now we can conclude that all improvement paths end in finite steps and the PNE exists.
\end{proof}

\subsection{Proof of \cref{thrm:pne-exists-special}} \label{sect:proof-pne-exists-special}
We share similar proof techniques for the three conditions. Generally speaking, we want to prove that every improvement path is finite, which leads to the existence of PNE. In addition, for a strategy profile $\bfpi$, define $W_k(\bfpi)$ as the sum of weights of players that choose resource $k$, \textit{i.e.},
\begin{equation*}
    W_k(\bfpi) = \sum_{j=1}^N \bbI[\pi_j = k]w_{j,k}.
\end{equation*}
For any $s$ such that $\bfpi(s+1)$ is an improvement step of $\bfpi(s)$, define $j(s)$ be the deviated player in this step and the new strategy is $k(s)$, \textit{i.e.}, $\bfpi(s+1) = \left(k(s), \bfpi(s)_{-j(s)}\right)$.

\begin{proof} [Proof of Condition (1)]
    Suppose there exists an infinite improvement path. Then there must exist an improvement cycle
    \begin{equation*}
        \left(\bfpi(1), \bfpi(2), \dots, \bfpi(t), \bfpi(t+1) = \bfpi(1)\right).
    \end{equation*}
    Since for all $j \in [N]$ and $k, k' \in [K]$, we have $w_{j,k} = w_{j,k'}$, we slightly abuse the notation here and use $w_j$ to denote $w_{j,k}$ for all $k \in [K]$. In addition, for any $1 \le s \le t$, define the potential function as
    \begin{equation*}
        \Phi(\bfpi(s)) = \sort \left(\frac{\mu_k}{W_k(\bfpi(s))}\right)_{k=1}^K.
    \end{equation*}
    Here $\Phi(\bfpi(s))$ is a sorted vector and each element in the vector corresponds  to a unit reward in a resource. Note that when $W_k(\bfpi(s)) = \sum_{j=1}^N\bbI[\pi_j(s)=k]w_{j} = 0$ for some $s$ and $k$, we set the value $\mu_k / W_k(\bfpi(s))$ be infinity ($\infty$). We suppose that $a < \infty$ for any real number $a$ and $\infty = \infty$.

    For each $1 \le s \le t$, since $\bfpi(s+1) = \left(k(s), \bfpi_{-j(s)}(s)\right)$ is an improvement step of $\bfpi(s)$ and $\mu_{\pi_{j(s)}(s)} = \mu_{k(s)}$ by condition, setting $j = j(s)$ for brevity, we have
    \begin{equation*} \label{eq:proof-proposition-pne-2}
        \frac{\mu_{\pi_j(s+1)}}{W_{\pi_j(s+1)}(\bfpi(s + 1))} > \frac{\mu_{\pi_j(s)}}{W_{\pi_j(s)}(\bfpi(s))}.
    \end{equation*}
    In addition, note that player $j$ deviate from resource $\pi_j(s)$ to $\pi_j(s+1)$ and we have that $W_{\pi_j(s)}(\bfpi(s)) \ge W_{\pi_j(s)}(\bfpi(s + 1))$ and $W_{\pi_j(s+1)}(\bfpi(s)) < W_{\pi_j(s+1)}(\bfpi(s + 1))$. Therefore,
    \begin{equation*} \label{eq:proof-proposition-pne-3}
        \frac{\mu_{\pi_j(s)}}{W_{\pi_j(s)}(\bfpi(s + 1))} \ge \frac{\mu_{\pi_j(s)}}{W_{\pi_j(s)}(\bfpi(s))} \quad \text{and} \quad \frac{\mu_{\pi_j(s+1)}}{W_{\pi_j(s+1)}(\bfpi(s))} > \frac{\mu_{\pi_j(s+1)}}{W_{\pi_j(s+1)}(\bfpi(s + 1))}.
    \end{equation*}
    Since $\Phi(\bfpi(s + 1))$ differs from $\Phi(\bfpi(s))$ only on the items \textit{w.r.t.} $k = \pi_j(s)$ and $k = \pi_j(s + 1)$, by examining \cref{eq:proof-proposition-pne-2} and \cref{eq:proof-proposition-pne-3}, it is easy to check that $\Phi(\bfpi(s + 1))$ is greater than $\Phi(\bfpi(s))$ by lexicographic order. As a result,
    \begin{equation*}
        \Phi(\bfpi(1)) = \Phi(\bfpi(t + 1)) \succ \Phi(\bfpi(t)) \succ \dots \succ \Phi(\bfpi(1)),
    \end{equation*}
    where $\succ$ represents "greater than by lexicographic order". This leads to a contradiction.
\end{proof}

\begin{proof} [Proof of Condition (2)]
    Note that when $\forall j, j' \in [N], k \in [K]$, $w_{j,k} = w_{j', k}$, we have $\forall j \in [N], k \in [K]$, $w_{j,k} = 1$ since we assume that $\forall k, \max_{j \in [N]} w_{j,k}=1$. As a result, this condition is a special case of the condition listed in Condition (2).
\end{proof}

\begin{proof} [Proof of Condition (3)]
    Suppose there exists an infinite improvement path. Then there must exist an improvement cycle
    \begin{equation*}
        \left(\bfpi(1), \bfpi(2), \dots, \bfpi(t), \bfpi(t+1) = \bfpi(1)\right).
    \end{equation*}
    Let $\calK = \{k(s): 1 \le s \le t\}$ be the set of all resources that appear in the improvement cycle.

    For each $1 \le s \le t$, since $\bfpi(s+1) = \left(k(s), \bfpi_{-j(s)}(s)\right)$ is an improvement step of $\bfpi(s)$ and $\mu_{\pi_{j(s)}(s)} = \mu_{k(s)}$ by condition, setting $j = j(s)$ for brevity, we have
    \begin{equation*}
        \frac{w_{j, \pi_{j}(s+1)}}{W_{\pi_{j}(s+1)}(\bfpi(s+1))}  = \frac{w_{j, k(s)}}{w_{j, k(s)} + W_{k(s)}\left(\bfpi(s)\right)} > \frac{w_{j, \pi_j(s)}}{W_{\pi_{j}(s)}(\bfpi(s))}.
    \end{equation*}
    As a result, subtract the above inequality by 1 in both sides, we can get that
    \begin{equation*}
        -\frac{W_{\pi_{j}(s+1)}\left(\bfpi(s)\right)}{W_{\pi_{j}(s+1)}(\bfpi(s+1))} = -\frac{W_{k(s)}\left(\bfpi(s)\right)}{w_{j, k(s)} + W_{k(s)}\left(\bfpi(s)\right)} > - \frac{W_{\pi_{j}(s)}(\bfpi(s)) - w_{j, \pi_j(s)}}{W_{\pi_{j}(s)}(\bfpi(s))} = - \frac{W_{\pi_{j}(s)}(\bfpi(s+1))}{W_{\pi_{j}(s)}(\bfpi(s))}.
    \end{equation*}
    Hence
    \begin{equation*} \label{eq:proof-proposition-pne-1}
        \frac{W_{\pi_{j}(s)}(\bfpi(s + 1))}{W_{\pi_{j}(s)}(\bfpi(s))} > \frac{W_{\pi_{j}(s+1)}\left(\bfpi(s)\right)}{W_{\pi_{j}(s+1)}(\bfpi(s+1))}.
    \end{equation*}
    As a result, we must have $W_{\pi_{j}(s)}(\bfpi(s+1))W_{\pi_{j}(s+1)}(\bfpi(s+1))/W_{\pi_{j}(s)}(\bfpi(s)) > 0$. As a result, Define
    \begin{equation*}
        c_0 = \frac{1}{2}\min_{1 \le s \le t} \frac{W_{\pi_{j}(s)}(\bfpi(s+1))W_{\pi_{j}(s+1)}(\bfpi(s+1))}{W_{\pi_{j}(s)}(\bfpi(s))} > 0
    \end{equation*}
    and a potential function $\Phi(\bfpi(s))$ for any $1 \le s \le t$ as
    \begin{equation*}
        \Phi(\bfpi(s)) = \prod_{k \in \calK} \left(\bbI[W_k(\bfpi(s)) = 0] c_0 + \bbI[W_k(\bfpi(s)) > 0]W_k(\bfpi(s))\right).
    \end{equation*}
    By definition, we have $\Phi(s) > 0$ for any $1 \le s \le t$. Consider the change of the potential function for any round $s$ and $s+1$. Since $W_k(\bfpi(s))$ differs from $W_k(\bfpi(s + 1))$ only when $k = \pi_{j(s)}(s)$ and $k = \pi_{j(s)}(s+1)$. Let $j = j(s)$. By \cref{eq:proof-proposition-pne-1}, we have $W_{\pi_{j}(s)}(\bfpi(s)), W_{\pi_{j}(s+1)}(\bfpi(s+1)), W_{\pi_{j}(s)}(\bfpi(s+1)) > 0$. Consider the two cases when $W_{\pi_{j}(s+1)}\left(\bfpi(s)\right) = 0$ or $W_{\pi_{j}(s+1)}\left(\bfpi(s)\right) > 0$.
    \begin{itemize}
        \item When $W_{\pi_{j}(s+1)}\left(\bfpi(s)\right) = 0$, we have
        \begin{equation*}
            \frac{\Phi(\bfpi(s + 1))}{\Phi(\bfpi(s))} = \frac{W_{\pi_{j}(s)}(\bfpi(s + 1))W_{\pi_{j}(s+1)}(\bfpi(s + 1))}{W_{\pi_{j}(s)}(\bfpi(s)) \cdot c_0} > 2
        \end{equation*}
        by the definition of $c_0$.
        \item When $W_{\pi_{j}(s+1)}\left(\bfpi(s)\right) > 0$, we have
        \begin{equation*}
            \frac{\Phi(\bfpi(s + 1))}{\Phi(\bfpi(s))} = \frac{W_{\pi_{j}(s)}(\bfpi(s + 1))W_{\pi_{j}(s+1)}(\bfpi(s + 1))}{W_{\pi_{j}(s)}(\bfpi(s)) W_{\pi_{j}(s+1)}\left(\bfpi(s)\right)} > 1
        \end{equation*}
        by \cref{eq:proof-proposition-pne-1}.
    \end{itemize}
    As a result, we have that $\Phi(\bfpi(s + 1)) > \Phi(\bfpi(s))$ for all $1 \le s \le t$. Therefore,
    \begin{equation*}
        \Phi(\bfpi(1)) = \Phi(\bfpi(t + 1)) > \Phi(\bfpi(t)) > \dots > \Phi(\bfpi(1)),
    \end{equation*}
    which leads to a contradiction. Therefore, there does not exist any infinite improvement path and PNE exists.
\end{proof}

\subsection{proof of \cref{thrm:poa}} \label{sect:proof-poa}
\begin{proof}
    In the PPA-Game, the strategy profile that maximizes social welfare assigns all players to the top-$N$ resources if $N \leq K$, and if $N > K$, the top $K$ players select all $K$ resources. Consequently, the maximal social welfare is $\sum_{i=1}^{\min\{N, K\}}\mu_i$. Denote $W^{\max} = \sum_{i=1}^{\min\{N,K\}} \mu_i$ as the numerator of PoA given in \cref{eq:poa}.
    
    Suppose $\bfpi$ is a PNE. Let $\calM = \{1, 2, \dots, \min\{N, K\}\}$ and $\calC = \{j \in [K]: \exists i \in [N], \pi_i = j\}$.
    As a result, by the definition of social welfare and the utility function given in \cref{eq:game-formulation}, we have that
    $$
    W^{\max} = \sum_{j \in \calM} \mu_j \quad \text{and} \quad W_{\calG}(\bfpi) = \sum_{j \in \calC}\mu_j = \sum_{i=1}^N \calU_i(\bfpi).
    $$
    Note that since $\sum_{i=1}^N \calU_i(\bfpi) = W_{\calG}(\bfpi)$. There must exist one player $i^*$ such that $\calU_i(\bfpi) \le W_{\calG}(\bfpi) / N$. Since $\bfpi$ is a PNE, player $i^*$ will not benefit by deviation, which means that $\forall j \in \calM \backslash \calC$, $\mu_j \le W_{\calG}(\bfpi) / N$. As a result,
    $$
    W^{\max} = \sum_{j \in \calM} \mu_j \le \sum_{j \in \calC}\mu_j + \sum_{j \in \calM \backslash \calC} \mu_j \le W_{\calG}(\bfpi) + |\calM \backslash \calC| \cdot W_{\calG}(\bfpi) / N.
    $$
    Note that we must have $1 \in \calM$ and $1 \in \calC$. Therefore, $|\calM\backslash\calC| \le |\calM| - 1 = \min\{N, K\} - 1$. This gives us
    $$
    \frac{W^{\max}}{W_{\calG}(\bfpi)} \le 1 + \frac{|\calM \backslash \calC|}{N} \le 1 + \frac{\min\{N, K\} - 1}{N}.
    $$
    Now the claim follows.
\end{proof}

\subsection{Proof of \cref{thrm:game-perturbed}} \label{sect:proof-game-perturbed}
\begin{lemma} \label{lemma:ne-property}
    For any Nash equilibrium $\bfpi$ of $\calG$, there exists an index $k$, such that (1) any arm with index larger than $k$ is not chosen by any player and (2) any arm with index not larger than $k$ is chosen by at least one player, \textit{i.e.},
    \begin{equation*} \label{eq:lemma-ne-property}
        \forall k' > k, \forall j \in [N], \pi_j \ne k' \quad \text{and} \quad \forall k' \le k, \exists j \in [N], \pi_j = k'.
    \end{equation*}
\end{lemma}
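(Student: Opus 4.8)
The plan is to prove the structural claim by first fixing a convenient labelling of the arms and then showing, via a single deviation argument, that the chosen arms are exactly those with the highest payoffs. Throughout I would assume without loss of generality that arms are indexed in nonincreasing order of payoff, $\mu_1 \ge \mu_2 \ge \dots \ge \mu_K$; since the statement only asserts the \emph{existence} of a threshold index $k$, the indices may be permuted freely before the argument begins. I would also work under the standing positivity assumption $w_{j,k} > 0$ for all $j,k$ that is in force in the theorems invoking this lemma, as this is precisely what lets a player extract the full payoff of an empty arm and is what makes the prefix structure clean.

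The heart of the proof is the claim that, in any PNE $\bfpi$, every chosen arm has payoff at least that of every unchosen arm. I would argue by contradiction: suppose arm $b$ is chosen by a player $j$ while arm $a$ is unchosen and $\mu_a > \mu_b$. On one hand, since the denominator $\sum_{j'}\bbI[\pi_{j'}=b]w_{j',b} \ge w_{j,b}$, we get $\calU_j(\bfpi) \le \mu_b$. On the other hand, were $j$ to deviate to the empty arm $a$ he would be its sole occupant and collect $\calU_j(a,\bfpi_{-j}) = \mu_a$, using $w_{j,a}>0$. Then $\calU_j(a,\bfpi_{-j}) = \mu_a > \mu_b \ge \calU_j(\bfpi)$ is a strictly improving unilateral deviation, contradicting that $\bfpi$ is a PNE. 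Hence no such pair $(a,b)$ exists.

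With the claim in hand the threshold structure is immediate from the sorting. I would take $k$ to be the largest index of a chosen arm (equivalently, the number of chosen arms, once ties are ordered so that chosen arms precede unchosen arms of equal payoff). Every arm with index $>k$ then has payoff no larger than that of any chosen arm, so by the claim it cannot be chosen, giving condition (1); every arm with index $\le k$ has payoff at least $\mu_k$, a value attained by a chosen arm, so by the claim again it must be chosen, giving condition (2). This delivers the required $k$.

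The step I expect to require the most care is the weight bookkeeping rather than the combinatorics. The deviation argument rests on two facts that must be isolated cleanly: occupying an empty arm returns its entire payoff (so the gain equals $\mu_a$, independent of the size of $w_{j,a}$ provided it is positive), and sharing an arm can only dilute one's payoff (so $\calU_j(\bfpi)\le\mu_b$). Both rely on positivity; indeed, if a player on the low arm had zero weight on the high empty arm, his deviation would yield nothing and the prefix property could genuinely fail, so I would flag explicitly where $w_{j,k}>0$ enters. A secondary subtlety is ties $\mu_a=\mu_b$: strict improvement there forces arm $b$ to carry at least two players, and the remaining singleton-per-arm case among equal-payoff arms is exactly what is absorbed by the tie-breaking choice made in the initial relabelling.
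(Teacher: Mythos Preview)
Your proposal is correct and follows essentially the same approach as the paper's proof: a contradiction via a single profitable deviation from a chosen lower-payoff arm to an unchosen higher-payoff arm. You are in fact more careful than the paper, which tacitly assumes the sorted ordering and the positivity $w_{j,k}>0$ and does not discuss ties at all, but the underlying argument is identical.
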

\begin{proof}
    Suppose the condition in \cref{eq:lemma-ne-property} does not hold for a Nash equilibrium $\bfpi$. Then there must exist two indices, $k_1 < k_2$ such that arm $k_1$ is not chosen by any player and arm $k_2$ is chosen by at least one player. Now any player that pulls arm $k_2$ can increase his reward by deviating to arm $k_1$, which leads to a contradiction since $\bfpi$ is a Nash equilibrium.
\end{proof}

Now we can prove \cref{thrm:game-perturbed}.
\begin{proof}[Proof of \cref{thrm:game-perturbed}]

    (a) We first demonstrate that under the conditions $\Delta < \delta/(4K)$ and $\Gamma \le \delta / (4N)$, all Nash equilibria of $\calG$ are also equilibria of $\calG'$ and vice versa.

    Note that for all $j \in N$ and $\bfpi \in [K]^N$, by setting $k=\pi_j$, we have
    \begin{equation*}
        \begin{aligned}
            \left|\calU_j(\bfpi) - \calU'_j(\bfpi)\right| & = \left|\frac{\left(\hat{\mu}_{j,k} - \mu_k\right) w_{j,k}}{\sum_{j'=1}^N \bbI[\pi_{j'}=k]w_{j',k}}+\gamma_{j,k}\right| \le \left|\frac{\left(\hat{\mu}_{j,k} - \mu_k\right) w_{j,k}}{\sum_{j'=1}^N \bbI[\pi_{j'}=k]w_{j',k}}\right|+\left|\gamma_{j,k}\right| \\
            & \le \left|\hat{\mu}_{j,k} - \mu_k\right| + |\gamma_{j,k}| \le \Delta + \Gamma \le \frac{\delta}{4} + \frac{\delta}{8} = \frac{3\delta}{8}.
        \end{aligned}
    \end{equation*}

    On the one hand, suppose $\bfpi$ is a Nash equilibrium of $\calG$. By the definition of $\delta$ and $\delta^{\PNEBound}$ in \cref{eq:delta}, we have that
    \begin{equation*}
        \forall j \in [N], \pi' \in [K] \text{ and } k \ne \pi_j, \quad \calU_j(\bfpi) - \calU_j(\pi', \bfpi_{-j}) \ge \delta^{\PNEBound} \ge \delta.
    \end{equation*}
    As a result, $\forall j \in [N], \pi' \in [K]$ and $k \ne \pi_j$, we have
    \begin{equation*}
        \begin{aligned}
            \calU'_j(\bfpi) - \calU'_j(\pi', \bfpi_{-j}) & = \calU_j(\bfpi) - \calU_j(\pi', \bfpi_{-j}) + \left(\calU'_j(\bfpi) - \calU_j(\bfpi)\right) + \left(\calU_j(\pi', \bfpi_{-j}) - \calU_j'(\pi', \bfpi_{-j})\right) \\
            & \ge \delta  - \left|\calU'_j(\bfpi) - \calU_j(\bfpi)\right| - \left|\calU_j(\pi', \bfpi_{-j}) - \calU_j'(\pi', \bfpi_{-j})\right| \ge \delta - \frac{3\delta}{8} - \frac{3\delta}{8} > 0.
        \end{aligned}
    \end{equation*}
    Therefore, $\bfpi$ is also a Nash equilibrium of $\calG'$.

    On the other hand, suppose $\bfpi$ is a Nash equilibrium of $\calG'$. If $\bfpi$ is not a Nash equilibrium of $\calG$, by \cref{eq:delta}, there must exist a strategy profile $\bfpi$, a player $j$ and a deviation $\pi' \ne \pi_j$ such that
    \begin{equation*}
        \calU_j(\pi', \bfpi_{-j}) \ge \calU_j(\bfpi) + \delta^{\NOPNEBound} \ge \calU_j(\bfpi) + \delta.
    \end{equation*}
    As a result, we have
    \begin{equation*}
        \begin{aligned}
             \calU'_j(\pi', \bfpi_{-j}) - \calU'_j(\bfpi) & = \calU_j(\pi', \bfpi_{-j}) - \calU_j(\bfpi) + \left(\calU_j(\bfpi) - \calU'_j(\bfpi)\right) + \left(\calU_j'(\pi', \bfpi_{-j}) - \calU_j(\pi', \bfpi_{-j})\right) \\
            & \ge \delta  - \left|\calU'_j(\bfpi) - \calU_j(\bfpi)\right| - \left|\calU_j(\pi', \bfpi_{-j}) - \calU_j'(\pi', \bfpi_{-j})\right| \ge \delta - \frac{3\delta}{8} - \frac{3\delta}{8} > 0,
        \end{aligned}
    \end{equation*}
    which leads to a contradiction since $\bfpi$ is a Nash equilibrium of $\calG$. Therefore, $\bfpi$ is also a Nash equilibrium of $\calG$.

    (b) We then demonstrate that any most efficient Nash equilibrium $\bfpi$ of $\calG'$ is also a most efficient equilibrium of $\calG$.

    Suppose $\bfpi$ is a most efficient equilibrium of $\calG'$ but not a most efficient equilibrium of $\calG$. By the part of $(a)$, we have that $\bfpi$ is also a Nash equilibrium of $\calG$. By \cref{lemma:ne-property}, we have that there exists $k$, such that 
    \begin{equation*}
        \forall k' > k, \forall j \in [N], \pi_j \ne k' \quad \text{and} \quad \forall k' \le k, \exists j \in [N], \pi_j = k'.
    \end{equation*}
    And we have the total welfare $W_{\calG'}(\bfpi)$ under the game $\calG'$ is given by the following bounds:
    \begin{equation*}
        \begin{aligned}
            W_{\calG'}(\bfpi) & = \sum_{j=1}^N \calU'_j(\bfpi) = \sum_{j=1}^N \left(\frac{\hat{\mu}_{j,\pi_j} w_{j,\pi_j}}{\sum_{j'=1}^N \bbI[\pi_{j'}=\pi_j]w_{j',\pi_j}}+\gamma_{j,\pi_j}\right) \\
            & \le \sum_{j=1}^N \frac{(\mu_{\pi_j} + \Delta) w_{j,\pi_j}}{\sum_{j'=1}^N \bbI[\pi_{j'}=\pi_j]w_{j',\pi_j}}+ N \Gamma \\
            & = \sum_{k'=1}^k \mu_{k'} + k\Delta + N \Gamma \le \sum_{k'=1}^k \mu_{k'} + K\Delta + N \Gamma
        \end{aligned}
    \end{equation*}
    and
    \begin{small}
        $$
        W_{\calG'}(\bfpi) = \sum_{j=1}^N \left(\frac{\hat{\mu}_{j,\pi_j} w_{j,\pi_j}}{\sum_{j'=1}^N \bbI[\pi_{j'}=\pi_j]w_{j',\pi_j}}+\gamma_{j,\pi_j}\right) \ge \sum_{j=1}^N \frac{(\mu_{\pi_j} - \Delta) w_{j,\pi_j}}{\sum_{j'=1}^N \bbI[\pi_{j'}=\pi_j]w_{j',\pi_j}} = \sum_{k'=1}^k \mu_{k'} - k\Delta \ge \sum_{k'=1}^k \mu_{k'} - K\Delta.
        $$
    \end{small}

    Because $\bfpi$ is not a most efficient equilibrium of $\calG$ by assumption, there must exist another equilibrium $\tilde{\bfpi}$ of $\calG$ such that $W_{\calG}(\tilde{\bfpi}) > W_{\calG}(\bfpi)$. By \cref{lemma:ne-property}, there must exist $\tilde{k} > k$ such that in the equilibrium $\tilde{\bfpi}$, all the arms in $[\tilde{k}]$ is chosen by at least one player and all other arms are not chosen by any player. By part (a), we have that $\tilde{\bfpi}$ is also an equilibrium of $\calG'$, and we have
    \begin{equation*}
        W_{\calG'}(\tilde{\bfpi}) - W_{\calG'}(\bfpi) \ge \sum_{k'=1}^{\tilde{k}} \mu_{k'} - K\Delta - \left(\sum_{k'=1}^k \mu_{k'} + K\Delta + N \Gamma\right) \ge \mu_{\tilde{k}} - 2K\Delta - N\Gamma.
    \end{equation*}
    Note that we must have $\mu_{\tilde{k}} \ge \delta$ because arm $\tilde{k}$ is chosen by at least one player $j$ in the equilibrium $\tilde{\bfpi}$ \textit{i.e.}, $\calG$ and player $j$'s any deviation will incur a loss not larger than $\mu_{\tilde{k}}$. As a result,
    \begin{equation*}
        W_{\calG'}(\tilde{\bfpi}) - W_{\calG'}(\bfpi) \ge \mu_{\tilde{k}} - 2K\Delta - N\Gamma \ge \delta - 2K \cdot \frac{\delta}{4K} - N \cdot \frac{\delta}{4N} = \frac{\delta}{4} > 0.
    \end{equation*}
    This leads to a contradiction since $\bfpi$ is a most efficient equilibrium of $\calG'$. Now the claim follows.
    
    (c) We finally demonstrate that the most efficient Nash equilibrium of $\calG'$ is unique with probability $1$.

    We have
    \begin{equation*} \label{eq:proof-unique}
        \begin{aligned}
            & \, \bbP\left[\exists \bfpi', \bfpi'' \in [K]^N, \bfpi' \text{ and } \bfpi'' \text{ are the most efficient Nash equilibria of } \calG', W_{\calG'}(\bfpi') = W_{\calG'}(\bfpi'')\right] \\
            \le & \, \sum_{\bfpi', \bfpi'' \in [K]^N} \bbP[W_{\calG'}(\bfpi') = W_{\calG'}(\bfpi'')].
        \end{aligned}
    \end{equation*}
    Note that $W_{\calG'}(\bfpi') = W_{\calG'}(\bfpi'')$ if and only if
    \begin{equation*}
        \sum_{j=1}^N \left(\gamma_{j,\pi'_j} - \gamma_{j, \pi''_j}\right) = \sum_{j=1}^N\left(\frac{\hat{\mu}_{j,\pi'_j} w_{j,\pi'_j}}{\sum_{j'=1}^N \bbI[\pi'_{j'}=\pi'_j]w_{j',\pi'_j}} - \frac{\hat{\mu}_{j,\pi''_j} w_{j,\pi''_j}}{\sum_{j'=1}^N \bbI[\pi''_{j'}=\pi''_j]w_{j',\pi''_j}}\right).
    \end{equation*}
    Note that this occurs with probability $0$ because $\gamma_{j,k}$ are sampled uniformly from $[0, \Gamma]$ and the right-hand side of the above equation is a constant when $\bfpi'$ and $\bfpi''$ are fixed. As a result, we have $\forall \bfpi', \bfpi'' \in [K]^N$, $\bbP[W_{\calG'}(\bfpi') = W_{\calG'}(\bfpi'')] = 0$ and the left-hand side of \cref{eq:proof-unique} is $0$. Now the claim follows.
\end{proof}

\subsection{Proof of \cref{thrm:regret}} \label{sect:proof-regret}
\paragraph{Further Notations}
Let $\calZ$ be the space of all possible states of $\calM(\epsilon; \calG')$. In addition, define $\calD$ as the set of all states in the Markov process where all players are discontent.
\begin{equation*} \label{eq:calD}
    \calD = \left\{Z = \{(m_j, a_j, u_j)\}_{j=1}^N\in \calZ: \forall j \in [N], m_j = \discontent\right\}.
\end{equation*}
We first note that similar to \citep{pradelski2012learning}, the transition probability from any state $Z_1 \in \calZ \backslash \calD$ to $Z_2 \in \calD$ and the transition probability from any state $Z_2 \in \calD$ to $Z_1 \in \calZ \backslash \calD$ are independent of the $Z_2$. As a result, we can view all states in $\calD$ as a single state. The transition probability from $\calD$ to another state $Z \in \calZ \backslash \calD$ is the same as the probability from any state $Z_1 \in \calD$ to $Z$. We slightly abuse the notation here to denote the Markov process as $\calM(\epsilon; \calG')$ after combining all the states in $\calD$ as a single state.

For any $\epsilon > 0$, since the Markov process $\calM(\epsilon; \calG')$ is ergodic, the stationary distribution is unique. We denote the unique stationary distribution of $\calM(\epsilon; \calG')$ as $p(\epsilon; \calG')$ and $p_Z(\epsilon; \calG')$ is the probability of state $Z \in \calZ$ in the stationary distribution. Let $A(\epsilon; \calG')$ be the $1/8$-mixing time \citep{chung2012chernoff} of the Markov process $\calM(\epsilon; \calG')$ and $A(\epsilon; \calG') = \min \{t: \max_{x}\|x M^t - p\|_{\text{TV}} \le 1/8\}$ where $x$ is an arbitrary initial distribution on $\calZ$ and $\|\cdot\|_{\text{TV}}$ denotes the total variation distance.

We first note that the process $\calM(\epsilon; \calG')$ is a regular perturbed Markov process defined as follows.
\begin{definition} [Regular Perturbed Markov Process \citep{young1993evolution}] \label{defn:RP-MP}
    A Markov process $\calM_\epsilon$ is called a regular perturbed Markov process if $\calM_\epsilon$ is ergodic for all $\epsilon > 0$ and for every $Z, Z' \in \calZ$ we have
    \begin{equation*}
        \lim_{\epsilon \rightarrow 0^+} \calM_{\epsilon}(Z \rightarrow Z') = \calM_{0}(Z \rightarrow Z')
    \end{equation*}
    and if $\calM_\epsilon(Z \rightarrow Z') > 0$ for some $\epsilon > 0$ then
    \begin{equation*}
        0 < \lim_{\epsilon \rightarrow 0^+} \frac{\calM_\epsilon(Z \rightarrow Z')}{\epsilon^{r(Z \rightarrow Z')}} < \infty
    \end{equation*}
    for some real non-negative $r(Z \rightarrow Z')$ called the resistance of the transition $Z \rightarrow Z'$.
\end{definition}
For such regular perturbed Markov process, the stochastic stability of a state is defined as follows.
\begin{definition} [Stochastic Stability \citep{young1993evolution}] \label{defn:stochastic-stable}
    A state $Z \in \calZ$ is stochastically stable relative to the process $\calM_\epsilon$ if
    \begin{equation*}
        \lim_{\epsilon \rightarrow 0^+} p_Z(\epsilon) > 0.
    \end{equation*}
    Here $p_Z(\epsilon)$ denotes the probability of the state $Z$ in the stationary distribution of $\calM_\epsilon$.
\end{definition}

In addition, we need the definition of interdependent game.

\begin{definition} [Interdependent Game \citep{pradelski2012learning}]
    The game $\calG$ is interdependent if for any strategy profile $\bfpi$ and every proper subset of players $\emptyset \subsetneq J \subsetneq [N]$, there exists some player $j \not\in J$ and a choice of strategies $\bfpi'_{J}$ such that $\calU_j(\bfpi'_J, \bfpi_{-J}) \ne \calU_j(\bfpi)$. Here the strategy profile $(\bfpi'_J, \bfpi_{-J})$ denotes the profile where all players in the set $J$ deviate to $\bfpi'_J$ and other players' strategies remain unchanged.
\end{definition}

\paragraph{Important Lemmas}
With these notations, we further need two lemmas.

\begin{lemma} \label{lemma:stationary-distribution}
    Suppose the conditions in \cref{thrm:regret} hold. Let $\bfpi^*$ be the unique most efficient Nash equilibrium \textit{w.r.t.} the game $\calG'$ according to \cref{thrm:game-perturbed}. Then with probability $1$ the following holds. The state
    \begin{equation*} \label{eq:z-star}
        Z^* = \{(m_j^*, a_j^*, u_j^*)\}_{j \in [N]} \text{ with } \forall j \in [N], m_j^* = \content, a_j^* = \pi_j^*, u_j^* = \calU'_j(\bfpi^*)
    \end{equation*}
    is the unique stochastically stable state of the regular perturbed Markov process $\calM(\epsilon; \calG')$. In addition, for any $0 < \alpha < 1 / 2$, there exists a small enough $\epsilon$ such that the probability of $Z^*$ in the stationary distribution $p_{Z^*}(\epsilon; \calG')$ is larger than $1 / (2(1-\alpha))$.
\end{lemma}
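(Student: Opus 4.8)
The plan is to reduce the statement to the equilibrium-selection result for trial-and-error learning in \citep{pradelski2012learning}, after verifying that our dynamics instantiate precisely their regular perturbed Markov process driven by the deterministic game $\calG'$. The first observation to record is that the learning utility $u_j'$ in \Ruleref{rule:transit} is in fact deterministic and equals the perturbed payoff $\calU_j'(\bfpi(t))$: since $R_j(t) = X_k(t)\, w_{j,k} / \sum_{j'} \bbI[\pi_{j'}(t)=k] w_{j',k}$, the ratio $R_j(t)/X_k(t)$ does not depend on the stochastic draw $X_k(t)$, so $u_j' = \hat{\mu}_{j,k} w_{j,k} / \sum_{j'} \bbI[\pi_{j'}(t)=k] w_{j',k} + \gamma_{j,k} = \calU_j'(\bfpi(t))$. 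Hence \twoRulesref{rule:pull-arm}{rule:transit} are exactly the trial-and-error transitions of \citep{pradelski2012learning} applied to the fixed game $\calG'$, and the state space $\calZ$ is finite because each benchmark utility $u_j$ can only take one of the finitely many values $\calU_j'(\bfpi)$, $\bfpi \in [K]^N$ (once $\hat{\mu}$ and the perturbations are fixed).

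The second step is to check the two hypotheses of the selection theorem: that $\calG'$ admits a pure Nash equilibrium and is interdependent. Existence of a PNE in $\calG'$, together with uniqueness of its most efficient PNE with probability $1$, is precisely \Thmref{thrm:game-perturbed}. For interdependence, fix any profile $\bfpi$ and any proper nonempty $J \subsetneq [N]$, and choose a player $j_0 \notin J$ playing resource $k_0 = \pi_{j_0}$. A single player of $J$ can always be moved so as to strictly change the weight sum $\sum_{j'} \bbI[\pi_{j'}=k_0] w_{j',k_0}$ on $k_0$ (adding a member of $J$ onto $k_0$ if one is not already there, or removing one otherwise), since all weights are positive by assumption; this strictly changes the first term of $\calU_{j_0}'$ while leaving $\gamma_{j_0,k_0}$ fixed, so $j_0$'s payoff changes. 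Thus $\calG'$ is interdependent.

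With these hypotheses in place I would invoke the main theorem of \citep{pradelski2012learning}: for an interdependent game possessing a PNE, the stochastically stable states of the perturbed process are exactly the content states in which every player is content, the joint action is a welfare-maximizing (most efficient) pure Nash equilibrium, and the recorded benchmark utilities equal the realized payoffs. Because \Thmref{thrm:game-perturbed} guarantees that $\calG'$ has a unique most efficient PNE $\bfpi^*$ with probability $1$, the corresponding content state $Z^*$ in \eqref{eq:z-star} is the unique stochastically stable state of $\calM(\epsilon; \calG')$, which is the first claim.

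For the probability bound, stochastic stability (\Defnref{defn:stochastic-stable}) gives $\lim_{\epsilon \to 0^+} p_Z(\epsilon; \calG') = 0$ for every $Z \neq Z^*$; since $\calZ$ is finite and $\sum_{Z} p_Z(\epsilon; \calG') = 1$, it follows that $\lim_{\epsilon \to 0^+} p_{Z^*}(\epsilon; \calG') = 1$. As $0 < \alpha < 1/2$ forces $1/(2(1-\alpha)) < 1$, there is a small enough $\epsilon > 0$ with $p_{Z^*}(\epsilon; \calG') > 1/(2(1-\alpha))$, establishing the second claim. I expect the main obstacle to be the second invocation: carefully matching \twoRulesref{rule:pull-arm}{rule:transit} to the exact transition structure of \citep{pradelski2012learning}, in particular confirming that the calibration of $F$ and $G$ in \eqref{eq:F-and-G} yields the resistances under which their selection theorem applies verbatim; the interdependence check and the limiting-probability argument are comparatively routine.
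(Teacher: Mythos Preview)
Your proposal is correct and follows essentially the same route as the paper: check interdependence of $\calG'$ (using that all $w_{j,k}>0$), then invoke Theorem~1 of \citep{pradelski2012learning} to conclude that $Z^*$ is the unique stochastically stable state, and finally use $p_{Z^*}(\epsilon;\calG')\to 1$ to get the threshold. The one piece you flag as the main remaining obstacle --- verifying that $F$ and $G$ from \eqref{eq:F-and-G} satisfy the calibration hypotheses of \citep{pradelski2012learning} --- is exactly what the paper carries out explicitly: since every $u'\in V_j$ satisfies $0\le u'\le 1+\Gamma$, one checks directly that $0<F(u')<1/(2N)$ and $0<G(u'-u'')<1/2$ for all relevant arguments, which is precisely the range condition needed for the selection theorem to apply verbatim.
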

\begin{proof}
    (a) Firstly, we demonstrate that note that $\calG'$ is an interdependent game based on the assumption that $w_{j, k} > 0$ for all $j \in [N]$ and $k \in [K]$.
    
    For any strategy profile $\bfpi$ and every proper subset of players $\emptyset \subsetneq J \subsetneq [N]$, consider any player $j \not\in J$ and any player $j' \in \calJ$. If $\pi_j = \pi_{j'}$, then the payoff of player $j'$ will change when player $j$ chooses another strategy instead of $\pi_j$ due to $w_{j, \pi_j} > 0$. Similarly, if $\pi_j \ne \pi_{j'}$, then the payoff of player $j'$ will also change when player $j$ chooses the strategy $\pi_{j'}$ due to $w_{j, \pi_{j'}} > 0$. As a result, $\calG'$ is an independent game by definition.

    (b) We further demonstrate the functions $F(\cdot)$ and $G(\cdot)$ in \cref{eq:F-and-G} meet certain conditions. 
    
    For any $j \in [N]$, define the set $V_j = \left\{\calU'_{j}(\bfpi): \bfpi \in [K]^N\right\}$ where $\calU'_j(\bfpi)$ is provided in \cref{eq:game-formulation-perturbed}. As a result,
    \begin{equation*}
        \forall u' \in V_j,  u' \le \max_{k}(\hat{\mu}_{j,k} + \gamma_{j,k}) \le 1 + \Gamma, \quad \text{and} \quad \forall u', u'' \in V_j, \left|u' - u''\right| \le \max_{u' \in V_j}u' \le 1 + \Gamma.
    \end{equation*}
    Therefore, $\forall u' \in V_j, \quad$,
    \begin{equation*}
         F(u') = -\frac{u'}{(4+3\Gamma)N}+\frac{1}{3N} > -\frac{1 + \Gamma}{(4 + 3\Gamma)N} + \frac{1}{3N} > 0 \text{ and } F(u') = -\frac{u'}{(4+3\Gamma)N}+\frac{1}{3N} < \frac{1}{3N} < \frac{1}{2N}.
    \end{equation*}
    In addition, $\forall u', u'' \in V_j, u' \ge u''$,
    \begin{equation*}
        G(u' - u'') = -\frac{u' - u''}{4+3\Gamma} + \frac{1}{3} \ge -\frac{1 + \Gamma}{4+3\Gamma} + \frac{1}{3} > 0 \text{ and } G(u' - u'') = -\frac{u' - u''}{4+3\Gamma} + \frac{1}{3} < \frac{1}{3} < \frac{1}{2}.
    \end{equation*}
    
    (c) Finally, we could prove this lemma according to Theorem 1 in \citep{pradelski2012learning}. Note that our Markov process $\calM(\epsilon; \calG')$ follow exactly the one proposed in \citep{pradelski2012learning} and the only modifications are to ensure that the payoff calculation for each player follows the game $\calG'$. By part (a) and (b) of our proof, the conditions of Theorem 1 in \citep{pradelski2012learning} hold. As a result, since $\calG'$ have at least one PNE and the most efficient PNE is unique with probability $1$ according to \cref{thrm:game-perturbed}, the only stochastically stable state is $Z^*$ in \cref{eq:z-star}. As a result, when $\epsilon \rightarrow 0$, $p_{Z^*}(\epsilon; \calG') \rightarrow 1$. Now the claim follows.
\end{proof}

\begin{lemma} \label{lemma:}
    Suppose the conditions in \cref{thrm:regret} hold. Let $w_s$ be the set of rounds in the learning phase with length $c_2s^{\eta}$ and $Z^*$ be the unique stochastically stable state of $\calM(\epsilon; \calG')$ given by \cref{eq:z-star}. Let $Z(t)$ be status of the Markov process $\calM(\epsilon; \calG')$ at round $t$. Define the counter $H_s$ and the event $E_{2s}$ as
    \begin{equation*} \label{eq:event-2s}
        H_s = \sum_{t \in w_s} \bbI[Z(t) = Z^*] \quad \text{and} \quad E_{2s} = \left\{\sum_{i=1}^s H_i > \frac{L_s}{2}\right\}, L_s = \sum_{i = 1}^sc_2i^{\eta}.
    \end{equation*}
    For any $0 < \alpha < 1$, let $\epsilon$ be small enough that $p_{Z^*}(\epsilon; \calG') > 1 / (2(1-\alpha))$ according to \cref{lemma:stationary-distribution}. Then there exists a constant $C(\epsilon; \calG')$, such that
    \begin{equation*}
        \bbP[E_{2s}] \ge 1 - \left(\left(1 + \frac{C(\epsilon; \calG')}{\sqrt{p_{\calD}(\epsilon; \calG')}}\right)\exp\left(-\frac{\alpha^2c_2}{144A(\epsilon; \calG')}\left(p_{Z^*}(\epsilon; \calG') - \frac{1}{2(1-\alpha)}\right)\left(\frac{s}{2}\right)^{\eta}\right)\right)^s,
    \end{equation*}
    where $A(\epsilon; \calG')$ is the $1/8$-mixing time of $\calM(\epsilon; \calG')$.
\end{lemma}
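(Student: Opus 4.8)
The plan is to exploit that the Learning PNE subphases are \emph{independent} restarts of one and the same ergodic chain, and to control each restart with a Chernoff--Hoeffding bound for Markov chains \citep{chung2012chernoff}, then aggregate over the $s$ runs. First I would pin down the probabilistic structure of the $H_i$. At the start of every Learning PNE subphase the status is reset to $(\discontent,1,0)$ for all players, i.e.\ to the merged discontent state $\calD$; and the utility fed to \Ruleref{rule:transit} is $u_j' = R_j(t)\hat{\mu}_{j,k}/X_k(t)+\gamma_{j,k}=\calU_j'(\bfpi(t))$, a deterministic function of the joint action alone (the reward noise $X_k(t)$ cancels), so the transition kernel depends only on the current status and fresh internal coins. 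Hence inside subphase $i$ the trajectory $Z(t)$ is exactly a length-$\ell_i$ run of $\calM(\epsilon;\calG')$ started from $\calD$, with $\ell_i=c_2 i^\eta$, and the runs for distinct $i$ use independent randomness. Thus $H_1,\dots,H_s$ are independent, and by \Lemmaref{lemma:stationary-distribution} the stationary frequency of $Z^*$ satisfies $p_{Z^*}(\epsilon;\calG')>1/(2(1-\alpha))$, so the stationary mean of $H_i$ is $p_{Z^*}(\epsilon;\calG')\,\ell_i$.

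Next I would reduce $E_{2s}$ to a lower-tail estimate for $\sum_i H_i$. Because the target frequency $1/(2(1-\alpha))$ lies strictly below the stationary frequency $p_{Z^*}$, the ``gap rate'' $p_{Z^*}(\epsilon;\calG')-\tfrac{1}{2(1-\alpha)}$ is positive, and it suffices to show the aggregate visit count stays near its mean: if $\sum_i H_i$ does not drop below $\tfrac12\sum_i \ell_i = L_s/2$, then $E_{2s}$ holds. I would then bound $\bbP[\sum_i H_i\le L_s/2]$ by the exponential-moment (Chernoff) method, which over \emph{independent} runs factorizes into a product of per-run moment bounds.

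For each run I would invoke the Markov-chain Chernoff bound with the indicator weight $\bbI[\cdot=Z^*]$: started from the point mass at $\calD$ the initial-distribution penalty is $1/\sqrt{p_{\calD}(\epsilon;\calG')}$ and the decay is governed by the $1/8$-mixing time $A(\epsilon;\calG')$, yielding a per-run factor of the shape $\bigl(1+C(\epsilon;\calG')/\sqrt{p_{\calD}(\epsilon;\calG')}\bigr)\exp\!\bigl(-\tfrac{\alpha^2}{72A}(p_{Z^*}-\tfrac{1}{2(1-\alpha)})\,\ell_i\bigr)$. Multiplying these $s$ factors is exactly what promotes the initial-distribution penalty to the $s$-th power $\bigl(1+C/\sqrt{p_{\calD}}\bigr)^s$ in the statement, while the exponents accumulate the total length $\sum_{i=1}^s \ell_i$. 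Lower-bounding this by the top half of the subphases, $\sum_{i=1}^s \ell_i \ge \sum_{i>s/2} c_2 i^\eta \ge \tfrac{s}{2}\,c_2 (s/2)^\eta$, turns the total length into $s\cdot(s/2)^\eta$ and produces the constant $144=2\cdot72$ (the extra $2$ coming from the factor $\tfrac{s}{2}$), which is precisely the exponent displayed.

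The main obstacle is the quantitative core of the third step: invoking the concentration inequality so that (i) the burn-in from the non-stationary start $\calD$ is absorbed into the $A(\epsilon;\calG')$-dependent rate and the multiplicative $1/\sqrt{p_{\calD}}$ factor rather than corrupting the mean $p_{Z^*}\ell_i$, and (ii) the per-run lower-tail rate reduces to the \emph{linear} quantity $p_{Z^*}-\tfrac{1}{2(1-\alpha)}$ scaled by $\alpha^2$, rather than a quadratic relative-deviation term. Reconciling the standard $\delta^2\mu/(72A)$ exponent with this linear frequency gap (choosing the relative slack governed by $\alpha$ and bounding the resulting expression appropriately), and then tracking all constants through the multiplication over the $s$ heterogeneous-length independent runs so that both the $(1+C/\sqrt{p_{\calD}})^s$ prefactor and the $(s/2)^\eta$-to-the-$s$ exponent emerge together, is the delicate bookkeeping that the statement encodes.
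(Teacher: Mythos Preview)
Your plan matches the paper's proof closely: independence of the $H_i$ via the $\calD$-restart at the start of each subphase, the Chung--Lam--Liu Markov-chain Chernoff bound applied per run with initial distribution concentrated at $\calD$, exponential Chernoff aggregation across the $s$ independent runs, and the same lower bound $L_s\ge c_2(s/2)^{\eta+1}$. For the step you flag as delicate, the paper converts the per-run \emph{tail} bound $\bbP[H_i\le(1-\alpha)p_{Z^*}\ell_i]\le \tfrac{C}{\sqrt{p_{\calD}}}\exp(-\alpha^2 p_{Z^*}\ell_i/(72A))$ into an MGF bound by case-splitting $\bbE[e^{-\xi H_i}]$ on that event and then choosing $\xi=\alpha^2/(72(1-\alpha)A)$; the linear gap $p_{Z^*}-\tfrac{1}{2(1-\alpha)}$ emerges only after combining the product of these MGF bounds with the Chernoff multiplier $e^{\xi L_s/2}$, not at the per-run level.
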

\begin{proof}
    The proof follows from Lemma 6 of \citep{bistritz2021game}. Define $f: \calZ \rightarrow [0,1]$ as $f(Z^*) = 1$ and $f(Z) = 0, \forall Z \ne Z^*$. As a result, let $\nu = \bbE_{Z\sim p(\epsilon; \calG')}[f(Z)] = p_{Z^*}(\epsilon; \calG')$. The initial distribution $\phi$ is chosen that $\phi_{\calD} = 1$ and $\phi_{Z} = 0, \forall Z \ne \calD$. According to \cref{lemma:markov-process}, we have
    \begin{equation*} \label{eq:proof-hs}
        \bbP\left[H_s \le (1 - \alpha)p_{Z^*}(\epsilon; \calG')c_2s^{\eta}\right] \le \frac{C(\epsilon; \calG')}{\sqrt{p_{\calD}(\epsilon; \calG')}} \exp\left(-\frac{\alpha^2p_{Z^*}(\epsilon; \calG')c_2s^{\eta}}{72A(\epsilon; \calG')}\right),
    \end{equation*}
    Note that $\{H_s\}$ are independent. As a result, according to the Chernoff bound, we have $\forall \xi > 0$,
    \begin{equation*}
        \bbP\left[\bar{E}_{2s}\right] = \bbP\left[\sum_{i=1}^s H_i \le \frac{L_s}{2}\right] \le \exp\left(\frac{\xi L_s}{2}\right)\bbE\left[\prod_{i=1}^s\exp\left(-\xi H_{i}\right)\right] =  \exp\left(\frac{\xi L_s}{2}\right)\prod_{i=1}^s\bbE\left[\exp\left(-\xi H_{i}\right)\right].
    \end{equation*}
    For every $i$ such that $1 \le i \le s$, by \cref{eq:proof-hs} we have
    \begin{equation*}
        \begin{aligned}
            & \, \bbE[\exp(-\xi H_{i})] \\
            \le & \, \bbP\left[H_i \le (1 - \alpha)p_{Z^*}(\epsilon; \calG')c_2i^{\eta}\right]\bbE[\exp(-\xi H_{i}) \mid H_i \le (1 - \alpha)p_{Z^*}(\epsilon; \calG')c_2i^{\eta}] \\
            & \, + \bbP\left[H_i > (1 - \alpha)p_{Z^*}(\epsilon; \calG')c_2i^{\eta}\right]\bbE[\exp(-\xi H_{i}) \mid H_i > (1 - \alpha)p_{Z^*}(\epsilon; \calG')c_2i^{\eta}] \\
            \le & \, \frac{C(\epsilon; \calG')}{\sqrt{p_{\calD}(\epsilon; \calG')}} \exp\left(-\frac{\alpha^2p_{Z^*}(\epsilon; \calG')c_2i^{\eta}}{72A(\epsilon; \calG')}\right) + \exp\left(-\xi (1 - \alpha)p_{Z^*}(\epsilon; \calG')c_2i^{\eta}\right).
        \end{aligned}
    \end{equation*}
    By choosing $\xi = \alpha^2/\left(72(1-\alpha)A(\epsilon; \calG')\right)$, we have
    \begin{equation*}
        \bbE[\exp(-\xi H_{i})] \le \left(1 + \frac{C(\epsilon; \calG')}{\sqrt{p_{\calD}(\epsilon; \calG')}}\right)\exp\left(-\frac{\alpha^2p_{Z^*}(\epsilon; \calG')c_2i^{\eta}}{72A(\epsilon; \calG')}\right).
    \end{equation*}
    Therefore, we have
    \begin{equation*}
        \begin{aligned}
            \bbP[\bar{E}_{2s}] & \le \exp\left(\frac{\alpha^2 L_s}{144(1-\alpha)A(\epsilon; \calG')}\right) \prod_{i=1}^s\left(1 + \frac{C(\epsilon; \calG')}{\sqrt{p_{\calD}(\epsilon; \calG')}}\right)\exp\left(-\frac{\alpha^2p_{Z^*}(\epsilon; \calG')c_2i^{\eta}}{72A(\epsilon; \calG')}\right) \\
            & \le \exp\left(\frac{\alpha^2 L_s}{144(1-\alpha)A(\epsilon; \calG')}\right) \left(1 + \frac{C(\epsilon; \calG')}{\sqrt{p_{\calD}(\epsilon; \calG')}}\right)^s\exp\left(-\frac{\alpha^2p_{Z^*}(\epsilon; \calG')L_s}{72A(\epsilon; \calG')}\right) \\
            & \le \left(1 + \frac{C(\epsilon; \calG')}{\sqrt{p_{\calD}(\epsilon; \calG')}}\right)^s\exp\left(-\frac{\alpha^2L_s}{72A(\epsilon; \calG')}\left(p_{Z^*}(\epsilon; \calG') - \frac{1}{2(1-\alpha)}\right)\right).
        \end{aligned}
    \end{equation*}
    Since $\forall \eta > 0$,
    \begin{equation*}
        L_s = \sum_{i=1}^s c_2 i^\eta \ge c_2 \int_{0}^{s-1}i^\eta \mathrm{d} i = \frac{c_2(s-1)^{\eta+1}}{\eta+1} \ge c_2 \left(\frac{s}{2}\right)^{\eta + 1}
    \end{equation*}
    and $p_{Z^*}(\epsilon; \calG') \ge 1 / (2(1-\alpha))$, we have that
    \begin{equation*}
        \bbP[\bar{E}_{2s}]  \le \left(\left(1 + \frac{C(\epsilon; \calG')}{\sqrt{p_{\calD}(\epsilon; \calG')}}\right)\exp\left(-\frac{\alpha^2c_2}{144A(\epsilon; \calG')}\left(p_{Z^*}(\epsilon; \calG') - \frac{1}{2(1-\alpha)}\right)\left(\frac{s}{2}\right)^{\eta}\right)\right)^s.
    \end{equation*}
    Now the claim follows.
\end{proof}

Now we can prove \cref{thrm:regret}.

\begin{proof} [Proof of \cref{thrm:regret}]
    (a) For the exploration phase, define the clean event as
    \begin{equation*} \label{eq:event-1}
        E_1 = \left\{\forall j \in [N], k \in [K], \left|\hat{\mu}_{j,k} - \mu_k\right| < \frac{\delta}{4K}\right\}.
    \end{equation*}
    Since $c_1 \ge 8K^2\log T/\delta^2$, according to Hoeffding's inequality, we have that $\forall j \in [N], k \in [K]$,
    \begin{equation*}
        \bbP\left[\left|\hat{\mu}_{j, k} - \mu\right| > \frac{\delta}{4K} \right] \le 2\exp\left(-\frac{c_1\delta^2}{8K^2}\right) \le \frac{2}{T}.
    \end{equation*}
    As a result, we have
    \begin{equation*}
        \bbP(E_1) \ge 1 - \sum_{j=1}^N \sum_{k=1}^K  \bbP\left[\left|\hat{\mu}_{j, k} - \mu\right| > \frac{\delta}{4K} \right] \ge 1 - \frac{2NK}{T}.
    \end{equation*}
    The expected regret cost by the bad event $\bar{E}_1$ is bounded by $(2NK/T) \cdot T \mu_{\max} = 2NK \mu_{\max} = O(1)$. As a result, we focus on the clean event for the rest of the proof.

    (b) Consider the learning and exploitation phase. Under the clean event $E_1$, according to \cref{thrm:game-perturbed}, we have that the most efficient Nash equilibrium $\bfpi^*$ of game $\calG'$ is also a most efficient equilibrium of game $\calG$. The solution is also unique with probability $1$. As a result, according to \cref{lemma:stationary-distribution}, for any $0 < \alpha < 1$ there exists a small enough $\epsilon$ such that the state $Z^*$ in \cref{eq:z-star} is the unique stationary distribution of $\calM(\epsilon; \calG')$ and $p_{Z^*}(\epsilon; \calG') > 1 / (2(1 - \alpha))$.

    Suppose we have switched between the learning and exploitation phase $S$ times and the regret caused by the $s$-th phase is denoted as $\regret_{j,s}^{\learn}(T)$.
    As a result, the expected regret in the learning and exploitation phase is given by
    \begin{small}
    $$
    \begin{aligned}
        & \, \bbE\left[\regret_j^{\learn}(T) \mid E_1\right] \\
        \le & \, \sum_{s=1}^S \bbE\left[\regret_{j,s}^{\learn}(T) \mid E_1\right] = \sum_{s=1}^S \left(\bbE\left[\regret_{j,s}^{\learn}(T) \mid E_1, E_{2s}\right]\bbP(E_{2s}) + \bbE\left[\regret_{j,s}^{\learn}(T) \mid E_1, \bar{E}_{2s}\right](1 - \bbP(E_{2s}))\right) \\
        \le & \, \sum_{s=1}^S \left(\bbE\left[\regret_{j,s}^{\learn}(T) \mid E_1, E_{2s}\right] + \mu_{\max}\left(c_2s^\eta + c_3 2^s\right)\left(B(\epsilon; \calG')\exp\left(-\frac{\alpha^2c_2}{144A(\epsilon; \calG')}\left(p_{Z^*}(\epsilon; \calG') - \frac{1}{2(1-\alpha)}\right)\left(\frac{s}{2}\right)^{\eta}\right)\right)^s\right) \\
        \le & \, \mu_{\max}\left(\sum_{s=1}^S c_2 s^{\eta} + \sum_{s=1}^S\left(c_2s^\eta + c_3 2^s\right)\left(B(\epsilon; \calG')\exp\left(-\frac{\alpha^2c_2}{144A(\epsilon; \calG')}\left(p_{Z^*}(\epsilon; \calG') - \frac{1}{2(1-\alpha)}\right)\left(\frac{s}{2}\right)^{\eta}\right)\right)^s\right)
    \end{aligned}
    $$
    \end{small}
    where $B(\epsilon; \calG') = 1 + C(\epsilon; \calG')/\sqrt{p_{\calD}(\epsilon; \calG')}$. As a result, there exists a large enough $S_0 > 0$ such that
    \begin{equation*}
        \forall s > S_0, \quad B(\epsilon; \calG')\exp\left(-\frac{\alpha^2c_2}{144A(\epsilon; \calG')}\left(p_{Z^*}(\epsilon; \calG') - \frac{1}{2(1-\alpha)}\right)\left(\frac{s}{2}\right)^{\eta}\right) \le \frac{1}{2}.
    \end{equation*}
    As a result, when $T$ is large enough (dependent on $\epsilon$ and $\calG'$), we have
    \begin{small}
    \begin{equation*}
        \begin{aligned}
            & \, \bbE\left[\regret_j^{\learn}(T) \mid E_1\right] \\
            \le & \, \mu_{\max}\left(\sum_{s=1}^S c_2 s^{\eta} + \sum_{s=1}^{S_0}\left(c_2s^\eta + c_3 2^s\right)\left(B(\epsilon; \calG')\exp\left(-\frac{\alpha^2c_2}{144A(\epsilon; \calG')}\left(p_{Z^*}(\epsilon; \calG') - \frac{1}{2(1-\alpha)}\right)\left(\frac{s}{2}\right)^{\eta}\right)\right)^s\right) \\
            & \, + \mu_{\max}\left(\sum_{s=S_0+1}^S\left(c_2s^\eta + c_3 2^s\right) \left(\frac{1}{2}\right)^s\right) \\
            \le & \, 2\mu_{\max}\left(\sum_{s=1}^S c_2 s^{\eta} + c_3S\right) + O(1) \\
            \le & \, 2\mu_{\max} \left(\int_1^{S+1}c_2s^{\eta}\mathrm{d} s + c_3S\right) + O(1) \\
            \le & \, 2\mu_{\max}\left(\frac{c_2(S+1)^{\eta + 1}}{\eta+1}+c_3S\right) + O(1).
        \end{aligned}
    \end{equation*}
    \end{small}
    By the definition of $S$, we have $T \le \sum_{s=1}^S c_32^s = c_3(2^{S+1} - 2)$ and $S = O(\log T)$. As a result, we have
    \begin{equation*}
        \bbE\left[\regret_j^{\learn}(T) \mid E_1\right] \le 2\mu_{\max}\left(\frac{c_2(S+1)^{\eta + 1}}{\eta+1}+c_3S\right) \le O\left(\mu_{\max}\left(c_2\log^{1 + \eta} T + c_3\log T\right)\right).
    \end{equation*}
    
    (c) Combining the above two parts, when $T$ is large, the expected regret is given by
    \begin{equation*}
        \begin{aligned}
            \bbE[\regret_j(T)] & = \bbE\left[\regret_j^{\init}(T)\right] + \bbE\left[\regret_j^{\learn}(T)\right] \\
            & \le \mu_{\max}c_1 + \bbE\left[\regret_j^{\learn}(T) \mid E_1\right]\bbP(E_1) + \bbE\left[\regret_j^{\learn}(T) \mid \bar{E}_1 \right](1 - \bbP(E_1)) \\
            & \le \mu_{\max}c_1 + \bbE\left[\regret_j^{\learn}(T) \mid E_1\right] + 2NK\mu_{\max} \\
            & \le \mu_{\max}c_1 + O\left(\mu_{\max}\left(c_2\log^{1 + \eta} T + c_3\log T\right)\right) \\
            & = O\left(\mu_{\max}\left(c_1 + c_2\log^{1 + \eta} T + c_3\log T\right)\right).
        \end{aligned}
    \end{equation*}
    Similar to the above proof, we have that
    \begin{equation*}
        \begin{aligned}
            \bbE[\noneq(T)] & = \bbE\left[\noneq^{\init}(T)\right] + \bbE\left[\noneq^{\learn}(T)\right] \\
            & \le c_1 + \bbE\left[\noneq^{\learn}(T) \mid E_1\right]\bbP(E_1) + \bbE\left[\noneq^{\learn}(T) \mid \bar{E}_1 \right](1 - \bbP(E_1)) \\
            & \le c_1 + \bbE\left[\noneq^{\learn}(T) \mid E_1\right] + 2NK \\
            & \le c_1 + O\left(c_2\log^{1 + \eta} T + c_3\log T\right) \\
            & = O\left(c_1 + c_2\log^{1 + \eta} T + c_3\log T\right).
        \end{aligned}
    \end{equation*}
    Here $\noneq^{\init}(T)$ and $\noneq^{\learn}(T)$ represents the number of rounds where players do not follow a most efficient PNE in the exploration phase and learning and exploitation phase, respectively. The bounds on $\noneq^{\init}(T)$ and $\noneq^{\learn}(T)$ share the same techniques with $\regret_j^{\init}(T)$ and $\regret_j^{\learn}(T)$. Now the claim follows.
\end{proof}

\subsection{Important Lemmas}
\begin{lemma} [Theorem 3.1 in \citep{chung2012chernoff}] \label{lemma:markov-process}
    Let $M$ be an ergodic Markov chain with state space $\calZ$ and stationary distribution $p$. Let $A$ be its $1/8$-mixing time. Let $(Z_1, \dots, Z_t)$ denote a $t$-step random walk on $M$ starting from an initial distribution $\phi$ on $\calZ$, \textit{i.e.}, $Z_1 \sim \phi$. For every $i \in [t]$, let $f_i: \calZ \rightarrow [0, 1]$ be a weight function at step $i$ such that the expected weight $\bbE_{Z \sim p}[f_i(Z)] = \nu$ for all $i$. Define the total weight of the walk $(Z_1, \dots, Z_t)$ by $Y = \sum_{i=1}^t f_i(Z_i)$. There exists some constant $c$ (which is independent of $\nu$ and $\alpha$) such that $\forall 0 < \alpha < 1$,
    \begin{equation*}
        \bbP[Y \le (1 - \alpha)\nu t] \le c \|\phi\|_p\exp\left(-\alpha^2\nu t/(72A)\right).
    \end{equation*}
    Here $A = \min \{t: \max_{x}\|x M^t - p\|_{\text{TV}} \le 1/8\}$ where $x$ is an arbitrary initial distribution on $\calZ$ and $\|\cdot\|_{\text{TV}}$ denotes the total variation distance. In addition, $\|\phi\|_p = \sqrt{\sum_{Z \in \calZ}\phi^2(Z)/p(Z)}$.
\end{lemma}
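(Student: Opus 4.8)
The plan is to establish this as a Cram\'er--Chernoff concentration bound for a Markov trajectory, following the matrix-analytic route of \citet{chung2012chernoff}. Since $Y = \sum_{i=1}^t f_i(Z_i)$ with $f_i \in [0,1]$, the lower tail is controlled through the exponential moment: for any $r > 0$, Markov's inequality gives $\bbP[Y \le (1-\alpha)\nu t] = \bbP[e^{-rY} \ge e^{-r(1-\alpha)\nu t}] \le e^{r(1-\alpha)\nu t}\,\bbE[e^{-rY}]$. Everything then reduces to bounding the moment generating function $\bbE[e^{-rY}]$ and finally optimizing over $r$.

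First I would express the MGF as a matrix product that exposes the Markov structure. Writing, for each step, the diagonal weight matrix $E_i = \mathrm{diag}\big(e^{-r f_i(Z)}\big)_{Z \in \calZ}$, the path probabilities factor through the transition matrix $M$ so that $\bbE[e^{-rY}] = \mathbf{1}^\top E_t M^\top E_{t-1} M^\top \cdots M^\top E_1 \phi$, i.e.\ the initial vector $\phi$ is alternately reweighted by the $E_i$ and pushed forward by $M$. To exploit mixing I would pass to the inner product $\langle u, v\rangle_p = \sum_{Z\in\calZ} u(Z) v(Z)/p(Z)$ weighted by the stationary distribution, in which the stationary vector is the distinguished direction; since $M$ need not be reversible I would control it relative to $p$ (equivalently, via its self-adjoint reversiblization), while the weight matrices remain diagonal and hence act by a bounded multiplicative factor.

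The heart of the argument is a per-block bound, in the $p$-weighted norm, on the operator norm of a run of factors $E_i M$. Here I would decompose each vector into its component along the stationary direction and its orthogonal complement. On the orthogonal complement the mixing-time hypothesis supplies contraction: the $1/8$-mixing time $A$ boosts by submultiplicativity of the TV contraction coefficient to $\|\phi M^{kA} - p\|_{\mathrm{TV}} \le 4^{-k}$, which converts into an $\ell^2(1/p)$-contraction by a constant factor over $\Theta(A)$ steps. On the stationary direction, the diagonal weights contribute the average factor $\bbE_{Z\sim p}[e^{-r f_i(Z)}] \le \exp(-r\nu + O(r^2\nu))$, using $\bbE_{Z \sim p}[f_i] = \nu$ and $f_i \in [0,1]$. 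Combining the two across a block of length $\Theta(A)$ shows each block multiplies the norm by at most $\exp(-r\nu A)$ up to quadratic-in-$r$ corrections, while the deviation of $\phi$ from $p$ is accounted for exactly once, producing the prefactor $\|\phi\|_p$.

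Multiplying over the $\Theta(t/A)$ blocks yields $\bbE[e^{-rY}] \le c\,\|\phi\|_p \exp(-r\nu t + O(r^2 \nu t))$ to leading order, and substituting into the Chernoff bound gives an exponent of the form $-\alpha r\nu t + O(r^2\nu t)$. Optimizing --- choosing $r$ of order $\alpha/A$ so the linear term dominates the correction, while $0 < \alpha < 1$ keeps $r$ within the range where the quadratic expansion is valid --- produces the stated exponent $-\alpha^2 \nu t/(72A)$, with the universal constant $c$ (independent of $\nu$ and $\alpha$) absorbing the boosting, symmetrization, and transient constants. I expect the main obstacle to be precisely this operator-norm step: because $M$ need not be reversible, one must convert the TV mixing time into an $\ell^2$ contraction without invoking a spectral gap, and then track all constants through the block analysis carefully enough to land exactly on the factor $72A$ rather than merely an $O(A)$ bound.
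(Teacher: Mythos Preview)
The paper does not prove this lemma at all: it is quoted verbatim as Theorem~3.1 of \citet{chung2012chernoff} and used as a black box in the proof of the preceding lemma. So there is no ``paper's own proof'' to compare against; your sketch is essentially an outline of the argument in the cited reference rather than anything the present paper supplies. If your goal is to reproduce the paper's treatment, you should simply cite the result and move on.
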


\end{document}